\not \isundefined{\disputationsdatum} 
\not \isundefined{\disputationslokal}}   
  \or \boolean{detectedSTOC}    \or \boolean{detectedFOCS}
  \or \boolean{detectedSIAM}    \or \boolean{detectedIEEE}
  \or \boolean{detectedACMconf} \or \boolean{detectedACM}
  \or \boolean{detectedPoster}}
\or \boolean{detectedSIAM} 
  \or \boolean{detectedLIPIcs}}
\DeclareMathAlphabet{\mathsfsl}{OT1}{cmss}{m}{sl}
\DeclareRobustCommand{\BibTeX}{%
  {\normalfont B\kern-.05em{\scshape i\kern-.025em b}\kern-.08em \TeX}%
}
\newcommand{\bigoh}[1]{\mathrm{O} ( #1 )}
  \newcommand{\Nplus}     {\mathbb{N}^{+}}
\newcommand{\ceiling}[1]{\lceil #1 \rceil}
\newcommand{\MAXOFEXPR}[2][]{\max_{#1} \left\{ #2 \right\}}
\newcommand{\MINOFEXPR}[2][]{\min_{#1} \left\{ #2 \right\}}
\newcommand{\Maxofexpr}[2][]{\max_{#1} \bigl\{ #2 \bigr\}}
\newcommand{\Minofexpr}[2][]{\min_{#1} \bigl\{ #2 \bigr\}}
\newcommand{\MAXOFSET}[3][:]%
     {\ifthenelse{\equal{#1}{;}}%
     {\MAXOFEXPR{ #2 \,;\, #3 }}
     {\ifthenelse{\equal{#1}{:}}%
     {\MAXOFEXPR{ #2 \,:\, #3 }}
     {\max \twincommandJN{\left\{}{#2}{\left#1}{\right}{\,#3}{\right\}}}}}
\newcommand{\MINOFSET}[3][:]%
     {\ifthenelse{\equal{#1}{;}}%
     {\MINOFEXPR{ #2 \,;\, #3 }}
     {\ifthenelse{\equal{#1}{:}}%
     {\MINOFEXPR{ #2 \,:\, #3 }}
     {\min \twincommandJN{\left\{}{#2}{\left#1}{\right}{\,#3}{\right\}}}}}
\newcommand{\Maxofset}[3][:]%
     {\ifthenelse{\equal{#1}{;}}%
     {\Maxofexpr{ #2 \,;\, #3 }}
     {\ifthenelse{\equal{#1}{:}}%
     {\Maxofexpr{ #2 \,:\, #3 }}
     {\max \twincommandJN{\bigl\{}{#2}{\bigl#1}{\bigr}{\,#3}{\bigr\}}}}}
\newcommand{\Minofset}[3][:]%
     {\ifthenelse{\equal{#1}{;}}%
     {\Minofexpr{ #2 \,;\, #3 }}
     {\ifthenelse{\equal{#1}{:}}%
     {\Minofexpr{ #2 \,:\, #3 }}
     {\min \twincommandJN{\bigl\{}{#2}{\bigl#1}{\bigr}{\,#3}{\bigr\}}}}}
\DeclareMathOperator{\Expop}{E}
\newcommand{\twincommandJN}[6]%
    {#1#2#3\vphantom{#2#5}\mspace{-2.05mu}#4.#5#6}
\newcommand{\CondExp}[2]%
    {\Expop\twincommandJN{\bigl[}{#1}{\bigl|}{\bigr}{\,#2}{\bigr]}}
\newcommand{\CONDEXP}[2]%
     {\Expop\twincommandJN{\left[}{#1}{\left|}{\right}{\,#2}{\right]}}
\newcommand{\Condprob}[3][]%
    {\Pr_{#1}\twincommandJN{\bigl[}{#2}{\bigl|}{\bigr}{\,#3}{\bigr]}}
\newcommand{\CONDPROB}[3][]%
    {\Pr_{#1}\twincommandJN{\left[}{#2}{\left|}{\right}{\,#3}{\right]}}
\newcommand{\set}[1]{\{ #1 \}}
\newcommand{\Set}[1]{\bigl\{ #1 \bigr\}}
\newcommand{\Setdescr}[3][|]%
     {\ifthenelse{\equal{#1}{;}}%
     {\Set{ #2 \,;\, #3 }}
     {\ifthenelse{\equal{#1}{:}}%
     {\Set{ #2 \,:\, #3 }}
     {\twincommandJN{\bigl\{}{#2\,}{\bigl#1}{\bigr}{\,#3}{\bigr\}}}}}
\newcommand{\SETDESCR}[3][|]%
     {\twincommandJN{\left\{}{#2\,}{\left#1}{\right}{\,#3}{\right\}}}
\newcommand{\Setdescrbrackets}[3][|]%
     {\twincommandJN{\bigl[}{#2}{\bigl#1}{\bigr}{\,#3}{\bigr]}}
\newcommand{\SETDESCRBRACKETS}[3][|]%
     {\twincommandJN{\left[}{#2}{\left#1}{\right}{\,#3}{\right]}}
\newcommand{\setsize}[1]{\lvert#1\rvert}
\newcommand{\union}{\cup}
\newcommand{\olnot}[1]{\overline{#1}}
\newcommand{\synteq}{\doteq}
\newcommand{\nvar}{n}
\newcommand{\nclause}{m}
\newcommand{\clwidth}{k}
\newcommand{\randkcnfnclwrepl}[3][\clwidth]%
        {\ensuremath{\mathcal{F}^{#2, #3}_{#1}}}
\newcommand{\randkcnfnclwreplstd}%
        {\randkcnfnclwrepl{\clwidth}{\nvar}{\nclause}}
\newcommand{\complclassformat}[1]%
        {\textrm{\upshape{\textsf{#1}}}\xspace}
\newcommand{\cocomplclass}[1]%
        {\textrm{\upshape{\textsf{co#1}}}\xspace}
\newcommand{\DTIMEadviceclass}[2]%
    {\ensuremath{\complclassformat{DTIME}\bigl(#1\bigr)/{#2}}}
\newcommand{\PCPalph}[5]%
    {\ensuremath{\complclassformat{PCP}_{{#1},{#2}}[{#3}, {#4}, {#5}]}}
\newcommand{\PCP}[4]%
    {\ensuremath{\complclassformat{PCP}_{{#1},{#2}}[{#3}, {#4}]}}
\newcommand{\eqperiod}{\enspace .}
\newcommand{\eqcomma}{\enspace ,}
\renewcommand{\eqperiod}{\, .}
\renewcommand{\eqcomma}{\, ,}
\newcommand{\egabbrev}{e.g.,\ }
    \newcommand{\Eg}{For instance\xspace}}
\newcommand{\ie}{i.e.,\ }
\renewcommand{\st}{\errmessage{Please do not use st}}}
\newcommand{\st}{such that\xspace}}
\newcommand{\wolog}{without loss of generality\xspace}
\newcommand{\Wolog}{Without loss of generality\xspace}
\newcommand{\refsec}[1]{Section~\ref{#1}}
\newcommand{\reftwosecs}[2]{Sections~\ref{#1} and~\ref{#2}}
\newcommand{\refapp}[1]{Appendix~\ref{#1}}
\newcommand{\refeq}[1]{\eqref{#1}}}
\renewcommand{\refeq}[1]{\eqref{#1}}}
\newcommand{\derives}{\vdash}
\newcommand{\formf}{\ensuremath{F}}
\newcommand{\formg}{\ensuremath{G}}
\newcommand{\varx}{\ensuremath{x}}
\providecommand{\vary}{\ensuremath{y}}}
\newcommand{\vary}{\ensuremath{y}}}
\newcommand{\varz}{\ensuremath{z}}
\newcommand{\constrc}{\ensuremath{C}}
\newcommand{\SETSOFVARSORLIT}[2]%
        {\mathit{#1}\left({#2}\right)}
\newcommand{\setsofvarsorlit}[2]%
        {\mathit{#1}({#2})}
\newcommand{\Setsofvarsorlit}[2]%
        {\mathit{#1}\bigl({#2}\bigr)}
\newcommand{\derivabbrev}[2]{\bigl( #1 \vdash #2 \bigr)}
\newcommand{\derivabbrevsmall}[2]{( #1 \vdash #2 )}
\newcommand{\derivabbrevcompact}[2]{\bigl( #1 \vdash #2 \bigr)}
\newcommand{\refutabbrevsmall}[1]{\derivabbrevsmall{#1}{\!\bot}}
\newcommand{\refutabbrevcompact}[1]{\derivabbrevcompact{#1}{\!\bot}}
\newcommand{\genericrefsmall}[3]%
    {{\mathit{#1}}_{#2}\refutabbrevsmall{#3}}
\newcommand{\genericrefcompact}[3]%
    {{\mathit{#1}}_{#2}\refutabbrevcompact{#3}}
\newcommand{\genericderiv}[4]%
    {{\mathit{#1}}_{#2}\derivabbrev{#3}{#4}}
\newcommand{\genericderivsmall}[4]%
    {{\mathit{#1}}_{#2}\derivabbrevsmall{#3}{#4}}
\newcommand{\genericderivcompact}[4]%
    {{\mathit{#1}}_{#2}\derivabbrevcompact{#3}{#4}}
\newcommand{\generictaut}[3]%
    {{\mathit{#1}}_{#2}\derivabbrev{}{#3}}
\newcommand{\generictautcompact}[3]%
    {{\mathit{#1}}_{#2}\derivabbrevcompact{}{#3}}
\newcommand{\generictautsmall}[3]%
    {{\mathit{#1}}_{#2}\derivabbrevsmall{}{#3}}
\newcommand{\formulaformat}[1]{\mathit{#1}}
\newcommand{\extendedversion}[1]{\widetilde{#1}}
\newcommand{\epopnot}[1]%
    {\extendedversion{\formulaformat{POP}}_{#1}}
\newcommand{\elopnot}[1]%
    {\extendedversion{\formulaformat{LOP}}_{#1}}
\newcommand{\ephpnot}[2]%
    {\vphantom{\extendedversion{\formulaformat{PHP}}}
      {\smash{\extendedversion{\formulaformat{PHP}}}
        \vphantom{\formulaformat{PHP}}}^{#1}_{#2}}
\newcommand{\efphpnot}[2]%
    {\vphantom{\extendedversion{\formulaformat{FPHP}}}
      {\smash{\extendedversion{\formulaformat{FPHP}}}
        \vphantom{\formulaformat{FPHP}}}^{#1}_{#2}}
\newcommand{\ontophpnot}[2]%
    {\formulaformat{Onto}\text{-}\formulaformat{PHP}^{#1}_{#2}}
\newcommand{\ontofphpnot}[2]%
    {\formulaformat{Onto}\text{-}\formulaformat{FPHP}^{#1}_{#2}}
\newcommand{\graphontophpnot}[1][G]%
    {\text{$\formulaformat{Onto}$-$\formulaformat{PHP}$}({#1})}
\newcommand{\perfectmatchingnot}[1][G]%
    {\formulaformat{PM}({#1})}
\newtheorem{theorem}{Theorem}
\newtheorem{lemma}{Lemma}
\newtheorem{remark}{Remark}
\theoremstyle{definition}
\newtheorem{definition}{Definition}
\providecommand{\zeroone}{$0$--$1$\xspace}
\DeclareMathOperator{\var}{var}
\DeclareMathOperator{\supp}{supp}  %
\DeclareMathOperator{\img}{image}
\newcommand{\uh}{\!\!\upharpoonright}
\newcommand\subst[2]{#1\uh_{#2}}
\newcommand{\core}{\mathcal{C}}
\newcommand{\der}{\mathcal{D}}
\newcommand{\ord}{\mathcal{O}_{\preceq}}
\newcommand{\spec}{\mathcal{S}_{\preceq}}
\providecommand{\specification}{\mathcal{S}}
\providecommand{\trivialorder}{\mathcal{O}_\top}
\providecommand{\varset}{\va}
\providecommand{\assmnta}{\alpha}
\providecommand{\assmntaprime}{\alpha^\prime}
\providecommand{\assmntb}{\beta}
\providecommand{\assmntc}{\gamma}
\providecommand{\assmntp}{\varrho}
\providecommand{\assmntone}{\assmnta}
\providecommand{\assmnttwo}{\assmntaprime}
\providecommand{\witness}{\omega}
\newcommand{\conf}{(\core, \der, \ord, \spec, \vec{z}, \vec{a}, v)}
\newcommand{\confdec}{(\core, \der, \ord, \spec, \vec{z}, \vec{a})}
\providecommand{\lexorder}{\preceq_{\mathrm{lex}}}
\providecommand{\toolformat}[1]{\textsc{#1}\xspace}
\providecommand{\satsuma}{\toolformat{satsuma}}
\providecommand{\veripb}{\toolformat{VeriPB}}
\providecommand{\cakepb}{\toolformat{CakePB}}
\providecommand{\cnfgen}{\toolformat{CNFgen}}
\providecommand{\drat}{\toolformat{DRAT}}
\providecommand{\m}[1]{\ensuremath{#1}}
\providecommand{\varx}{\m{x}}
\providecommand{\vary}{\m{y}}
\providecommand{\varz}{\m{z}}
\providecommand{\auxa}{\m{a}}
\providecommand{\auxd}{\m{d}}
\providecommand{\auxb}{\m{b}}
\providecommand{\auxe}{\m{e}}
\providecommand{\auxc}{\m{c}}
\providecommand{\auxf}{\m{f}}
\providecommand{\coeffapp}{\m{p}}
\providecommand{\degapp}{\m{P}}
\providecommand{\auxs}{\m{s}}
\providecommand{\auxt}{\m{t}}
\providecommand{\lexorder}{\preceq_{\mathrm{lex}}}
\providecommand{\nvar}{\m{n}}
\providecommand{\idxi}{\m{i}}
\providecommand{\suppsize}{\m{k}}
\providecommand{\subidx}{\m{l}}
\providecommand{\symmetry}{\sigma}
\providecommand{\suppsymmetry}{\supp(\symmetry)}
\providecommand{\suppsymmetryidxs}{S}
\providecommand{\formg}{\m{G}} %
\newcommand{\va}{\vec{\auxa}}
\newcommand{\vb}{\vec{\auxb}}
\newcommand{\vc}{\vec{\auxc}}
\newcommand{\vd}{\vec{\auxd}}
\newcommand{\ve}{\vec{\auxe}}
\newcommand{\vf}{\vec{\auxf}}
\newcommand{\vecs}{\vec{\auxs}}
\newcommand{\vu}{\vec{u}}
\newcommand{\vv}{\vec{v}}
\newcommand{\vx}{\vec{\varx}}
\newcommand{\vy}{\vec{\vary}}
\newcommand{\vz}{\vec{\varz}}
\newcommand{\listt}{list\xspace}    %
\newcommand{\lists}{lists\xspace}  %
\newcommand{\lexord}{\mathcal{O}_{\lexorder}}
\newcommand{\original}{original\xspace}    %
\newcommand{\extended}{extended\xspace}    %
\newcommand{\old}{old\xspace}              %
\newcommand{\new}{new\xspace}              %
\newcommand{\slack}[2]{\mathit{slack(#1; #2)}}
\definecolor{comment}{HTML}{7F7F7F}
\definecolor{rule}{HTML}{95133B}
\definecolor{environment}{HTML}{246272}
\definecolor{constant}{HTML}{934E11}
\definecolor{operator}{HTML}{95133B}
\definecolor{separator}{HTML}{000000}
\definecolor{identifier}{HTML}{000000}
\definecolor{integers}{HTML}{5B2B7C}
\lstdefinelanguage{veripb}{
  backgroundcolor=\color{white},
  commentstyle=\color{comment},
  keywordstyle=[1]{\color{integers}},
  keywordstyle=[2]{\color{rule}},
  keywordstyle=[3]{\color{environment}},
  keywordstyle=[4]{\color{constant}},
  identifierstyle=\itshape\color{identifier},
  basicstyle=\color{integers}\ttfamily\scriptsize,
  alsoletter={_\{\}\^},
  keywordsprefix=@,%
  breaklines=true,
  keepspaces=true,
  columns=fullflexible,
  captionpos=b,
  showstringspaces=false,
  showtabs=false,
  tabsize=4,
  frame=single,
  rulecolor=\color{black},
  morecomment=[l]{\%},
  morekeywords=[1]{},
  morekeywords=[2]{del,delc,pbc,red,dom,obju,output,conclusion,pol,rup,id,range,deld,load_order,core,new,diff,f,e,eobj,eord_def,eord_loaded,i,ia,strengthening_to_core,sol,soli,solx,start_time,end_time,is_deleted,fail,left,right,aux,fresh_right,fresh_aux_1,fresh_aux_2,setlvl,wiplvl,a},
  morekeywords=[3]{subproof,qed,proofgoal,proof,version,end,def_order,proof,vars,def,spec,transitivity,reflexivity,opb,scope},
  morekeywords=[4]{NONE,DERIVABLE,EQUISATISFIABLE,EQUIOPTIMAL,EQUIENUMERABLE,FILE,CONSTRAINTS,IMPLICIT,PERMUTATION,SAT,UNSAT,BOUNDS,on,off,INF,leq,geq},
  escapeinside={|}{|},
  literate=%
    *{pseudo-Boolean}{{{\color{environment}pseudo-Boolean}}}{14}
    {del-spec}{{{\color{rule}del spec}}}{8}
    {->}{{{\color{separator}->}}}{2}
    {,}{{{\color{separator},}}}{1}
    {:}{{{\color{separator}:}}}{1}
    {;}{{{\color{separator};}}}{1}
    {[}{{{\color{separator}[}}}{1}
    {]}{{{\color{separator}]}}}{1}
    {...}{{{\color{black}...}}}{3}
    {~}{{\raisebox{0.5ex}{\color{operator}\texttildelow}}}{1}
    {>=}{{{\color{operator}>=}}}{2}
    {<=}{{{\color{operator}<=}}}{2}
    {==>}{{{\color{operator}==>}}}{3}
    {<==}{{{\color{operator}<==}}}{3}
}
\title{Faster Certified Symmetry Breaking Using Orders With Auxiliary Variables
\ifthenelse{\boolean{extended-version}}{
  (Extended~Version Including Appendix)
}{}
}
\author{
    Markus Anders\textsuperscript{\rm 1},
    Bart Bogaerts\textsuperscript{\rm 2,3},
    Benjamin Bogø\textsuperscript{\rm 4,5},
    Arthur Gontier\textsuperscript{\rm 6},
    Wietze Koops\textsuperscript{\rm 5,4},
    Ciaran~McCreesh\textsuperscript{\rm 6},
    Magnus O. Myreen\textsuperscript{\rm 7,8},
    Jakob Nordström\textsuperscript{\rm 4,5},
    Andy Oertel\textsuperscript{\rm 5,4}, 
    Adrian~Rebola-Pardo\textsuperscript{\rm 9,10},
    Yong Kiam Tan\textsuperscript{\rm 11,12}
}
\begin{document}

\maketitle

\begin{abstract}
  Symmetry breaking is a crucial technique in modern combinatorial
  solving, but it is difficult to be sure it is implemented correctly.
  The most successful approach to deal with bugs is to make solvers
  \emph{certifying}, so that they output not just a solution,  but
  also a mathematical proof of correctness in a standard format, which
  can then be checked by a formally verified checker.
  This requires justifying symmetry reasoning within the proof, but
  developing efficient methods for this has remained a long-standing
  open challenge.
  A fully general approach was recently proposed by
  Bogaerts et al.~(2023),
  but it relies on encoding lexicographic orders with big integers,
  which quickly becomes infeasible for 
  large symmetries.
  In this work, 
  we develop a method for instead encoding orders with auxiliary variables.
  We show that this leads to orders-of-magnitude speed-ups in both
  theory and practice by running experiments on proof logging and checking
  for SAT symmetry breaking using the state-of-the-art
  \satsuma symmetry breaker and the \veripb proof checking toolchain.
\end{abstract}

\section{Introduction}
\label{sec:intro}

An important challenge in combinatorial solving is to avoid repeatedly
exploring different parts of the search space that are equivalent
under symmetries. 
In a wide range of combinatorial solving paradigms,
\emph{symmetry breaking} is deployed
as a default technique, including
mixed integer programming~\cite{AW13MIP,BBBCetal24SCIP} and
constraint programming~\cite{Walsh06GeneralSymmetry}. %
The importance of symmetry breaking is
supported by both theoretical considerations~\cite{Urquhart99Symmetry}
and experimental results~%
\cite{PR19Symmetry}.
A detailed discussion of symmetry breaking is given, \egabbrev  by
\citet{Sakallah21SymmetrySatisfiabilityPlusCrossref}. 

Symmetry breaking has not been adopted as mainstream in
Boolean satisfiability (SAT) solving, however, 
despite a body of work
\cite{AMS03ShatterEfficientSymmetry,%
  DBBD16ImprovedStatic,%
  AndersBR24}
showing the potential for speed-ups also in this setting.
One reason for this could perhaps be the higher cost, relatively
speaking, of symmetry breaking compared to low-level SAT reasoning,
but state-of-the art symmetry detection is efficient enough to use
by default without degrading performance~%
\cite{AndersBR24}.
A more important concern is that the SAT community places a strong
emphasis on provable correctness. For over a decade, SAT solvers
taking part in the annual SAT competitions have had to generate
machine-verifiable proofs for their results.
Such proofs are especially important for sophisticated techniques such as
symmetry breaking, which is notoriously difficult to implement
correctly.
However, except for some special
cases~\cite{HHW15SymmetryBreaking},
it has not been known how to generate proofs for symmetry breaking in
the \drat proof format~%
\cite{WHH14DRAT}
used in the competitions, or whether this is
even possible.

The way symmetry breaking is typically done in SAT solving is by
introducing
\emph{lex-leader} constraints, which are 
encoded as the %
clauses
\begin{align}
  \label{eq:intro-lex-clauses}
  &
  \auxs_1 \lor \olnot{\varx}_1
  && \auxs_1 \lor \vary_1
  && \vary_1 \lor \olnot{\varx}_{1}
  \\
&  \auxs_{\idxi+1} \lor \olnot{\auxs}_\idxi \lor
  \olnot{\varx}_{\idxi+1}
 &&
    \auxs_{\idxi+1} \lor \olnot{\auxs}_\idxi \lor \vary_{\idxi+1}
  && \olnot{\auxs}_{\idxi} \lor \vary_{\idxi+1} \lor
     \olnot{\varx}_{\idxi+1}
     \nonumber
\end{align}
that can be thought of as encoding a circuit enforcing
$(\varx_1, \dots, \varx_n) \lexorder (\vary_1, \dots,
\vary_n)$---here,
$\auxs_i$ are fresh auxiliary variables encoding that the 
$\varx$- and
$\vary$-variables are equal up to position~$i$; using these, we enforce that
$x_i$~is false and
$y_i$~true the first time this does not hold.
Such clauses are clearly not implied by the original formula,
and the problem is how to prove that they can be added
without changing the satisfiability of the input.
Although the RAT rule~\cite{JHB12Inprocessing}
in \drat can handle a single symmetry~\cite{KT24Dominance},
once the first symmetry is broken it is not known how or even if the
other symmetries found by the symmetry breaker could be proven correct
using \drat.

\citet{BGMN23Dominance} finally resolved this
long-standing open
problem by introducing a
stronger proof
format, which operates with
\emph{pseudo-Boolean} (i.e., \zeroone linear)
inequalities rather than clauses, and reasons in terms of 
\emph{dominance}~\cite{CS15Dominance}
to support fully general symmetry breaking without any limitations on
the number of symmetries that can be handled.
One benefit of this richer format is that a single inequality
\begin{equation}
  \label{eq:intro-lex-pb}
  2^{n-1} x_1 + \dots
  + 2 x_{n-1}
  +  x_n
  \leq
  2^{n-1} y_1 + \dots
  + 2 y_{n-1}
  + y_n
  \eqcomma
\end{equation}
can be used in the proof to encode lexicographic order, and from this
constraint it is straightforward to derive the
clauses~\eqref{eq:intro-lex-clauses}
used by the solver.
However, at least $n^2$~bits are needed to represent the coefficients
in~\eqref{eq:intro-lex-pb}, while the representation
of~\eqref{eq:intro-lex-clauses}
scales linearly with~$n$.
This means that proof generation incurs a linear overhead
compared to solving.
Also, the algorithm by~\citet{AndersBR24} can break a symmetry in
quasi-linear time measured in the
number of variables~$k$ remapped by the symmetry,
which introduces yet another asymptotic slowdown
in proof generation if $k \ll n$.
Furthermore, the exponentially growing integer coefficients in~%
\refeq{eq:intro-lex-pb}
require expensive arbitrary-precision arithmetic, which
slows down proof checking.
All of these problems combine to make the proof logging approach
proposed by \citet{BGMN23Dominance}
infeasible for large-scale problems requiring non-trivial symmetry breaking.

In this work, we present an asymptotically faster method for
generating and checking proofs of correctness for symmetry breaking.
The main new technical idea is to use
\emph{auxiliary variables}
to encode the lexicographic order used for the dominance reasoning, 
similar to the clausal encoding in~\eqref{eq:intro-lex-clauses}.
Unfortunately,
this breaks the fundamental invariant of
\citet{BGMN23Dominance}
that all low-level proofs should be implicational.
When one needs to prove that a symmetry-breaking constraint respects
lexicographical order, the encoding of this order will contain
auxiliary variables that are not mentioned in the premises,
and so this property cannot possibly be implied.
We therefore need to make a substantial redesign of the proof system
of~\citet{BGMN23Dominance} to work with auxiliary variables. Very
briefly, our key technical twist is to split the encoding of the order
into two parts, putting one part into the premises, so that the
property of implicational low-level proofs can be maintained.
Our redesigned proof system supports fully general symmetry breaking
in a similar fashion to \citet{BGMN23Dominance},
but is significantly more efficient. 
Specifically, we prove that our approach leads to asymptotic gains for
proof logging and checking for symmetry breaking
by at least a linear factor in the size~$n$ of
the lexicographic order used.

We have implemented support for our new proof system in the
proof checker
\veripb~\cite{BGMN23Dominance,GN21CertifyingParity,Gocht22Thesis}
with its formally verified backend
\cakepb~\cite{GMMNOT24Subgraph}. 
Together, these yield an efficient, end-to-end verified proof
checking toolchain for symmetry breaking proofs.
We have also enhanced the 
state-of-the-art SAT symmetry breaker
\satsuma~\cite{AndersBR24} %
to generate proofs of correctness in our new format as well as that of
\citet{BGMN23Dominance}
for a comparative evaluation of performance.
Our experimental findings match our theoretical results and show that  
only a constant overhead in running time is required for proof
logging with our new method.
Proof checking performance is also vastly better
compared to \citet{BGMN23Dominance},
although here there might be room for further improvements.

Our paper is organized as follows.
After reviewing preliminaries in \refsec{sec:prelim},  
we present our new proof logging system in
\refsec{sec:extending-proof-system}.  
\reftwosecs{sec:satsumalogging}{sec:implementationdetails} 
discuss how proof logging and checking
can be improved asymptotically using our new method, 
which is confirmed by our
experiments
in
\refsec{sec:experiments}.
We conclude with a brief discussion of future work 
in \refsec{sec:conclusion}.
\ifthenelse{\boolean{extended-version}}{In this full-length version, we also provide five appendices. 
Appendix~\ref{app:cuttingplanes} gives an overview of the cutting planes proof system and the syntax used by \veripb for implicational reasoning. 
Appendix~\ref{app:proof_system_full} provides
further details on the extended proof system introduced in 
Sections~\ref{ssc:specifications} up to~\ref{ssc:redrule}, 
including all proofs. 
Appendix~\ref{app:satsumalogging} provides all details on how the proof logging is implemented in \satsuma.
Appendix~\ref{app:example} presents a concrete toy example of a \veripb proof generated by \satsuma.
Appendix~\ref{app:crafted} contains an overview of the crafted benchmarks that we used in our experimental evaluation.
}{
Further details, proofs, and a worked-out example 
\mbox{can be found in the full-length version.}
}

\section{Preliminaries}
\label{sec:prelim}

We start with a brief review of
pseudo-Boolean reasoning. For more details, we refer the reader to, \egabbrev \citet{BN21ProofCplxSATplusCrossref} or \citet{BGMN23Dominance}.
A \emph{Boolean variable} takes values $0$ or $1$.
A \emph{literal} over a Boolean variable $x$ is $x$ itself or its negation $\olnot{x} = 1 - x$.
A \emph{pseudo-Boolean (PB) constraint} $C$ is an integer linear inequality over literals 
\begin{align}
    C \synteq \sum\nolimits_i a_i \ell_i \geq A \eqcomma \label{eq:pb-constraint}
\end{align}
where we use $\synteq$ to denote syntactic equivalence. \Wolog the coefficients $a_i$ and the right-hand side $A$ are non-negative and the literals $\ell_i$ are over distinct variables.
The \emph{trivially false constraint} is $\bot \synteq 0 \geq 1$.
The \emph{negation} $\neg C$ of the pseudo-Boolean constraint $C$ in~\eqref{eq:pb-constraint} is the pseudo-Boolean constraint $\neg C \synteq \sum_i a_i \olnot{\ell}_i \geq \sum_i a_i - A + 1$.
A \emph{pseudo-Boolean formula} $\formf$ is a conjunction $\formf \synteq \bigwedge_i C_i$ or equivalently a set $\formf \synteq \bigcup_i \set{C_i}$ of pseudo-Boolean constraints $C_i$, whichever view is more convenient.
A \emph{(disjunctive) clause} $\bigvee_i \ell_i$ is equivalent to the pseudo-Boolean constraint \mbox{$\sum_i \ell_i \geq 1$}. 
Hence, formulas in \emph{conjunctive normal form (CNF)} are special cases of pseudo-Boolean formulas.

An \emph{assignment} is a function mapping from Boolean variables to $\set{0, 1}$.
\emph{Substitutions} 
(or \emph{witnesses})
generalize assignments by allowing variables to be mapped to literals, too.
A substitution $\omega$ is extended to literals by $\omega(\overline{x}) = \overline{\omega(x)}$, and to preserve truth values, \ie $\omega(0) = 0$ and $\omega(1) = 1$.
For a substitution $\omega$, the support $\supp(\omega)$ is the set of variables $x$ where $\omega(x) \neq x$.
A substitution $\assmnta$ can be composed with another substitution $\omega$ by applying $\omega$ first and then $\assmnta$, \ie 
$(\assmnta \circ \omega)(x) = \assmnta(\omega(x))$.
Applying a substitution $\omega$ to the pseudo-Boolean constraint $C$ in~\eqref{eq:pb-constraint} yields the pseudo-Boolean constraint $C\uh_\omega \synteq \sum_i a_i \omega(\ell_i) \geq A$. This is extended to formulas by defining 
$\formf\uh_\omega \synteq \bigwedge_i C_i\uh_\omega$.
The pseudo-Boolean constraint $C$ is satisfied by an assignment $\omega$ if 
\mbox{$\sum_{i:\omega(\ell_i) = 1} a_i \geq A$}. 
A pseudo-Boolean formula $\formf$ is satisfied by $\omega$ if $\omega$ satisfies every constraint in $\formf$.
If there is no assignment that satisfies $\formf$, then $\formf$ is \emph{unsatisfiable}.

We use the notation $\formf(\vec{x})$ to stress that the formula is defined over the list of variables $\vec{x} = x_1, \dots , x_n$, where we syntactically highlight a partitioning of the list of variables by writing $\formf(\vec{y}, \vec{z})$ or $\formf(\vec{a}, \vec{b}, \vec{c})$ %
meaning $\vec{x} = \vec{y}, \vec{z}$ or $\vec{x} = \vec{a}, \vec{b}, \vec{c}$, respectively (denoting concatenation of the lists of variables).
To apply a substitution $\omega$ element-wise to a list of literals we write 
$\vec{\ell}\uh_\omega = \omega(\ell_1), \dots , \omega(\ell_n)$.
For a formula $\formf(\vec{x})$ and a list of literals and truth values $\vec{y} = y_1, \dots, y_n$, the notation $\formf(\vec{y})$ is syntactic sugar for $\formf\uh_\omega$ with the implicitly defined substitution $\omega(x_i) = y_i$ for $i = 1, \dots , n$. Finally, we write $\var(\formf)$ for the set of variables in a formula $\formf$.

\subsection{The \veripb Proof System}
\label{ssc:original}

The proof system introduced by~\citet{BGMN23Dominance}
(which we will refer to as the \emph{\original system})
can prove optimal values
for \emph{optimization problems} $(\formf, f)$, where $\formf$ is a pseudo-Boolean formula,
and $f$ is an integer linear 
\emph{objective} 
function 
over literals to be minimized subject to satisfying $\formf$.
The \emph{satisfiability (SAT) problem} is a special case by having 
$f = 0$ and $\formf$ being a CNF formula. 
Proving the unsatisfiability of~$\formf$ then
corresponds to proving that $\infty$ is a lower bound for~$(\formf, f)$.
For clarity of exposition, we focus on decision problems, \ie problems with 
objective function
$f = 0$, but the results can
easily 
be extended to optimization problems as in~\citet{BGMN23Dominance}.

A proof in this proof system consists of a sequence of rule applications, each deriving a new constraint. For implicational reasoning, 
the \emph{cutting planes} proof system~\cite{CCT87ComplexityCP}
is used, 
which provides sound reasoning rules to derive pseudo-Boolean constraints implied by a pseudo-Boolean formula $\formf$, \egabbrev taking positive integer linear combinations or dividing by an integer and rounding up.
We write $\formf \derives \constrc$ if there is a cutting planes proof deriving $\constrc$ from $\formf$.
A set of constraints $\formf^\prime$ is derivable from another set $\formf$, denoted by $\formf \derives \formf^\prime$, if $\formf \derives \constrc$ for all $\constrc \in \formf^\prime$. 

The proof system also 
has
rules for deriving constraints which are not implied.
To do this, 
the \original system keeps track of two pseudo-Boolean formulas (\ie sets of constraints),
called \emph{core} $\core$ and \emph{derived} $\der$,
which \citet{JHB12Inprocessing} call irredundant and redundant clauses, respectively. 
In addition, we need a pseudo-Boolean formula $\ord(\vu, \vv)$,
encoding a preorder,
\ie a reflexive and transitive relation. %
This preorder is used to compare assignments $\assmnta, \assmntb$ over the literals in 
a
list~$\vz$
and we write  $\assmnta \preceq \assmntb$ if 
$\ord(\vz\uh_{\assmnta}, \vz\uh_{\assmntb})$
evaluates to true.
For a preorder $\preceq$, we define the strict order $\prec$ such that $\assmnta \prec \assmntb$ holds if $\assmnta \preceq \assmntb$ and $\assmntb \not\preceq \assmnta$.

We call the tuple $(\core, \der, \ord, \vz)$ a \emph{configuration}.
Formally, proof rules incrementally modify the configuration. 
To handle optimization problems, the configuration of \citet{BGMN23Dominance} also contains the current upper bound on $f$, which we can omit for decision problems. %

The proof system maintains two invariants: (1) $\core$ is satisfiable if $\formf$ is satisfiable, and (2) for any assignments $\assmnta$ satisfying $\core$
there exists an assignment $\assmntaprime$ satisfying $\core$, $\der$, and 
$\assmntaprime \preceq \assmnta$.
Starting with the configuration $(\formf, \emptyset, \emptyset, \emptyset)$,
any valid derivation of a configuration $(\core, \der, \ord, \vz)$ with $\bot \in \core \union \der$ proves that $\formf$ is unsatisfiable.

\paragraph{Proof Rules.}
We list the satisfiability version of the proof rules from~\citet{BGMN23Dominance} our work modifies;
all other rules in the \original proof system remain unchanged.
\begin{itemize}

\item The \emph{redundance-based strengthening rule} (or \emph{redundance rule} for short)
allows transitioning from $(\core, \der, \ord, \vz)$ 
to $(\core, \der \cup \{\constrc\}, \ord, \vz)$
if a substitution $\witness$ and cutting planes proofs are provided showing that
\begin{equation}
    \core \cup \der \cup \{\neg \constrc\} \derives
    (\core \union \der \union \{ \constrc \})\uh_{\witness}
    \cup \ord(\vz\uh_{\witness}, \vz) \eqperiod
    \label{eq:prelims_red}
\end{equation}

\item The \emph{dominance-based strengthening rule} (or \emph{dominance rule} for short)
allows transitioning from $(\core, \der, \ord, \vz)$ 
to $(\core, \der \cup \{\constrc\}, \ord, \vz)$
if a substitution $\witness$ and cutting planes proofs are provided showing that
\begin{align}
    \core \cup \der \cup \{\neg \constrc\} & \derives \core\uh_{\witness} \cup \ord(\vz\uh_{\witness}, \vz) \label{eq:prelims_dom1} \\
    \core \cup \der \cup \{\neg \constrc\}
    {} \cup \ord(\vz, \vz\uh_{\omega}) & \derives \bot \eqperiod \label{eq:prelims_dom2}
\end{align}
\end{itemize}

We now briefly explain why the redundance rule preserves the second invariant. 
Let $\assmnta$ be an assignment satisfying $\core$. 
Since the invariant holds for $(\core, \der, \ord, \vec{z})$, 
there exists an assignment $\assmntaprime$ satisfying $\core \cup \der$ 
and %
$\assmntaprime \preceq \assmnta$. 
If $\assmntaprime$ happens to 
satisfy $\constrc$, we are done. 
Otherwise, the derivation~\eqref{eq:prelims_red} guarantees that 
$\assmntaprime \circ \omega$ satisfies $\core \cup \der \cup \{\constrc\}$ 
and $\ord(\vec{z}\uh_{\assmntaprime \circ \omega}, \vec{z}\uh_{\assmntaprime})$, 
\ie $\assmntaprime \circ \omega \preceq \assmntaprime$.
By transitivity we get $\assmntaprime \circ \omega \preceq \assmnta$. 
For the dominance rule,
$\assmntaprime$ might have to be composed with $\omega$ repeatedly, but the process is guaranteed to eventually satisfy $C$, since
the composed assignment strictly decreases with respect to the order, which is encoded by~\eqref{eq:prelims_dom2}.

\paragraph{Preorders.} Before using a preorder $\ord$, it needs to be proven within the proof system that $\ord$ is indeed reflexive and transitive. For this, the \original system requires cutting planes proofs for
$\emptyset \derives \ord(\vec{u}, \vec{u})$ and $\ord(\vec{u}, \vec{v}) \union \ord(\vec{v}, \vec{w}) \derives \ord(\vec{u}, \vec{w})$,
where $\vec{w}$ is of the same size as $\vec{u}$ and $\vec{v}$.

\subsection{Symmetry Breaking}
\label{ssc:symbreak}

We 
briefly review symmetry breaking as it is used in practice, which we want to certify.
Typically, symmetry breaking considers
permutations $\sigma$ between literals with $\sigma(\overline{\ell}) = \overline{\sigma(\ell)}$
for all literals $\ell$, and finite support $\supp(\sigma)$.
Practical symmetry breaking algorithms only detect
\emph{syntactic} symmetries of a formula $\formf$, \ie permutations $\sigma$
with $\formf\uh_{\sigma} \synteq \formf$.

To encode an ordering of assignments, 
typically 
the \emph{lex-leader} constraint is used.
Let $S$ be a set of detected symmetries in a 
formula $F$, and $z_1,\dots,z_n$ be variables with
$\suppsymmetry \subseteq \{z_1,\dots,z_n\}$ for all 
$\sigma \in S$.
Then we 
define the 
\emph{lexicographic order} 
$\lexorder$ over assignments $\assmnta,\assmntb$
and the 
\emph{lex-leader constraint} 
$B_\symmetry$ for a symmetry $\symmetry \in S$ as
\begin{align}
\label{eqn:lexleq}
\alpha \lexorder \beta & \text{ iff } \sum\nolimits_{i=1}^n 2^{n-i} \alpha(z_i) \leq \sum\nolimits_{i=1}^n 2^{n-i} \beta(z_i) \\
\label{eqn:symbreak}
B_\sigma & {} \synteq \sum\nolimits_{i=1}^n 2^{n-i} (\symmetry(x_i) - x_i) \geq 0 \eqperiod
\end{align}
Intuitively, \eqref{eqn:symbreak} constrains assignments to be smaller w.r.t.\ the preorder in \eqref{eqn:lexleq}
than their symmetric counterpart.
Symmetry breaking introduces the constraints $B_\symmetry$ for $\symmetry \in S$
such that if $\formf$ is satisfiable, then $\formf \union \bigcup_{\symmetry \in S} B_\symmetry$ is satisfiable.

When doing proof logging for symmetry breaking, 
the dominance rule in the \original system can derive
the lex-leader constraints $B_\symmetry$ as follows. 
Suppose that the symmetry breaker detect symmetries 
$\symmetry_1,\dots,\symmetry_m$ of $\core$.
To log these symmetries, we use the order defined by
\begin{equation}
\label{eqn:lexord}
\lexord(\vx, \vy) = \left\{ \sum\nolimits_{i=1}^n 2^{n - i} (y_i - x_i) \geq 0 \right\} \eqperiod
\end{equation}
To add a constraint $B_{\symmetry_i}$ to $\der$, we use the dominance rule with witness $\symmetry_i$. The application of this rule is justified by $\core \derives \core\uh_{\sigma_i}$ (trivial, because $\sigma_i$ is a symmetry of $\core$), and the fact that $\neg B_{\sigma_i}$ implies both $\lexord(\vec{z}\uh_{\sigma_i}, \vec{z})$ and $\neg \lexord(\vec{z}, \vec{z}\uh_{\sigma_i})$.

When using symmetry breaking for the SAT problem, the symmetry breaker instead encodes $\vx \lexorder \symmetry(\vx)$ as the clauses
\begin{subequations}
\begin{align}
\auxs_1 +   \olnot{\varx_1}  \geq 1\eqcomma\quad
\auxs_{i+1} + \olnot{\auxs_i} + \olnot{\varx_{i+1}} \geq 1\eqcomma \label{eq:symbreak_s1} \\
\auxs_1 + \symmetry(\varx_1) \geq 1\eqcomma\quad
\auxs_{i+1} + \olnot{\auxs_i} + \symmetry(\varx_{i+1}) \geq 1\eqcomma 
\label{eq:symbreak_s2} \\
\symmetry(\varx_1) + \olnot{\varx_1} \geq 1 \eqcomma\quad
\olnot{\auxs_i} + \symmetry(\varx_{i+1}) + \olnot{\varx_{i+1}}\geq 1\eqcomma  \label{eq:symbreak_t}
\end{align}
\end{subequations}
where $\auxs_{i}$ encodes that $(\varx_1, \ldots, \varx_i) = (\symmetry(\varx_1), \ldots, \symmetry(\varx_i)) $. These clauses can be derived from~\eqref{eqn:symbreak} using redundance.

\section{Strengthening with Auxiliary Variables}
\label{sec:extending-proof-system}

While the method presented in \refsec{ssc:symbreak} enables proof logging for symmetry breaking,
encoding the coefficients in~\eqref{eqn:symbreak} and~\eqref{eqn:lexord} grows quadratically in the size of $\vz$, which often includes all variables in the formula, making proof logging for large symmetries infeasible in practice.
For proof checking, the situation is even more dire, as the proof checker has to reason internally with arbitrary-precision integer arithmetic to handle the coefficients in~\eqref{eqn:symbreak} and~\eqref{eqn:lexord}.

One way to avoid these big integers would be to represent the order as a set of clauses as in Equation \eqref{eq:symbreak_s1}--\eqref{eq:symbreak_t}, using a list of extension variables $\vecs$. However, this leads to challenges when defining the actual preorder $\preceq$. 
For an order without extension variables, we define $\assmnta \preceq \assmntb$ to hold if $\ord(\vz \uh_\assmnta, \vz\uh_\assmntb)$ is true.  However, for a formula $\ord(\vx, \vy, \vecs)$ containing extension variables~$\vecs$ this does not work, since
the variables~$\vecs$
in $\ord(\vz \uh_\assmnta, \vz\uh_\assmntb, \vecs)$ are unassigned and in general $\ord(\vz \uh_\assmnta, \vz\uh_\assmntb, \vecs)$ will not hold for all assignments to $\vecs$. 

Instead, what we are trying to capture is that $\assmnta \preceq \assmntb$ holds precisely when $\ord(\vz \uh_\assmnta, \vz\uh_\assmntb, \vecs)$ holds, \emph{provided that} the extension variables $\vecs$ are set in the right way. Equivalently, we want to say that $\assmnta \preceq \assmntb$  holds precisely when there exists an assignment $\assmntp$ to $\vecs$ such that $\ord(\vz \uh_\assmnta, \vz\uh_\assmntb, \vecs\uh_\assmntp)$ holds.

However, just adding extension variables to the proof obligations $\ord(\vz\uh_\omega,\vz)$ in the redundance and dominance rules would not work. 
What we would need to show is that \emph{some} assignment to $\vecs$ exists such that $\ord(\vz\uh_\witness,\vz,\vecs)$ holds, 
but the proof system cannot express existential quantification.  
While the proof rules could specify the value of all extension variables $\vecs$, this would be very cumbersome. However, in all 
applications
we have in mind, the preorder with extension variables already contains 
the information how to set the extension variables $\vecs$,
since the extension variables are \emph{defined} (functionally) in terms of the other variables. 

To make this precise, let $\spec(\vx, \vy, \vecs)$ be a 
\emph{definition} of $\vecs$ in terms of the other variables (\ie each assignment to the $\vx$ and $\vy$ can uniquely be extended to an assignment to $\vecs$ that satisfies $\spec(\vx, \vy, \vecs)$). 
We now redefine $\preceq$ such that $\assmnta \preceq \assmntb$ holds precisely when there exists an assignment $\assmntp$ to $\vecs$ such that $\spec(\vz \uh_\assmnta, \vz\uh_\assmntb, \vecs\uh_\assmntp)  \wedge \ord(\vz \uh_\assmnta, \vz\uh_\assmntb, \vecs\uh_\assmntp)$ holds. 

 In this case, whenever we need to show that $\assmnta\preceq\assmntb$, because of the definitional nature of $\spec$ we can \emph{assume} that $ \spec(\vz \uh_\assmnta, \vz\uh_\assmntb, \vecs\uh_\assmntp)$ holds and 
 derive $ \ord(\vz \uh_\assmnta, \vz\uh_\assmntb, \vecs\uh_\assmntp)$ from this, thereby completely eliminating the need for 
providing an assignment to $\vecs$
in every rule application. 
 In our actual proof system, 
 we %
 relax the condition that $\spec$ %
 is 
 definitional slightly, but 
 intuitively, 
 $\spec$ is best thought of as a circuit defining the value of $\vecs$ in terms of the other variables. 
 
 An important restriction for this to be sound is that the extension variables $\vecs$ in the preorder, which we call \emph{auxiliary variables}, do not appear outside the preorder.

We now formalize this. As mentioned in Section~\ref{ssc:original}, we focus on decision
problems%
\ifthenelse{\boolean{extended-version}}{%
, and refer to Appendix~\ref{app:proof_system_full} for the extension to optimization 
problems and proofs of all results. 
}{%
.
}

\subsection{Specifications}
\label{ssc:specifications}

Let $\varset$ be a \listt of variables. 
A pseudo-Boolean formula $\specification(\vx, \varset)$ is a \emph{specification over the variables $\varset$}, if it is derivable 
from the empty formula $\emptyset$ by the redundance rule, where each application only witnesses 
over variables in~$\varset$.

\begin{definition}
A formula $\specification(\vx, \varset) = \set{\constrc_1, \constrc_2, \dots{}, \constrc_n}$ is a \emph{specification over the variables $\varset$}, if there is a list 
\begin{equation*}
(\constrc_1, \witness_1), (\constrc_2, \witness_2), \dots{}, (\constrc_n, \witness_n)
\end{equation*}
which satisfies the following: 
\begin{enumerate}
    \item The constraint $\constrc_1$ can be obtained from the empty formula $\emptyset$ using the redundance rule with witness $\witness_1$.
    \item For each $i \in \{2, \dots{}, n\}$ we have that 
$C_i$ can be added by the redundance rule to
$\bigcup_{j=1}^{i-1} \{\constrc_j\} $ with the witness $\witness_i$. 
    \item For every witness $\witness_i$, 
    $\supp(\witness_i) \subseteq \varset$ holds.  
\end{enumerate}
\end{definition}

A crucial property of specifications is that we can recover an assignment of the auxiliary variables from 
the assignment of the non-auxiliary variables.
We state this property below.
\begin{lemma} Let $\specification(\vx, \va)$ 
    be a specification over $\va$.
    Let $\assmntone$ be any assignment of the variables $\vx$. 
    Then, $\assmntone$ can be extended to an assignment $\assmnttwo$, such that
    \begin{enumerate} 
        \item $\assmnttwo$ satisfies $\specification$, and 
        \item $\assmntone(x) = \assmnttwo(x)$ holds for every $x \in \vx$. 
    \end{enumerate}
    \label{lem:spec_behaves_nicely}
\end{lemma}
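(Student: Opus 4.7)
My approach is an induction on the length~$n$ of the derivation list $(\constrc_1, \witness_1), \dots, (\constrc_n, \witness_n)$ certifying that $\specification$ is a specification over~$\va$. The key observation driving the argument is that because each witness~$\witness_i$ has $\supp(\witness_i) \subseteq \va$, composing any assignment with~$\witness_i$ leaves the values of the $\vx$-variables untouched, which is precisely the invariant the statement asks me to preserve.

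For the base case $n = 0$, the formula $\specification$ is empty, so any extension of $\assmntone$ to the $\va$-variables (say, assigning all of them to~$0$) yields an $\assmnttwo$ meeting both conditions vacuously.

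For the inductive step, I apply the induction hypothesis to the truncated list $(\constrc_1, \witness_1), \dots, (\constrc_{n-1}, \witness_{n-1})$---itself a valid specification certificate by construction---to obtain an extension $\assmntb$ of $\assmntone$ satisfying $\set{\constrc_1, \dots, \constrc_{n-1}}$. If $\assmntb$ already satisfies $\constrc_n$, I set $\assmnttwo := \assmntb$ and finish. Otherwise $\assmntb$ satisfies $\set{\constrc_1, \dots, \constrc_{n-1}} \cup \set{\neg \constrc_n}$, so by clause~2 of the specification definition together with the soundness of cutting planes, $\assmntb$ must also satisfy $\bigl(\set{\constrc_1, \dots, \constrc_{n-1}} \cup \set{\constrc_n}\bigr)\uh_{\witness_n}$. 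Translating via the standard identity that $\assmntc$ satisfies $\constrd\uh_{\witness}$ if and only if $\assmntc \circ \witness$ satisfies $\constrd$, I conclude that $\assmnttwo := \assmntb \circ \witness_n$ satisfies the full~$\specification$. Because $\supp(\witness_n) \subseteq \va$, we have $\witness_n(x) = x$ for every $x \in \vx$, so $\assmnttwo(x) = \assmntb(\witness_n(x)) = \assmntb(x) = \assmntone(x)$, preserving agreement with $\assmntone$ on $\vx$.

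The only slightly subtle point is the semantic unpacking of the redundance rule's cutting-planes obligation: the syntactic derivation $\formf \cup \set{\neg \constrc} \derives (\formf \cup \set{\constrc})\uh_{\witness}$ has to be read as a semantic implication on assignments and then converted into the equivalent statement that $\assmntb \circ \witness_n$ satisfies $\formf \cup \set{\constrc_n}$. Once this bookkeeping is in place, the support restriction on each $\witness_i$ does all of the real work, and no properties of the individual cutting-planes sub-derivations need to be interrogated further.
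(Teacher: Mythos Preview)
Your proof is correct and essentially identical to the paper's: both proceed by induction on the length of the specification's derivation list, extend arbitrarily on $\va$ for the empty base case, and in the inductive step either keep the current assignment if it already satisfies the next constraint or compose with the witness~$\witness_n$ otherwise, using $\supp(\witness_n)\subseteq\va$ to preserve the $\vx$-values. The only cosmetic difference is that the paper phrases the induction as building $\assmnta_0,\assmnta_1,\dots,\assmnta_n$ while you invoke the induction hypothesis on the truncated list, but the underlying argument is the same.
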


To explain why Lemma~\ref{lem:spec_behaves_nicely} holds, recall from Section~\ref{ssc:original} that the redundance rule %
satisfies the following: if a constraint $\constrc$ is added by redundance with witness $\witness$, then given an assignment $\assmntone$ satisfying all other constraints, either  $\assmntone$ or  $\assmntone \circ \witness$ also satisfies $\constrc$. Hence, defining $\assmnttwo$ by composing $\assmntone$ with the witnesses~$\witness_i$ corresponding to the constraints 
that do not already hold
ensures that $\assmnttwo$ satisfies $\specification$, and the fact that each witness~$\witness_i$ is the identity on $\vx$ (since $\supp(\witness_i) \subseteq \varset$) ensures that $\assmntone(x) = \assmnttwo(x)$ for $x \in \vx$.

\subsection{Orders with Auxiliary Variables} \label{subsec:orderswithauxvars}
Next, we explain how two pseudo-Boolean formulas $\ord(\vu, \vv, \va)$ and $\spec(\vu, \vv, \va)$, together with 
the two disjoint \lists of variables $\vz$ and $\va$, define a preorder.

\begin{definition} \label{def:induced-preorder}
Let $\vu$ and $\vv$ be disjoint \lists of variables of size~$\nvar$ and let 
$\va$ be a \listt of auxiliary variables. 
Let $\ord(\vu, \vv, \va)$ and $\spec(\vu, \vv, \va)$ be two pseudo-Boolean formulas such that $\spec$ is a specification over $\va$. 

Then we define the relation $\preceq$  over the domain of total assignments to a 
\listt of variables $\vz$ of size~$\nvar$ as follows:
For assignments $\assmnta, \assmntb$ we let $\assmnta \preceq \assmntb$ hold, 
if and only if there exists an assignment $\assmntp$ to the variables $\va$, such that 
\begin{equation*}
\spec(\vz\uh_{\assmnta},\vz\uh_{\assmntb}, \va\uh_{\assmntp}) \land
\ord(\vz\uh_{\assmnta},\vz\uh_{\assmntb}, \va\uh_{\assmntp})
\end{equation*}
evaluates to true.
\end{definition}

To ensure that $\ord$ and $\spec$ 
actually define a preorder,
we require 
cutting planes proofs
that
show reflexivity, \ie 
$\emptyset \derives \assmnta\preceq\assmnta$, 
and transitivity, \ie 
$\assmnta\preceq\assmntb \land 
\assmntb\preceq\assmntc\derives 
\assmnta\preceq \assmntc$. 
To write these proof obligations using the cutting planes proof system, 
which cannot handle an existentially quantified conclusion,
we can use the specification as a premise. 
The specification premise essentially tells us which auxiliary variables 
the existential quantifier should pick. 
In particular, for reflexivity, the proof obligation is 
\begin{equation}
\spec(\vx, \vx, \va) \vdash{} \ord(\vx, \vx, \va) \eqperiod
  \label{eq:preorder1}
\end{equation}
For transitivity, the proof obligation is
\begin{equation}
\begin{split}
\spec(\vx, \vy, \va) &\cup \ord(\vx, \vy, \va) \cup \spec(\vy, \vz, \vb) \\ &\cup 
  \ord(\vy, \vz, \vb) \cup \spec(\vx, \vz, \vc)   \vdash{} \ord(\vx, \vz, \vc) \eqperiod
  \label{eq:preorder2}
\end{split}
\end{equation}

Intuitively, \eqref{eq:preorder2} says that if the circuits defining the auxiliary variables are correctly evaluated, which is encoded by the premises $\spec(\vx, \vy, \va) \cup \spec(\vy, \vz, \vb) \cup \spec(\vx, \vz, \vc)$, then transitivity should hold, \ie 
$\ord(\vx, \vy, \va) \cup 
  \ord(\vy, \vz, \vb)  \vdash{} \ord(\vx, \vz, \vc)$. 
However, if the auxiliary variables are not correctly set, then no claims are made.

These proof obligations 
ensure that $\preceq$ is a preorder:

\begin{lemma} If $\ord$ and $\spec$ satisfy Equations~\eqref{eq:preorder1} and \eqref{eq:preorder2}, then 
    $\preceq$ as defined by $\ord$ and $\spec$ is a preorder.
\end{lemma}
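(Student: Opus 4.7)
The plan is to show reflexivity and transitivity of $\preceq$ separately, in both cases combining Lemma~\ref{lem:spec_behaves_nicely} (to produce the witnessing assignment to the auxiliary variables) with soundness of the cutting planes proofs for \eqref{eq:preorder1} and \eqref{eq:preorder2}.

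For reflexivity, I would fix an arbitrary assignment $\assmnta$ to $\vz$. Viewing $(\vu, \vv)$ as the non-auxiliary variables of the specification $\spec$, Lemma~\ref{lem:spec_behaves_nicely} applied to the assignment that sends $\vu$ and $\vv$ to $\vz\uh_{\assmnta}$ yields an extension~$\assmntp$ to $\va$ such that $\spec(\vz\uh_{\assmnta}, \vz\uh_{\assmnta}, \va\uh_{\assmntp})$ holds. Soundness of the cutting planes derivation \eqref{eq:preorder1}, specialized by the substitution $\vx \mapsto \vz\uh_{\assmnta}$, $\va \mapsto \va\uh_{\assmntp}$, then forces $\ord(\vz\uh_{\assmnta}, \vz\uh_{\assmnta}, \va\uh_{\assmntp})$ to hold as well, so $\assmnta \preceq \assmnta$ by Definition~\ref{def:induced-preorder}.

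For transitivity, suppose $\assmnta \preceq \assmntb$ and $\assmntb \preceq \assmntc$. Unfolding Definition~\ref{def:induced-preorder}, there are assignments $\assmntp_1$ and $\assmntp_2$ to $\va$ witnessing these two relations, i.e.\ satisfying the conjunctions $\spec \wedge \ord$ on the pairs $(\vz\uh_{\assmnta}, \vz\uh_{\assmntb})$ and $(\vz\uh_{\assmntb}, \vz\uh_{\assmntc})$ respectively. To produce the required witness for $\assmnta \preceq \assmntc$, I again apply Lemma~\ref{lem:spec_behaves_nicely} to obtain an assignment $\assmntp_3$ with $\spec(\vz\uh_{\assmnta}, \vz\uh_{\assmntc}, \va\uh_{\assmntp_3})$ satisfied. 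The three assignments $\assmntp_1, \assmntp_2, \assmntp_3$ can then be used to instantiate the three copies $\va, \vb, \vc$ of auxiliary variables in \eqref{eq:preorder2}, alongside $\vx \mapsto \vz\uh_{\assmnta}, \vy \mapsto \vz\uh_{\assmntb}, \vz \mapsto \vz\uh_{\assmntc}$; all five premises then hold by construction, and soundness of cutting planes gives $\ord(\vz\uh_{\assmnta}, \vz\uh_{\assmntc}, \va\uh_{\assmntp_3})$, so $\assmnta \preceq \assmntc$.

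The only subtle point I foresee is bookkeeping rather than mathematical depth: one must carefully check that the auxiliary variables $\vb, \vc$ appearing in the transitivity obligation can indeed be instantiated by independently chosen assignments $\assmntp_2, \assmntp_3$ (this is why the statement uses three \emph{disjoint} copies of $\va$), and that the specification is genuinely applicable in the direction we need, which is the content of Lemma~\ref{lem:spec_behaves_nicely}. Once these are pinned down, both reflexivity and transitivity reduce to a single application of soundness of the cutting planes proof attached to the order definition.
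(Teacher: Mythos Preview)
Your proposal is correct and follows essentially the same approach as the paper: for each of reflexivity and transitivity, invoke Lemma~\ref{lem:spec_behaves_nicely} to obtain the assignment(s) to the auxiliary variables satisfying the relevant specification instance(s), and then use soundness of the cutting planes derivations \eqref{eq:preorder1} and \eqref{eq:preorder2} to conclude that the corresponding $\ord$-instance holds. Your remark about the three disjoint copies $\va, \vb, \vc$ is exactly the bookkeeping point that makes the transitivity instantiation go through.
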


\subsection{Validity} \label{sec:validity}
We extend the configurations of the proof system to $\confdec$.
In particular, we extend the notion of \emph{weak-$(\formf, f)$-validity} 
from \citet{BGMN23Dominance} to our new configurations, focusing on decision problems:

\begin{definition} \label{def:validity}
  A configuration $\confdec{}$ is \emph{weakly $\formf$-valid} if the following conditions hold:
  \begin{enumerate}
    \item If $\formf$ is satisfiable, then $\core$ is satisfiable.
    \item For every assignment~$\assmnta$ satisfying $\core$, there exists an assignment $\assmntaprime$ satisfying $\core \cup \der$ and $\assmntaprime \preceq \assmnta$.
  \end{enumerate}
  \label{def:valid}
\end{definition}
In the following, we furthermore assume that
for any configuration $\confdec{}$, the following hold:
\begin{enumerate}
\item $\ord$ and $\spec$ refer to formulas for which Equations~\eqref{eq:preorder1} and~\eqref{eq:preorder2} have been successfully proven.
\item $\spec$ is a specification over $\va$.
\item The variables $\va$ only occur in $\ord$ and $\spec$, 
      and these variables are disjoint %
      from $\vz$. 
\end{enumerate}

Observe that due to these invariants, satisfying assignments for $\core \cup \der$ 
do not need to assign the 
variables $\va$. 
In the following, we assume that such assignments are indeed defined only over the domain 
$\var(\core \cup \der)$.

\subsection{Dominance-Based Strengthening Rule}

As in the \original system, the dominance rule allows adding a constraint $\constrc$ to the derived set  using witness $\witness$ if from the premises $\core \cup \der \cup \{\neg \constrc\}$ we can derive $\core \uh_\witness$ and show that $\assmnta \circ \witness \prec \assmnta$ holds for all assignments $\assmnta$ satisfying $\core \cup \der \cup \{\neg \constrc\}$. To show  $\assmnta  \circ \witness \prec \assmnta$, we separately show that  $\assmnta  \circ \witness \preceq \assmnta$ and $\assmnta \not\preceq \assmnta \circ \witness$. To show that  $\assmnta  \circ \witness \preceq \assmnta$, we have to show that $\ord(\vz\uh_\witness, \vz, \va)$, assuming that the circuit defining the auxiliary variables $\va$ has been evaluated correctly, which is encoded by the specification $\spec(\vz\uh_\witness, \vz, \va)$.  This leads to the proof obligation
\begin{equation}
\core \cup \der \cup \{\neg \constrc\} \cup \spec(\vz\uh_\witness, \vz, \va) 
        \vdash \ord(\vz\uh_\witness, \vz, \va)\eqperiod
\end{equation}
To show that $\assmnta \not\preceq \assmnta \circ \witness$, we have to show that $\neg\ord(\vz, \vz\uh_\witness, \va)$ assuming $ \spec(\vz, \vz\uh_\witness,\va)$. However, since  $\neg\ord(\vz, \vz\uh_\witness, \va)$ is not necessarily a pseudo-Boolean formula (due to the negation), we instead show that we can derive contradiction from $\ord(\vz, \vz\uh_\witness, \va)$, leading to the proof obligation:
\begin{equation}
\core \cup \der \cup \{\neg \constrc\} \cup \spec(\vz, \vz\uh_\witness,\va) \cup \ord(\vz, \vz\uh_\witness, \va) \vdash \bot \eqperiod
\end{equation}
The following lemma shows that these proof obligations indeed imply $\assmnta \circ\witness \preceq \assmnta$ and $\assmnta \not\preceq \assmnta\circ\witness$, respectively: 
 
\begin{lemma} 
Let $\formg$ be a formula and $\witness$ a witness with 
$\supp(\witness) \subseteq \var(\formg)$. 
Furthermore, let $\va \cap \var(\formg) = \emptyset$ and $\spec$ be a specification over $\va$.
              Also, let $\ord$ and $\spec$ define a pre-order $\preceq$. Then the following hold:
\begin{enumerate}
    \item If $\formg \cup \spec(\vz\uh_\witness, \vz, \va) 
    \vdash \ord(\vz\uh_\witness, \vz, \va)$
    holds, then for each assignment $\assmnta$ satisfying $\formg$, 
    $\assmnta \circ\witness \preceq \assmnta$ holds.
    \item If $\formg \cup \spec(\vz, \vz\uh_\witness, \va) 
    \cup \ord(\vz, \vz\uh_\witness, \va) \vdash \bot$
    holds, then for each assignment~$\assmnta$ satisfying~$\formg$, 
    $\assmnta \not\preceq \assmnta\circ\witness$ holds. 
\end{enumerate} 
\label{lem:order_reasonable_both}
\end{lemma}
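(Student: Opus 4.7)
The plan is to prove each of the two parts directly from Definition~\ref{def:induced-preorder} by combining Lemma~\ref{lem:spec_behaves_nicely} (to produce a satisfying assignment for the specification) with the soundness of cutting planes (to transfer the hypothesized derivation to any model that meets its premises). The common mechanism is: starting from $\assmnta$ on $\var(\formg)$, freely extend it by a suitable assignment $\assmntp$ to $\va$, and then use the hypothesized derivation on the extended assignment.

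For part~1, fix $\assmnta$ satisfying $\formg$ and aim to exhibit an $\assmntp$ such that $\spec(\vz\uh_{\assmnta\circ\witness}, \vz\uh_\assmnta, \va\uh_\assmntp) \land \ord(\vz\uh_{\assmnta\circ\witness}, \vz\uh_\assmnta, \va\uh_\assmntp)$ evaluates to true; this then yields $\assmnta\circ\witness \preceq \assmnta$ by Definition~\ref{def:induced-preorder}. To obtain $\assmntp$, first observe that applying the syntactic substitution $\vx \mapsto \vz\uh_\witness$, $\vy \mapsto \vz$ to $\spec(\vx,\vy,\va)$ turns the defining redundance-derivation of $\spec$ into a valid derivation of $\spec(\vz\uh_\witness, \vz, \va)$, whose witnesses still have support in $\va$; hence $\spec(\vz\uh_\witness, \vz, \va)$ is again a specification over $\va$. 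Applying Lemma~\ref{lem:spec_behaves_nicely} to the restriction of this specification after $\assmnta$ has fixed the values of $\vz$ and $\vz\uh_\witness$ gives the desired $\assmntp$ on $\va$. Because $\va \cap \var(\formg) = \emptyset$, the extension $\assmnta' = \assmnta \cup \assmntp$ still satisfies $\formg$, and by construction it also satisfies $\spec(\vz\uh_\witness, \vz, \va)$. By soundness of cutting planes and the hypothesis $\formg \cup \spec(\vz\uh_\witness, \vz, \va) \vdash \ord(\vz\uh_\witness, \vz, \va)$, the assignment $\assmnta'$ must then additionally satisfy $\ord(\vz\uh_\witness, \vz, \va)$, which is precisely what we needed.

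For part~2, argue by contradiction: assume $\assmnta \preceq \assmnta\circ\witness$ for some $\assmnta$ satisfying $\formg$. By Definition~\ref{def:induced-preorder} there is an assignment $\assmntp$ on $\va$ such that both $\spec(\vz\uh_\assmnta, \vz\uh_{\assmnta\circ\witness}, \va\uh_\assmntp)$ and $\ord(\vz\uh_\assmnta, \vz\uh_{\assmnta\circ\witness}, \va\uh_\assmntp)$ evaluate to true. Again using $\va \cap \var(\formg) = \emptyset$, the extension $\assmnta' = \assmnta \cup \assmntp$ satisfies $\formg \cup \spec(\vz, \vz\uh_\witness, \va) \cup \ord(\vz, \vz\uh_\witness, \va)$. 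Applying soundness of cutting planes to the hypothesized derivation of $\bot$ from this set yields the desired contradiction.

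The main technical delicacy is the justification that $\spec(\vz\uh_\witness, \vz, \va)$ remains a specification over $\va$ after the substitution $\vx \mapsto \vz\uh_\witness$, $\vy \mapsto \vz$. This needs the side conditions $\supp(\witness) \subseteq \var(\formg)$ and $\va \cap \var(\formg) = \emptyset$, together with $\vz \cap \va = \emptyset$ from Definition~\ref{def:induced-preorder}, to guarantee that no substitution ever touches the witness supports (which live in $\va$) and that no $\va$-variable is accidentally introduced into the $\vz$- or $\witness$-images. Once this closure-under-substitution observation is in place, both parts reduce to a routine ``extend and apply soundness'' argument, with Lemma~\ref{lem:spec_behaves_nicely} invoked in part~1 to manufacture the $\assmntp$ and Definition~\ref{def:induced-preorder} supplying it directly in part~2.
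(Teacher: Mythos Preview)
Your proposal is correct and follows essentially the same approach as the paper's proof: use Lemma~\ref{lem:spec_behaves_nicely} to obtain an assignment $\assmntp$ to $\va$ that satisfies the specification, extend $\assmnta$ by $\assmntp$ (which is harmless since $\va \cap \var(\formg) = \emptyset$), and apply soundness of cutting planes to the extended assignment; part~2 is the same contradiction argument via Definition~\ref{def:induced-preorder}. The only difference is that you make explicit why $\spec(\vz\uh_\witness, \vz, \va)$ remains a specification over $\va$ after substitution, whereas the paper simply applies Lemma~\ref{lem:spec_behaves_nicely} directly to the already-substituted instance without spelling this out.
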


Hence, we define the dominance rule as follows:

\begin{definition}[Dominance-based strengthening with specification]
    We can transition from the configuration $\confdec$ to $(\core, \der \cup \{\constrc\}, \ord, \spec, \vz, \va)$ using the dominance rule 
    if the following conditions are met: 
    \begin{enumerate}
        \item The constraint $\constrc$ does not contain variables in $\va$.
        \item There is a witness $\witness$ for which $\img(\witness) \cap \va = \emptyset$ holds.
        \item We have cutting planes proofs for the following:
\end{enumerate}
\begin{align}
\core &\cup \der \cup \{\neg \constrc\} \cup \spec(\vz\uh_\witness, \vz, \va) 
        \vdash \core\uh_\witness \!\cup \ord(\vz\uh_\witness, \vz, \va) \label{eq:dom1}\\\core &\cup \der \cup \{\neg \constrc\} \cup \spec(\vz, \vz\uh_\witness,\va) \cup \ord(\vz, \vz\uh_\witness, \va) 
        \vdash \bot \eqperiod\label{eq:dom2}
\end{align}
\end{definition}

Using Lemma~\ref{lem:order_reasonable_both}, we can show that the dominance rule preserves the invariants required by weak $\formf$-validity:

\begin{lemma}
    If we can transition from $\confdec$ to $(\core, \der \cup \{\constrc\}, \ord, \spec, \vz, \va)$ 
    by the dominance rule, and $\confdec$ is weakly $\formf$-valid, 
    then the configuration~$(\core, \der \cup \{\constrc\}, \ord, \spec, \vz, \va)$ is also weakly $\formf$-valid.
\end{lemma}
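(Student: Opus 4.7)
The plan is to verify both conditions of weak $\formf$-validity for the new configuration. Condition~1 is immediate: the dominance rule leaves $\core$ unchanged, so satisfiability of $\formf$ implies satisfiability of $\core$ directly from the assumed validity of $\confdec$. The substance of the proof lies in re-establishing condition~2: every assignment $\assmnta$ satisfying $\core$ must be refinable to some $\assmntaprime$ satisfying $\core \cup \der \cup \{\constrc\}$ with $\assmntaprime \preceq \assmnta$.

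The construction I would use is iterative. Fix $\assmnta$ satisfying $\core$. The validity of the old configuration gives $\assmntb_0$ satisfying $\core \cup \der$ with $\assmntb_0 \preceq \assmnta$. If $\assmntb_0$ already satisfies $\constrc$, we are done. Otherwise $\assmntb_0$ satisfies $\core \cup \der \cup \{\neg \constrc\}$, and the derivation \eqref{eq:dom1} together with the implicational soundness of cutting planes yields that $\assmntb_0 \circ \witness$ satisfies $\core$. Re-applying the old validity at $\assmntb_0 \circ \witness$ then furnishes some $\assmntb_1$ satisfying $\core \cup \der$ with $\assmntb_1 \preceq \assmntb_0 \circ \witness$.

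The crucial step is to upgrade this chain to a strict decrease $\assmntb_1 \prec \assmntb_0$ so that the iteration makes progress. I would invoke Lemma~\ref{lem:order_reasonable_both} with $\formg = \core \cup \der \cup \{\neg \constrc\}$: part~(1) combined with \eqref{eq:dom1} gives $\assmntb_0 \circ \witness \preceq \assmntb_0$, while part~(2) combined with \eqref{eq:dom2} gives $\assmntb_0 \not\preceq \assmntb_0 \circ \witness$, hence $\assmntb_0 \circ \witness \prec \assmntb_0$. Chaining $\assmntb_1 \preceq \assmntb_0 \circ \witness \preceq \assmntb_0$ via transitivity of $\preceq$ yields $\assmntb_1 \preceq \assmntb_0$; the converse $\assmntb_0 \preceq \assmntb_1$ would give $\assmntb_0 \preceq \assmntb_0 \circ \witness$ by transitivity, contradicting the above, so $\assmntb_1 \prec \assmntb_0$. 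Iterating produces $\assmntb_0, \assmntb_1, \dots$, each satisfying $\core \cup \der$, strictly decreasing under $\preceq$.

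The main obstacle, and the step requiring the most care, is termination of this iterated construction. Here I would exploit that $\preceq$ is determined purely by the $\vz$-valuations of the assignments: since $\vz$ is a finite \listt of Boolean variables, there are only finitely many possible $\vz$-valuations, so the strict part of $\preceq$ on the induced equivalence classes is well-founded, and no infinite descending chain can exist. The iteration therefore halts at some $\assmntb_k$ that cannot be continued, which by construction means $\assmntb_k$ must satisfy $\constrc$ in addition to $\core \cup \der$. Transitivity of $\preceq$ along the chain yields $\assmntb_k \preceq \assmnta$, and taking $\assmntaprime = \assmntb_k$ establishes condition~2 for the new configuration, completing the proof.
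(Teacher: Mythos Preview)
Your proof is correct and uses essentially the same ingredients as the paper's: Lemma~\ref{lem:order_reasonable_both} applied to $\formg = \core \cup \der \cup \{\neg \constrc\}$ to obtain the strict decrease $\assmntb \circ \witness \prec \assmntb$, derivation~\eqref{eq:dom1} to show $\assmntb \circ \witness$ satisfies $\core$, and finiteness of the $\vz$-valuations to rule out an infinite descent. The paper packages the same argument as a minimal-counterexample contradiction---it takes a $\prec$-minimal element of the set of assignments violating condition~2 and shows that one step of your iteration produces a strictly smaller element outside that set---but the mathematical content is identical.
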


\subsection{Redundance-Based Strengthening Rule}
\label{ssc:redrule}
We also modify the redundance rule to work in our \extended proof system.
Similarly to the dominance rule, we can use $\spec(\vz\uh_\witness, \vz, \va)$ as an 
extra
premise in our proof obligations. 

\begin{definition}[Redundance-based strengthening with specification]
    We can transition from the configuration $\confdec$ to $(\core, \der \cup \{\constrc\}, \ord, \spec, \vz, \va)$ using 
    the redundance rule if the following conditions are met: 
    \begin{enumerate}
        \item The constraint $\constrc$ does not contain variables in $\va$.
        \item There is a witness $\witness$ for which $\img(\witness) \cap \va = \emptyset$ holds.
        \item We have cutting planes proof that the following holds:
     \begin{equation}
     \begin{split}
        \core &\cup \der \cup \{\neg \constrc\} \cup \spec(\vz\uh_\witness, \vz, \va)  \\
         &\vdash (\core \union \der \union \{ \constrc \})\uh_{\witness} \cup \ord(\vz\uh_\witness, \vz, \va) \eqperiod \label{eq:red}
    \end{split}
    \end{equation}
\end{enumerate}
\end{definition}

The redundance rule preserves weak $\formf$-validity:

\begin{lemma}
    \label{lem:red_preserves_validity}
    If we can transition from $\confdec$ to 
    $(\core, \der \cup \{\constrc\}, \ord, \spec, \vz, \va)$ 
    by the redundance rule, and $\confdec$ is weakly $\formf$-valid, 
    then the configuration~$(\core, \der \cup \{\constrc\}, \ord, \spec, \vz, \va)$ 
    is also weakly $\formf$-valid.
\end{lemma}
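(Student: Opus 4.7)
The plan is to mirror the \original redundance-rule proof from \citet{BGMN23Dominance}, with the only genuinely new ingredient being a step that uses the specification $\spec$ to extend the current assignment with values for the auxiliary variables $\va$. The first condition of weak $\formf$-validity (satisfiability of $\formf$ implying satisfiability of $\core$) transfers immediately because $\core$ is unchanged by the rule, so the work lies entirely in re-establishing the second condition.

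Fix an assignment $\assmnta$ satisfying $\core$. By weak $\formf$-validity of the source configuration, pick $\assmntaprime$ satisfying $\core \cup \der$ with $\assmntaprime \preceq \assmnta$. If $\assmntaprime$ already satisfies $\constrc$ we are done, so assume instead that $\assmntaprime$ satisfies $\neg\constrc$. I would then invoke Lemma~\ref{lem:spec_behaves_nicely} applied to the substituted specification $\spec(\vz\uh_\witness, \vz, \va)$ to extend $\assmntaprime$ to an assignment $\assmntaprime^+$ that additionally satisfies this specification and agrees with $\assmntaprime$ on every non-$\va$ variable. Since no variable of $\va$ occurs in $\core$, $\der$, or $\constrc$, the extension $\assmntaprime^+$ still satisfies $\core \cup \der \cup \{\neg\constrc\}$. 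Plugging it into the proof obligation~\eqref{eq:red} yields that $\assmntaprime^+$ also satisfies $(\core \cup \der \cup \{\constrc\})\uh_\witness$ and $\ord(\vz\uh_\witness, \vz, \va)$.

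From here, set $\assmntaprime^\star := \assmntaprime^+ \circ \witness$ restricted to $\var(\core \cup \der \cup \{\constrc\})$. Because $\img(\witness) \cap \va = \emptyset$, this is a well-defined total assignment on the non-auxiliary variables, and the fact that $\assmntaprime^+$ satisfies $(\core \cup \der \cup \{\constrc\})\uh_\witness$ immediately gives that $\assmntaprime^\star$ satisfies $\core \cup \der \cup \{\constrc\}$. It remains to establish $\assmntaprime^\star \preceq \assmnta$, which I would obtain by first proving $\assmntaprime^\star \preceq \assmntaprime$ and then applying transitivity of $\preceq$ together with $\assmntaprime \preceq \assmnta$. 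For the former I would take the restriction of $\assmntaprime^+$ to $\va$ as the existential witness $\assmntp$ required in Definition~\ref{def:induced-preorder}, since unpacking substitutions shows that $\spec(\vz\uh_{\assmntaprime^\star}, \vz\uh_\assmntaprime, \va\uh_\assmntp)$ and $\ord(\vz\uh_{\assmntaprime^\star}, \vz\uh_\assmntaprime, \va\uh_\assmntp)$ evaluate to exactly the truth values of $\spec(\vz\uh_\witness, \vz, \va)$ and $\ord(\vz\uh_\witness, \vz, \va)$ under $\assmntaprime^+$, both of which have already been established.

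The main obstacle is the bookkeeping around auxiliary variables: one must verify that $\spec(\vz\uh_\witness, \vz, \va)$ is still a specification over $\va$, so that Lemma~\ref{lem:spec_behaves_nicely} applies after the substitution, and that the single extended assignment $\assmntaprime^+$ coherently plays two roles at once---as the model from which the proof obligation lets us read off the semantic consequences used above, and as the source of the existential witness on $\va$ demanded by the definition of $\preceq$. Once these two roles are seen to be consistent, the rest of the argument is essentially the \original redundance proof of \citet{BGMN23Dominance} wrapped around the new extension step.
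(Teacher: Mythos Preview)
Your proposal is correct and follows essentially the same approach as the paper. The only structural difference is that the paper packages the core step---extending the assignment via Lemma~\ref{lem:spec_behaves_nicely} to satisfy $\spec(\vz\uh_\witness,\vz,\va)$, reading off $\ord(\vz\uh_\witness,\vz,\va)$ from the proof obligation, and concluding $\assmntaprime\circ\witness \preceq \assmntaprime$---into the separately stated Lemma~\ref{lem:order_reasonable_both} (part~1), which is then invoked as a black box in the redundance proof; you instead inline that argument directly. Your identification of the ``main obstacle'' (that the substituted specification remains a specification over $\va$, and that the single extended assignment simultaneously supplies the model and the existential witness for $\preceq$) matches exactly what is established inside the proof of Lemma~\ref{lem:order_reasonable_both}.
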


\section{Efficient Proof Logging in \satsuma} \label{sec:satsumalogging}

Using our \extended proof system, we implement proof logging in the state-of-the-art symmetry breaker \satsuma\footnote{\satsuma code: \url{https://doi.org/10.5281/zenodo.17607863}}. 
\ifthenelse{\boolean{extended-version}}{
    The details can be found in Appendix~\ref{app:satsumalogging} and a worked-out example is given in Appendix~\ref{app:example}.
}{}
Like in the \original system (as explained in Section~\ref{ssc:symbreak}), the %
(negation of the)  %
symmetry breaking constraints can be used to show the proof obligations for the order. However, in the \extended system this is more complicated, because we need to relate two different sets of extension variables (those in the symmetry breaking constraints and the auxiliary variables in the specification).

Our \new method achieves an asymptotic speedup over the \old method. Defining 
the lexicographical order over $\nvar$ variables can be done in time $\bigoh{\nvar}$ with our \new method (for both checking and logging), while the \old method requires time $\bigoh{\nvar^2}$. 
Breaking
a symmetry $\symmetry$ over  $\suppsize = |\suppsymmetry|$ variables ($\suppsize \leq \nvar$) takes time $\bigoh{\suppsize}$ for logging and $\bigoh{\nvar}$ for checking with our \new method, while the \old method requires time  $\bigoh{\nvar\suppsize}$ for logging and time $\bigoh{\nvar^2+\nvar\suppsize^2}$ for checking. Therefore, the \new method is in each case asymptotically at least a factor $\nvar$
faster
for both logging and checking than the \old method used by \citet{BGMN23Dominance}.

\section{Proof Checker Implementation}
\label{sec:implementationdetails}

We implemented checking for our \extended proof system in the proof checker \veripb\footnote{\veripb code: \url{https://doi.org/10.5281/zenodo.17608873}} and the formally verified proof checker \cakepb\footnote{\cakepb code: \url{https://doi.org/10.5281/zenodo.17609070}}.
Several optimizations are necessary to handle orders with many specification constraints efficiently.

\paragraph{Lazy Constraint Loading and Evaluation.} When checking cutting planes derivations for the dominance or redundance rule (\ref{eq:dom1})--(\ref{eq:red}), the proof checkers load the specification constraints from $\spec$ only when they are used in the proof.
More specifically, the constraints in $\spec$ are not even computed until loaded explicitly, which improves the checking performance by a linear factor if the specification $\spec$ is not required for a cutting planes derivation.

\paragraph{Implicit Reflexivity Proof.} Since the loaded order is always proven to be reflexive %
\eqref{eq:preorder1}, %
the cutting planes derivation for $\ord(\vec{\varz}\uh_\omega, \vec{\varz}, \va)$ can be skipped for the redundance
rule
\eqref{eq:red} if the domain of the witness $\omega$ does not contain a variable in $\vec{z}$.
Requiring an explicit cutting planes derivation for $\ord(\vec{\varz}\uh_\omega, \vec{\varz}, \va)$ would again involve computation over all constraints in the specification $\spec$, which incurs a linear overhead for proof logging and checking.

\paragraph{Formal Verification.} We updated \cakepb in two phases, yielding the same end-to-end verification guarantees for proof checking as discussed in~\citet{GMMNOT24Subgraph}.
First, we formally verified soundness of all updates to the proof system (including Lemmas~\ref{lem:spec_behaves_nicely}--\ref{lem:red_preserves_validity}).
Second, we implemented and verified these changes in the \cakepb codebase, including soundness for the optimizations described above.

\section{Experimental Evaluation}
\label{sec:experiments}

\begin{figure*}[t]
    \begin{center}
    \includegraphics*{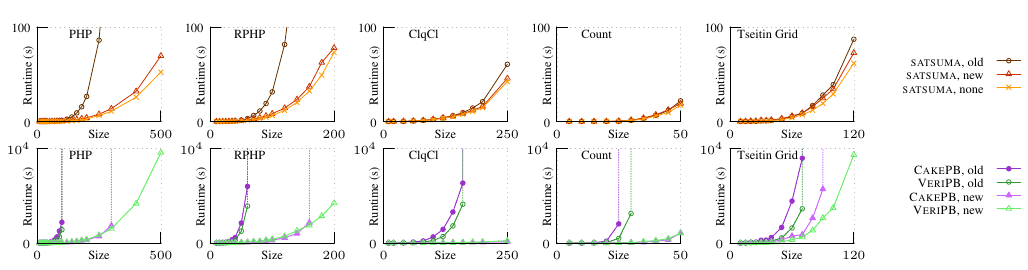}
    \end{center}
    \caption{On top, the cost of running \satsuma with or without proof logging, on crafted benchmark instances, as the instance
    size grows. In each case logging with the \new method scales similarly to not doing logging, whilst the \old method exhibits
    worse scaling for several families. On the bottom, the cost of checking these proofs: the \new method exhibits
    better scaling.}
    \label{figure:crafted-scalability}
\end{figure*}

\begin{figure*}[t]
    \begin{center}
        \includegraphics*{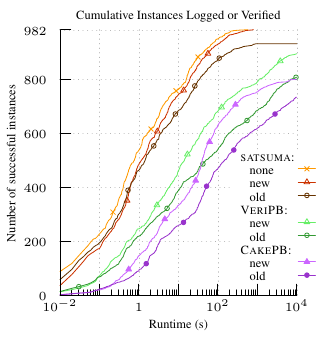}
        \includegraphics*{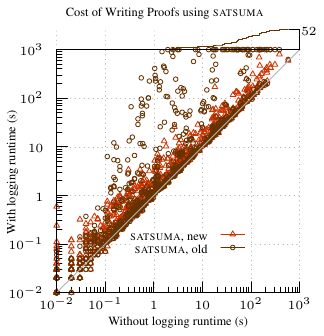}
        \includegraphics*{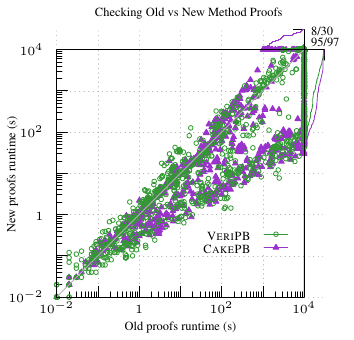}
    \end{center}
    \caption{On the left, the cumulative number of ``interesting'' SAT
    competition instances broken, logged, and checked over time. The centre
    plot compares the added cost of writing proofs with the \old and \new
    methods, compared to not writing proofs; 52 instances reached time or
    memory limits with the \old method. The
    right-hand plot compares the cost of checking proofs using the \old and \new
    methods, with points below the diagonal line showing where both methods
    succeeded but the \new method was faster; additionally, 8 and 30 instances reach limits with \veripb and \cakepb
    respectively with the \new method where the \old method succeeded,
    compared to 95 and 97 respectively with the \old method where the \new
    method succeeded.}\label{figure:satcomp}
\end{figure*}

From the analysis in Section~\ref{sec:satsumalogging}, we know
that our \new proof system is asymptotically better in theory. We now show that it indeed enables much faster proof logging and checking in practice,
on both crafted and real-world problem instances. The experiments in this section are performed on machines with dual AMD EPYC 7643 processors, 2TBytes of
RAM, and local solid state hard drives, running Ubuntu 22.04.2. We limit each individual process to 32GBytes RAM, and
run up to 16 processes in parallel (having checked that this does not make a measurable difference to runtimes). We give
\satsuma a time limit of 1,000s per instance, and \veripb and \cakepb 10,000s. We remark that runtimes involving
writing proofs are often bound by disk I/O performance; nevertheless, our general experimental trends are valid.
In each case, when we run \veripb, we run it in
elaboration mode. This means that, in addition to checking a proof, it also outputs a simplified proof that is suitable for
giving to \cakepb. This also means that any instance that fails for \veripb due to limits cannot be run through \cakepb.

The aim of our experiments is not to determine whether symmetry breaking is a good idea, or how it should be done.
Indeed, \satsuma produces the same CNF (modulo a potential sorting of the constraints) regardless of whether it is outputting proofs using the \old method or the \new method, or not outputting proofs at all. Thus, we
limit our experiments to checking that the proofs produced by \satsuma are in fact valid, rather than reporting
times for checking the entire solving process. This allows us to precisely measure the effects of our changes.

We perform experiments across two sets of instances, with different purposes. Our first set consists of five families of crafted benchmarks which have well-understood symmetries, generated using \cnfgen~\cite{LENV17CNFgen}. We note that the number of variables grows quadratically or cubically in the instance size.
\ifthenelse{\boolean{extended-version}}{
    Appendix~\ref{app:crafted} provides more details about the crafted benchmarks.
}{}

We show the results in Figure~\ref{figure:crafted-scalability}. For each of the five families, on the top row we plot the time
needed to run \satsuma to produce symmetry breaking constraints, without proof logging and with both kinds of proof
logging enabled. In each case, the \new method scales similarly to not doing proof logging, although there is a cost to
be paid to output the proofs to disk. However, particularly for the PHP and RPHP families, it is clear that even writing
the proofs is both asymptotically and practically much more expensive using the \old method. On the bottom row of the
figure, we plot the checking times. We see much better scaling from the \new method in all five cases. Formally verified
proof checking using \cakepb is slightly slower than with \veripb, which is not surprising---for the families where \cakepb's
curve stops on smaller instances, this is due to \cakepb hitting memory limits when \veripb did not.
The \new method is particularly helpful for \cakepb as its formally verified arbitrary precision arithmetic library is known to be less efficient than other (unverified) libraries~\cite{TMKFON19CakeML}.

Our second set of instances are taken from the SAT competition \cite{IJ24GlobalBenchmarkDatabase}. Because we are only interested in instances where we can
measure something interesting about symmetry breaking, we selected the 982 instances from the main competition tracks
from 2020 to 2024 where \satsuma was able to run to completion and identify at least one symmetry. In the left-hand plot
of Figure~\ref{figure:satcomp}, we show the cumulative number of instances successfully logged or checked over time. The
leftmost (``best'') curve is to run \satsuma with no proof logging, and this is closely followed by running \satsuma with
proof logging using the \new method, where we could produce proofs for all 982 instances. When producing proofs using the
\old method, in contrast, we were only able to produce proofs for 930 instances before limits were reached. We were able
to check the correctness of the symmetry breaking constraints for the \new method for 893 instances (799 with \cakepb), and
with the \old method for only 806 instances (732 with \cakepb). The two scatter plots in the figure give a more detailed
comparison of the added cost of running \satsuma with proof-logging enabled, and comparing the checking costs of the \old
and \new proof methods. In both cases it is clear that the \new method is never more than a small constant factor worse
than the \old method, and that it is often many orders of magnitude faster.

\section{Concluding Remarks}
\label{sec:conclusion}

We have presented a substantial redesign of the \veripb proof system~\cite{BGMN23Dominance,GN21CertifyingParity,Gocht22Thesis} in order to support faster certified symmetry breaking.
Central to our redesign is 
support for using auxiliary variables to encode ordering constraints over assignments.
Theoretically, the use of orders with auxiliary variables allows us to avoid encoding lexicographical orders using big integers, 
that are 
prohibitive for problems with large symmetries; this improves on the previous state-of-the-art proof logging approach \cite{BGMN23Dominance} by at least a linear factor.
To evaluate this in practice,
we implemented proof logging using our \new method in the state-of-the-art symmetry breaking tool \satsuma~\cite{AndersBR24}, and proof checking for our \extended system in \veripb and the formally verified checker \cakepb~\cite{GMMNOT24Subgraph};
our experimental evaluation shows orders-of-magnitude improvement for proof logging and checking compared to the \old approach.

Although proof logging is now asymptotically as fast as symmetry breaking, enabling proof logging can still incur a %
constant factor overhead. However, improving this would be mostly an engineering effort---we are already able to produce proofs for all of our benchmark instances within reasonable time.
Checking the proof can still be asymptotically slower than symmetry breaking in theory, which leaves room to significantly improve the performance of 
proof checking
for symmetry breaking.
The key challenge here is that the proof checker currently has to reason about all variables in the order for each symmetry broken, while the symmetry breaker does this once at a higher level.  %
Future work could investigate proof logging for conditional and dynamic symmetry breaking during search~\cite{GKLMMS05ConditionalSymmetryBreaking} or other dynamic methods of exploiting symmetries~\cite{DBB17SymmetricExplanationLearning}, in contrast to the static symmetry breaking we presented here.

 \section{Acknowledgments}
    
    We would like to thank anonymous reviewers of \emph{Pragmatics of SAT} and \emph{AAAI} for their useful comments. 
    
	This work is partially funded by the European Union (ERC, CertiFOX, 101122653). Views and opinions expressed are however those of the author(s) only and do not necessarily reflect those of the European Union or the European Research Council. Neither the European Union nor the granting authority can be held responsible for them.
	
	This work is also partially funded by the Fonds Wetenschappelijk Onderzoek -- Vlaanderen (project G064925N). 
 	
 	Benjamin Bogø and Jakob Nordström %
 	are funded by the Independent Research Fund Denmark grant \mbox{9040-00389B}. Jakob Nordström %
 	is also funded by the Swedish Research Council grant \mbox{2024-05801}.
 	  Wietze Koops and Andy Oertel %
 	are funded by the Wallenberg AI, Autonomous Systems and Software Program (WASP) funded by the Knut and Alice Wallenberg Foundation.
	Benjamin Bogø, Wietze Koops, Jakob Nordström and Andy Oertel
	also acknowledge to have benefited greatly from being part of the Basic Algorithms Research Copenhagen (BARC) environment
  financed by the Villum Investigator grant~54451.

  Ciaran McCreesh and Arthur Gontier %
  were supported by the Engineering and Physical Sciences Research Council grant number \mbox{EP/X030032/1}.
  Ciaran McCreesh was also supported by a Royal Academy of Engineering research fellowship.

  Magnus Myreen %
  is funded by the Swedish Research Council grant \mbox{2021-05165}.

  Adrian Rebola-Pardo is funded in whole or in part by the Austrian Science Fund
  (FWF) BILAI project \mbox{10.55776/COE12}.

  Yong Kiam Tan %
  was supported by the Singapore NRF Fellowship Programme \mbox{NRF-NRFF16-2024-0002}.

 	Our computational experiments used resources provided by
LUNARC at Lund University. %

  Different subsets of the authors wish to thank the participants of the Dagstuhl workshops 
  22411 \emph{Theory and Practice of SAT and Combinatorial Solving} and 
  25231 \emph{Certifying Algorithms for Automated Reasoning} 
  and of the \emph{1st International Workshop on Highlights in Organizing and Optimizing Proof-logging Systems (WHOOPS~'24)}
  at the University of Copenhagen
  for several stimulating discussions.

\ifthenelse{\boolean{extended-version}}{
\clearpage

\appendix
\section{The Cutting Planes Proof System} \label{app:cuttingplanes}

In this appendix, the cutting planes proof system is explained in sufficient detail to understand the rest of the appendix.
We will also present the syntax in the \veripb format used to write the proofs to a file and for the example in \refapp{app:example}. The cutting planes proof system was first introduced by \citet{CCT87ComplexityCP} and more details about recent results about cutting planes can be found in \citet{BN21ProofCplxSATplusCrossref}.

As mentioned in \refsec{sec:prelim}, a \emph{Boolean variable} $x$ can take values $0$ (false) and $1$ (true). A \emph{literal} is either the variable $x$ itself or its negation $\olnot{x} = 1 - x$. A \emph{pseudo-Boolean constraint} is an integer linear inequality over literals $\ell_i$
\begin{align}
    C \synteq \sum\nolimits_i a_i \ell_i \geq A \eqcomma \label{eq:app:pb-constraint}
\end{align}
where \wolog the coefficients $a_i$ and the right-hand side $A$ are non-negative and the literals $\ell_i$ are over distinct variables.
A (partial) assignment is a (partial) function mapping variables to $0$ and $1$, which is extended to literals in the natural way.
The \emph{cutting planes} proof system provides rules to derive a constraint $C$ that is implied by a formula $\formf$, \ie $\formf$ and $\formf \union \set{C}$ have the same set of satisfying assignments.

In the \veripb syntax a constraint is referred to by a positive integer called its \emph{constraint ID}.
The constraint IDs are consecutively assigned to the constraints in the order in which they are derived and start with the original formula $\formf$ getting IDs $1, \dots, \setsize{\formf}$.
A negative constraint ID can be used as a relative reference, \egabbrev the ID \texttt{-1} means the previously derived constraint. 
A cutting planes step starts with the keyword \texttt{pol} and ends with a semicolon.
After each cutting planes step, the derived constraint is assigned the next ID and added to the derived constraints in the checker.
The derivation is written in reverse Polish notation (or postfix notation), hence the checker maintains a stack of operands.
Each of the following rule pops it operands from the stack and pushed on the stack its result.

Let $C \synteq \sum_i a_i \ell_i \geq A$ and $D \synteq \sum_i b_i \ell_i \geq B$ be constraints from the formula $\formf$ or derived in previous steps.
The constraint $C$ can always be derived by the \emph{constraint axiom} rule, which is just the ID of the constraint in the syntax.
\emph{Literal axioms} $x_1 \geq 0$ and $\olnot{x}_1 \geq 0$ are derivable by using the syntax \texttt{x1} and \texttt{$\sim$x1}, respectively.
The constraint $C$ can be \emph{multiplied} by an integer $m$ by multiplying each coefficient and the right-hand side by $m$, resulting in $\sum_i (m a_i) \ell_i \geq m A$. This is denoted by \texttt{*} in the syntax and assumes the last operand on the stack is an integer and the second to last is a constraint.
The constraints $C$ and $D$ can be \emph{added} to derive $\sum_i (a_i + b_i) \ell_i \geq A + B$, which is denoted by \texttt{+} in the syntax, which assumes that the last two operands on the stack are constraints.
The constraint $C$ can be \emph{saturated} to derive $\sum_i \min\set{a_i, A} \ell_i \geq A$, which is denoted by \texttt{s} in the syntax and expects that the last operand on the stack is a constraint.
\emph{Weakening} is syntactic sugar for adding literal axioms to the constraint $C$ to eliminate the term $a_j \ell_j$ in C resulting $\sum_{i \neq j} a_i \ell_i \geq A - a_j$. This is denoted in the syntax by \texttt{w}, which assumes that the last operand on the stack is the variable over which $\ell$ is over and the second to last operand is a constraint.  

\veripb also supports the cutting planes rule of \emph{division}, which we will mention for completeness, but we will not need it for the example in \refapp{app:example}. Considering the constraint $C$ and a positive integer $d$, each coefficient and the right-hand side is divided by $d$ and rounded up, resulting in $\sum_i \ceiling{a_i / d} \ell_i \geq \ceiling{A / d}$. This is denoted by \texttt{d} in the syntax and assumes that the last operand on the stack is an integer and the second to last is a constraint. 

To better illustrate how the syntax relates to rules, suppose we have the example constraint
\begin{align}
    2 \olnot{x}_1 + 3 x_2 + 2 x_3 \geq 5 \eqcomma \label{eq:app:example-cutting-planes}
\end{align}
with constraint ID \texttt{\ref{eq:app:example-cutting-planes}}. The cutting planes proof line
\begin{lstlisting}[language=veripb,numbers=none,frame=none,aboveskip=\smallskipamount,belowskip=\smallskipamount,xleftmargin=2ex,escapechar=\$]
pol $\ref{eq:app:example-cutting-planes}$ x2 2 * + x1 w s 2 d ;
\end{lstlisting}
first pushes the constraint with ID \texttt{\ref{eq:app:example-cutting-planes}} to the stack, then the literal axiom $x_2 \geq 0$ is pushed on the stack, which is immediately afterwards multiplied by $2$ resulting on $2 x_2 \geq 0$ being pushed on the stack. Then the last two constraints on the stack, which are $2 \olnot{x}_1 + 3 x_2 + 2 x_3 \geq 5$ and $2 x_2 \geq 0$, are added, resulting in $2 \olnot{x}_1 + 5 x_2 + 2 x_3 \geq 5$. This constraint is then weakened on the variable $x_1$ resulting in the constraint $5 x_2 + 2 x_3 \geq 3$, which is saturated to yield $3 x_2 + 2 x_3 \geq 3$. Finally, dividing this constraint by $2$ yields $2 x_2 + x_3 \geq 2$, which is the result of this step and added to the derived constraints.

The \emph{slack} of the pseudo-Boolean constraint~\eqref{eq:app:pb-constraint} with respect to the (partial) assignment $\rho$ is
\begin{align*}
    \slack{\sum\nolimits_i a_i \ell_i \geq A}{\rho} \coloneq \sum\nolimits_{i: \rho(\ell_i) \neq 0} a_i - A \eqperiod
\end{align*}
If the slack is negative, then the constraint can not be satisfied by any extension of $\rho$, as there are not enough literals assigned to $1$ or unassigned. If there is a literal $\ell_i$ in the constraint $C$ that is unassigned and $0 \leq a_i < \slack{C}{\rho}$, then $\ell_i$ has to be assigned to $1$, as assigning $\ell_i$ to $0$ would lead to a negative slack. This assignment is then added to $\rho$, and we say that $C$ \emph{propagated} $\ell_i$ to $1$ under $\rho$. This process is repeated until there are no further propagations by any constraint or a constraint is falsified, \ie negative slack.
We refer to the latter case as a \emph{conflict}.

The negation of the pseudo-Boolean constraint $C$ in~\eqref{eq:app:pb-constraint} is $\neg C \synteq \sum_i a_i \olnot{\ell}_i \geq \sum_i a_i - A + 1$.
A formula $\formf$ implies a constraint $C$ by \emph{reverse unit propagation (RUP)}~\cite{GN03Verification} if $\formf \union \set{\neg C}$ propagtes to a conflict. This shows that $\neg C$ is falsified by all satisfying assignments to $\formf$, thus any assignment satisfying $\formf$ also satisfies $\formf \union \set{C}$.  

In the \veripb syntax, a RUP step contains the constraint to be derived in a modified OPB format~\cite{RM16OPB}. Additionally, the syntax also allows for hints similar to LRAT~\cite{CHHKS17EfficientCertified} that specify which constraints have to be propagated to reach a conflict. \Eg, the RUP step
\begin{lstlisting}[language=veripb,numbers=none,frame=none,aboveskip=\smallskipamount,belowskip=\smallskipamount,xleftmargin=2ex,escapechar=\$]
rup 2 x2 1 x3 >= 2 : $\ref{eq:app:example-cutting-planes}$ ;
\end{lstlisting}
derives the constraint $2 x_2 + x_3 \geq 2$ with the hint that only the constraint with ID~\texttt{\ref{eq:app:example-cutting-planes}} above ($2 \olnot{x}_1 + 3 x_2 + 2 x_3 \geq 5$) is required in addition to the negation ($2 \olnot{x}_2 + \olnot{x}_3 \geq 2$) of the constraint to propagate from the empty assignment to a conflict.

\section{Full Description of Extended Proof System} \label{app:proof_system_full}

In this appendix, we provide %
further details on the extended proof system introduced in
Sections~\ref{ssc:specifications} up to~\ref{ssc:redrule}, 
including the statement of all results in the more general context 
of optimization problems, and all proofs. For a discussion on the intuition
behind the proof system, we refer the start of Section~\ref{sec:extending-proof-system}.
To make it easier to read the appendix as its own section, 
we also include all content that was already in 
Sections~\ref{ssc:specifications} up to~\ref{ssc:redrule} in the main paper.

\subsection{Specifications}
\label{appssc:specifications}

Let $\varset$ be a \listt of variables. 
A pseudo-Boolean formula $\specification(\vx, \varset)$ is a \emph{specification over the variables $\varset$}, if it is derivable 
from the empty formula $\emptyset$ by the redundance rule, where each application only witnesses 
over variables in~$\varset$.

\begin{definition}
A formula $\specification(\vx, \varset) = \set{\constrc_1, \constrc_2, \dots{}, \constrc_n}$ is a \emph{specification over the variables $\varset$}, if there is a list 
\begin{equation*}
(\constrc_1, \witness_1), (\constrc_2, \witness_2), \dots{}, (\constrc_n, \witness_n)
\end{equation*}
which satisfies the following: 
\begin{enumerate}
    \item The constraint $\constrc_1$ can be obtained from the empty formula $\emptyset$ using the redundance rule with witness $\witness_1$.
    \item For each $i \in \{2, \dots{}, n\}$ we have that 
$C_i$ can be added by the redundance rule to
$\bigcup_{j=1}^{i-1} \{\constrc_j\} $ with the witness $\witness_i$. 
In other words, it should hold that 
\begin{equation}
\bigcup_{j=1}^{i-1} \{\constrc_j\}  \cup \{\lnot \constrc_i\} \vdash \bigcup_{j=1}^{i} \{\subst{\constrc_j}{\witness_i}\}.
\label{appeq:redspec}
\end{equation}
    \item For every witness $\witness_i$,  $\supp(\witness_i) \subseteq \varset$ holds.  
\end{enumerate}
\end{definition}

In terms of configurations, we can say that a formula $\specification(\vx, \varset) = \set{\constrc_1, \constrc_2, \dots{}, \constrc_n}$ is a specification over $\varset$ if we can transition from the configuration $(\emptyset, \emptyset, \emptyset, \emptyset)$ to the configuration $(\emptyset, \specification(\vx, \varset), \emptyset, \emptyset)$ using only the redundance rule with witnesses $\witness_i$ such that $\supp(\witness_i) \subseteq \varset$ holds. 

\begin{remark}
Note that the definition of the specification uses the redundance rule, while the redundance rule in our extended proof system is only defined later. This is not a problem since the applications of the redundance rule in a specification require that the loaded order is the trivial preorder $\trivialorder \synteq \emptyset$ relating all assignments, for which it is irrelevant whether we use auxiliary variables or not.
\end{remark}

A crucial property of specifications is that we can recover an assignment of the auxiliary variables from 
the assignment of the non-auxiliary variables.
We state this property below.
\begin{lemma} Let $\specification(\vx, \va)$ 
    be a specification over $\va$.
    Let $\assmntone$ be any assignment of the variables $\vx$. 
    Then, $\assmntone$ can be extended to an assignment $\assmnttwo$, such that
    \begin{enumerate} 
        \item $\assmnttwo$ satisfies $\specification$, and 
        \item $\assmntone(x) = \assmnttwo(x)$ holds for every $x \in \vx$. 
    \end{enumerate}
    \label{applem:spec_behaves_nicely}
\end{lemma}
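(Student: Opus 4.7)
The proof is by induction on the length $n$ of the specification list $(\constrc_1, \witness_1), \dots, (\constrc_n, \witness_n)$. The plan is to construct a sequence of assignments $\assmnttwo_0, \assmnttwo_1, \dots, \assmnttwo_n$ such that $\assmnttwo_i$ satisfies $\{\constrc_1, \dots, \constrc_i\}$ and agrees with $\assmntone$ on $\vx$; the final $\assmnttwo_n$ then serves as the desired $\assmnttwo$.

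For the base case, take $\assmnttwo_0$ to be any extension of $\assmntone$ to the variables $\va$ (e.g., setting every auxiliary variable to $0$); this trivially satisfies the empty set of constraints and agrees with $\assmntone$ on $\vx$. For the inductive step, suppose $\assmnttwo_{i-1}$ satisfies $\{\constrc_1, \dots, \constrc_{i-1}\}$. If $\assmnttwo_{i-1}$ already satisfies $\constrc_i$, set $\assmnttwo_i = \assmnttwo_{i-1}$. Otherwise, invoke the redundance derivation~\eqref{appeq:redspec}, which guarantees that any assignment satisfying $\bigcup_{j=1}^{i-1}\{\constrc_j\} \cup \{\lnot \constrc_i\}$ must also satisfy $\bigcup_{j=1}^{i}\{\subst{\constrc_j}{\witness_i}\}$. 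Since $\assmnttwo_{i-1}$ falls in this case, applying $\witness_i$ first and then $\assmnttwo_{i-1}$ yields an assignment $\assmnttwo_i \coloneqq \assmnttwo_{i-1} \circ \witness_i$ that satisfies every $\constrc_j$ for $j \leq i$.

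The remaining point is to verify that $\assmnttwo_i$ still agrees with $\assmntone$ on $\vx$. This is where the support condition $\supp(\witness_i) \subseteq \va$ is used: for any $x \in \vx$ we have $\witness_i(x) = x$, and hence $\assmnttwo_i(x) = \assmnttwo_{i-1}(\witness_i(x)) = \assmnttwo_{i-1}(x)$, which by the inductive hypothesis equals $\assmntone(x)$. Chaining these equalities through the induction gives $\assmnttwo_n(x) = \assmntone(x)$ for all $x \in \vx$, as required.

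The only mildly delicate step is the one involving the redundance derivation~\eqref{appeq:redspec}: one has to read it as the implicational statement about assignments rather than as a syntactic entailment, and then observe that substituting $\witness_i$ into a constraint that is satisfied by $\assmnttwo_{i-1}$ is precisely the same as evaluating the original constraint under $\assmnttwo_{i-1} \circ \witness_i$. Once this translation is made, everything else is bookkeeping, and no further subtlety arises from the fact that $\spec$ may mention variables in $\vx$, because the witnesses never touch those variables.
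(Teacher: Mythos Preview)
Your proof is correct and follows essentially the same approach as the paper: an inductive construction of assignments $\assmnttwo_0,\dots,\assmnttwo_n$, extending $\assmntone$ arbitrarily on $\va$ for the base case, and at each step either keeping $\assmnttwo_{i-1}$ or composing with $\witness_i$ depending on whether $\constrc_i$ is already satisfied, using the redundance condition~\eqref{appeq:redspec} and the support restriction $\supp(\witness_i)\subseteq\va$ exactly as you do.
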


To explain why Lemma~\ref{applem:spec_behaves_nicely} holds, recall from Section~\ref{ssc:original} that the redundance rule %
satisfies the following: if a constraint $\constrc$ is added by redundance with witness $\witness$, then given an assignment $\assmntone$ satisfying all other constraints, either  $\assmntone$ or  $\assmntone \circ \witness$ also satisfies $\constrc$. Hence, defining $\assmnttwo$ by composing $\assmntone$ with the witnesses~$\witness_i$ corresponding to the constraints
that do not already hold
ensures that $\assmnttwo$ satisfies $\specification$, and the fact that each witness~$\witness_i$ is the identity on $\vx$ (since $\supp(\witness_i) \subseteq \varset$) ensures that $\assmntone(x) = \assmnttwo(x)$ for $x \in \vx$. 

We now give a complete proof of Lemma~\ref{applem:spec_behaves_nicely}.

\begin{proof}[Proof of Lemma \ref{applem:spec_behaves_nicely}]	
Write $\specification(\vx, \varset) = \set{\constrc_1, \constrc_2, \dots{}, \constrc_n}$.
We show by induction on $i \geq 0$ that we can extend $\assmnta$ to an assignment $\assmnta_i$ such that 
\begin{enumerate} 
    \item $\assmnta_i$ satisfies $\formf_i = \set{\constrc_1, \constrc_2, \dots{}, \constrc_i}$, and
    \label{appeq:lem1proof1}
    \item $\assmntone(x) = \assmnta_i(x)$ holds for every $x \in \vx$. 
    \label{appeq:lem1proof2}
\end{enumerate}
For the base case $i=0$, we extend $\assmnta$ to an assignment $\assmnta_0$ by setting all variables of $\va$ to false. Then  $\assmnta_0$ trivially satisfies $\formf_0 = \emptyset$, and $\assmntone(x) = \assmnta_0(x)$ for every $x \in \vx$ since we only change the assignment of $\va$.

We proceed with the induction step. 
By the induction hypothesis, we can extend $\assmnta$ to an assignment $\assmnta_i$ such that 
$\assmnta_i$ satisfies $\formf_i$ and $\assmntone(x) = \assmnta_i(x)$ holds for every $x \in \vx$. 

If $\assmnta_i$ also happens to satisfy $\constrc_{i+1}$, we set $\assmnta_{i+1} = \assmnta_i$. Then $\assmnta_{i+1}$ indeed satisfies $\formf_{i+1} = \formf_i \cup \{\constrc_{i+1}\}$ and we have $\assmntone(x) = \assmnta_i(x) = \assmnta_{i+1}(x)$ for every $x \in \vx$.

    If $\assmnta_i$ does not satisfy $\constrc_{i+1}$, we set $\assmnta_{i+1} = \assmnta_i \circ \witness_{i+1}$. Since then $\assmnta_i$ satisfies $\bigcup_{j=1}^{i} \{\constrc_j\}  \cup \{\lnot \constrc_{i+1}\}$, Equation~\eqref{appeq:redspec} then implies that $\assmnta_i$ satisfies $\subst{\formf_{i+1}\!}{\witness_{i+1}} = \bigcup_{j=1}^{i+1} \{\subst{\constrc_j}{\witness_{i+1}}\}$, which in turn implies that $\assmnta_{i+1} = \assmnta_i \circ \witness_{i+1}$ satisfies $\formf_{i+1}$. Since  $\supp(\witness_{i+1}) \subseteq \varset$, we have $\witness_{i+1}(x) = x$ for every $x \in \vx$, so $\assmntone(x) = \assmnta_i(x) = \assmnta_i(\witness_{i+1}(x)) = (\assmnta_i \circ \witness_{i+1})(x) = \assmnta_{i+1}(x)$ for every $x \in \vx$.
    This completes the induction. 
    
    Since $\formf_n = \specification$ holds by definition, the claim follows.
\end{proof}

\subsection{Orders with Auxiliary Variables} \label{appsubsec:orderswithauxvars}
Next, we explain how two pseudo-Boolean formulas $\ord(\vu, \vv, \va)$ and $\spec(\vu, \vv, \va)$, together with 
the two disjoint \lists of variables $\vz$ and $\va$, define a preorder.

\begin{definition} \label{appdef:induced-preorder}
Let $\vu$ and $\vv$ be disjoint \lists of variables of size~$\nvar$ and let 
$\va$ be a \listt of auxiliary variables. 
Let $\ord(\vu, \vv, \va)$ and $\spec(\vu, \vv, \va)$ be two pseudo-Boolean formulas such that $\spec$ is a specification over $\va$. 

Then we define the relation $\preceq$  over the domain of total assignments to a 
\listt of variables $\vz$ of size~$\nvar$ as follows:
For assignments $\assmnta, \assmntb$ we let $\assmnta \preceq \assmntb$ hold, 
if and only if there exists an assignment $\assmntp$ to the variables $\va$, such that 
\begin{equation*}
\spec(\vz\uh_{\assmnta},\vz\uh_{\assmntb}, \va\uh_{\assmntp}) \land
\ord(\vz\uh_{\assmnta},\vz\uh_{\assmntb}, \va\uh_{\assmntp})
\end{equation*}
evaluates to true.
\end{definition}

To ensure that $\ord$ and $\spec$ 
actually define a preorder,
we require cutting planes proofs that
show reflexivity, \ie 
$\emptyset \derives \assmnta\preceq\assmnta$, 
and transitivity, \ie 
$\assmnta\preceq\assmntb \land 
\assmntb\preceq\assmntc\derives 
\assmnta\preceq \assmntc$. 
To write these proof obligations using the cutting planes proof system, 
which cannot handle an existentially quantified conclusion,
we can use the specification as a premise. 
The specification premise essentially tells us which auxiliary variables 
the existential quantifier should pick. 
In particular, for reflexivity, the proof obligation is 
\begin{equation}
\spec(\vx, \vx, \va) \vdash{} \ord(\vx, \vx, \va) \eqperiod
  \label{appeq:preorder1}
\end{equation}
For transitivity, the proof obligation is
\begin{equation}
\begin{split}
\spec(\vx, \vy, \va) &\cup \ord(\vx, \vy, \va) \cup \spec(\vy, \vz, \vb) \\ &\cup 
  \ord(\vy, \vz, \vb) \cup \spec(\vx, \vz, \vc)   \vdash{} \ord(\vx, \vz, \vc) \eqperiod
  \label{appeq:preorder2}
\end{split}
\end{equation}

Intuitively, \eqref{appeq:preorder2} says that if the circuits defining the auxiliary variables are correctly evaluated, which is encoded by the premises $\spec(\vx, \vy, \va) \cup \spec(\vy, \vz, \vb) \cup \spec(\vx, \vz, \vc)$, then transitivity should hold, \ie 
$\ord(\vx, \vy, \va) \cup 
  \ord(\vy, \vz, \vb)  \vdash{} \ord(\vx, \vz, \vc)$. 
However, if the auxiliary variables are not correctly set, then no claims are made.

These proof obligations 
ensure that $\preceq$ is a preorder:

\begin{lemma} If $\ord$ and $\spec$ satisfy Equations~\eqref{appeq:preorder1} and \eqref{appeq:preorder2}, then 
    $\preceq$ as defined by $\ord$ and $\spec$ is a preorder.
\end{lemma}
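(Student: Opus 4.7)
The plan is to verify reflexivity and transitivity of $\preceq$ directly from Definition~\ref{appdef:induced-preorder}, feeding the two proof obligations \eqref{appeq:preorder1} and \eqref{appeq:preorder2} with witnesses to the auxiliary variables produced by Lemma~\ref{applem:spec_behaves_nicely}. The conceptual point is that the existential quantifier in the definition of $\preceq$ is instantiated using the fact that a specification can always be completed from an assignment to its non-auxiliary variables.

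For \emph{reflexivity}, I would fix an arbitrary total assignment $\assmnta$ to $\vz$ and apply Lemma~\ref{applem:spec_behaves_nicely} to the specification $\spec(\vx,\vx,\va)$ (viewed as a specification over $\va$) with the input assignment $\vx \mapsto \vz\uh_\assmnta$. This yields an assignment $\assmntp$ to $\va$ making $\spec(\vz\uh_\assmnta,\vz\uh_\assmnta,\va\uh_\assmntp)$ true. The reflexivity obligation \eqref{appeq:preorder1} then guarantees that $\ord(\vz\uh_\assmnta,\vz\uh_\assmnta,\va\uh_\assmntp)$ also evaluates to true (cutting planes derivations are sound with respect to total assignments). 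By Definition~\ref{appdef:induced-preorder} this gives $\assmnta\preceq\assmnta$.

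For \emph{transitivity}, I would suppose $\assmnta\preceq\assmntb$ and $\assmntb\preceq\assmntc$ and unpack the definitions to get assignments $\assmntp_1,\assmntp_2$ to $\va$ such that $\spec$ and $\ord$ both hold on the pairs $(\vz\uh_\assmnta,\vz\uh_\assmntb)$ with $\assmntp_1$ and $(\vz\uh_\assmntb,\vz\uh_\assmntc)$ with $\assmntp_2$. A third application of Lemma~\ref{applem:spec_behaves_nicely} produces an assignment $\assmntp_3$ to $\va$ satisfying $\spec(\vz\uh_\assmnta,\vz\uh_\assmntc,\va\uh_{\assmntp_3})$. Substituting into the transitivity obligation \eqref{appeq:preorder2} with the mapping $\vx\mapsto\vz\uh_\assmnta$, $\vy\mapsto\vz\uh_\assmntb$, $\vz\mapsto\vz\uh_\assmntc$ and $\va,\vb,\vc\mapsto \va\uh_{\assmntp_1},\va\uh_{\assmntp_2},\va\uh_{\assmntp_3}$ (using fresh copies of $\va$ in the statement of \eqref{appeq:preorder2}, which is legitimate since the role of $\va,\vb,\vc$ there is simply to provide three independent copies of the auxiliary variables), all five premises of \eqref{appeq:preorder2} are satisfied, so $\ord(\vz\uh_\assmnta,\vz\uh_\assmntc,\va\uh_{\assmntp_3})$ holds. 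This, together with $\spec(\vz\uh_\assmnta,\vz\uh_\assmntc,\va\uh_{\assmntp_3})$, is exactly what Definition~\ref{appdef:induced-preorder} requires for $\assmnta\preceq\assmntc$.

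The only subtle step is the bookkeeping in the transitivity argument: one must be careful that the auxiliary variable lists $\va,\vb,\vc$ in \eqref{appeq:preorder2} are treated as three disjoint copies of the same auxiliary domain so that the three witness assignments $\assmntp_1,\assmntp_2,\assmntp_3$ can be assembled independently, and that the implicit assumption that these variables are disjoint from $\vz$ (ensured by the invariants on configurations) lets the substitution go through without clashing with the assignments $\assmnta,\assmntb,\assmntc$. Once this is set up correctly, the reasoning is purely semantic soundness of cutting planes, with no further difficulty.
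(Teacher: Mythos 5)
Your proposal is correct and follows essentially the same argument as the paper's own proof in Appendix~B: use Lemma~\ref{applem:spec_behaves_nicely} to instantiate the existential in Definition~\ref{appdef:induced-preorder} and then appeal to soundness of cutting planes applied to the proof obligations~\eqref{appeq:preorder1} and~\eqref{appeq:preorder2}. The bookkeeping point you flag about $\va,\vb,\vc$ being three disjoint copies of the auxiliary domain is also consistent with how the paper sets up the transitivity obligation.
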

\begin{proof}
We need to show that $\preceq$ is reflexive and transitive.

\textit{(Reflexivity.)} Let $\assmnta$ be a total assignment to the variables $\vz$. We have to show that $\assmnta \preceq \assmnta$ holds, that is, 
there is an assignment $\assmntp$ to the variables $\va$ such that 
\begin{equation}
    \spec(\vz\uh_\assmnta, \vz\uh_\assmnta, \va\uh_\assmntp) \wedge \ord(\vz\uh_\assmnta, \vz\uh_\assmnta, \va\uh_\assmntp)
\end{equation}
is satisfied. From Lemma~\ref{lem:spec_behaves_nicely}, it follows that we can choose an extension $\assmntp$ to $\assmnta$, such that $\spec(\vz\uh_\assmnta, \vz\uh_\assmnta, \va\uh_\assmntp)$ is satisfied.
    Since this suffices to satisfy the preconditions of Equation~\ref{appeq:preorder1}, it follows that also $\ord(\vz\uh_\assmnta, \vz\uh_\assmnta, \va\uh_\assmntp)$ is satisfied, which suffices to conclude that $\assmnta \preceq \assmnta$ holds.

\textit{(Transitivity.)}
Let $\assmnta, \assmntb, \assmntc$ be total assignments to the variables $\vz$. We have to show that if $\assmnta \preceq \assmntb$ and $\assmntb \preceq \assmntc$ hold, then $\assmnta \preceq \assmntc$ holds.
That is we can assume that there are assignments $\assmntp_1, \assmntp_2$ to $\va$ such that 
\begin{equation} 
\begin{split}   
    \spec&(\vz\uh_\assmnta, \vz\uh_\assmntb, \va\uh_{\assmntp_1}) \cup 
    \ord(\vz\uh_\assmnta, \vz\uh_\assmntb, \va\uh_{\assmntp_1})\\ 
    &\cup \spec(\vz\uh_\assmntb, \vz\uh_\assmntc, \va\uh_{\assmntp_2}) \cup 
    \ord(\vz\uh_\assmntb, \vz\uh_\assmntc, \va\uh_{\assmntp_2})
\end{split}
\end{equation}
is satisfied.
From Lemma~\ref{lem:spec_behaves_nicely}, it follows that we can choose an extension $\assmntp_3$,
such that $\spec(\vz\uh_\assmnta, \vz\uh_\assmntc, \va\uh_{\assmntp_3})$ is satisfied.
Since this suffices to satisfy the preconditions of Equation~\eqref{appeq:preorder2}, it follows that also $\ord(\vz\uh_\assmnta, \vz\uh_\assmntc, \va\uh_{\assmntp_3})$ is satisfied, from which we conclude that $\assmnta \preceq \assmntc$ holds.
\end{proof}

\subsection{Validity} \label{appsec:validity}

We extend the configurations of the proof system to $\conf$,
where, as explained in \citet{BGMN23Dominance}, $v$ denotes the current upper bound on the objective $f$. 
In particular, we extend the notion of \emph{weak-$(\formf, f)$-validity} 
from \citet{BGMN23Dominance} to our new configurations. 
For this, we first define the relation $\preceq_f$ as follows:
\begin{equation}
\assmnta \preceq_f \assmntb \quad \text{iff} \quad \assmnta \preceq \assmntb \; \wedge \; f\uh_\assmnta \,\leq\, f\uh_\assmntb\eqperiod
\end{equation}
We define the strict order $\prec_f$ by defining $\assmnta \prec_f \assmntb$ to mean that both $\assmnta \preceq_f \assmntb$ and $\assmntb \not\preceq_f \assmnta$ hold.

\begin{definition} \label{appdef:validity}
  A configuration $\conf{}$ is \emph{weakly-$(\formf, f)$-valid} if the following conditions hold:
  \begin{enumerate}
    \item 
        For every $v' < v$, it holds that if
        $F \cup \{f \leq v'\}$
        is satisfiable, then $\core \cup \{f \leq v'\}$
        is satisfiable.
    \item 
        For every total assignment~$\assmnta$ satisfying the constraints
        $\core \cup \{f \leq v-1\}$, there exists a total assignment
        $\assmntaprime \preceq_f \assmnta$ satisfying $\core \cup \der \cup \{f \leq v-1\}$. 
  \end{enumerate}
  \label{appdef:valid}
\end{definition}
In the following, we furthermore assume that
for any configuration $\conf{}$, the following hold:

\begin{enumerate}
\item $\ord$ and $\spec$ refer to formulas for which Equations~\eqref{appeq:preorder1} and~\eqref{appeq:preorder2} have been successfully proven.
\item $\spec$ is a specification over $\va$.
\item The variables $\va$ only occur in $\ord$ and $\spec$, 
      and these variables are disjoint %
      from $\vz$. 
\end{enumerate}

These invariants only concern $\ord$, $\spec$, $\va$ and $\vz$. Whenever the order is changed, it is ensured that these invariants hold, and other rules do not change these components of the configuration.

Observe that due to these invariants, satisfying assignments for $\core \cup \der$ 
do not need to assign the 
variables $\va$. 
In the following, we assume that such assignments are indeed defined only over the domain 
$\var(\core \cup \der)$.

\subsection{Dominance-Based Strengthening Rule}

As in the \original system, the dominance rule allows adding a constraint $\constrc$ to the derived set  using witness $\witness$ if from the premises $\core \cup \der \cup \{\neg \constrc\}$ we can derive $\core \uh_\witness$ and show that $\assmnta \circ \witness \prec \assmnta$ holds for all assignments $\assmnta$ satisfying $\core \cup \der \cup \{\neg \constrc\}$. To show  $\assmnta  \circ \witness \prec \assmnta$, we separately show that  $\assmnta  \circ \witness \preceq \assmnta$ and $\assmnta \not\preceq \assmnta \circ \witness$. To show that  $\assmnta  \circ \witness \preceq \assmnta$, we have to show that $\ord(\vz\uh_\witness, \vz, \va)$, assuming that the circuit defining the auxiliary variables $\va$ has been evaluated correctly, which is encoded by the specification $\spec(\vz\uh_\witness, \vz, \va)$.  This leads to the proof obligation
\begin{equation}
\begin{split}
\core \cup \der &\cup \{\neg \constrc\}  \cup \{f \leq v-1\} \\ 
 &\cup \spec(\vz\uh_\witness, \vz, \va) \vdash \ord(\vz\uh_\witness, \vz, \va)\eqperiod
\end{split}
\end{equation}
To show that $\assmnta \not\preceq \assmnta \circ \witness$, we have to show that $\neg\ord(\vz, \vz\uh_\witness, \va)$ assuming $ \spec(\vz, \vz\uh_\witness,\va)$. However, since  $\neg\ord(\vz, \vz\uh_\witness, \va)$ is not necessarily a pseudo-Boolean formula (due to the negation), we instead show that we can derive contradiction from $\ord(\vz, \vz\uh_\witness, \va)$, leading to the proof obligation:
\begin{equation}
\begin{split}
\core \cup \der &\cup \{\neg \constrc\}  \cup \{f \leq v-1\} \\ 
 &\cup \spec(\vz, \vz\uh_\witness,\va) \cup \ord(\vz, \vz\uh_\witness, \va) \vdash \bot \eqperiod
\end{split}
\end{equation}
The following lemma shows that these proof obligations indeed imply $\assmnta \circ\witness \preceq \assmnta$ and $\assmnta \not\preceq \assmnta\circ\witness$, respectively: 
 
\begin{lemma} 
Let $\formg$ be a formula and $\witness$ a witness with 
$\supp(\witness) \subseteq \var(\formg)$. 
Furthermore, let $\va \cap \var(\formg) = \emptyset$ and $\spec$ be a specification over $\va$.
              Also, let $\ord$ and $\spec$ define a pre-order $\preceq$. Then the following hold:
\begin{enumerate}
    \item If $\formg \cup \spec(\vz\uh_\witness, \vz, \va) 
    \vdash \{f\uh_\witness \leq f\} \cup \ord(\vz\uh_\witness, \vz, \va)$, 
    then for each assignment $\assmnta$ satisfying $\formg$, 
    $\assmnta \circ\witness \preceq_f \assmnta$ holds.
    \item If $\formg \cup \spec(\vz, \vz\uh_\witness, \va) 
    \cup \ord(\vz, \vz\uh_\witness, \va) \vdash \bot$
    holds, then for each assignment~$\assmnta$ satisfying~$\formg$, 
    $\assmnta \not\preceq_f \assmnta\circ\witness$ holds. 
\end{enumerate} 
\label{applem:order_reasonable_both}
\end{lemma}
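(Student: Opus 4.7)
The two parts are essentially dual, so I would prove them in parallel, first doing the bookkeeping on substitutions and then invoking Lemma~\ref{applem:spec_behaves_nicely} to supply the assignment to $\va$ required by Definition~\ref{appdef:induced-preorder}.

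\textbf{Part 1.} Fix an assignment $\assmnta$ satisfying $\formg$. The goal is $\assmnta\circ\witness \preceq_f \assmnta$, i.e.\ (a) there exists an assignment $\assmntp$ to $\va$ making $\spec(\vz\uh_{\assmnta\circ\witness},\vz\uh_\assmnta,\va\uh_\assmntp)\wedge \ord(\vz\uh_{\assmnta\circ\witness},\vz\uh_\assmnta,\va\uh_\assmntp)$ true, and (b) $f\uh_{\assmnta\circ\witness}\leq f\uh_\assmnta$. First I would record the bookkeeping identity $\vz\uh_{\assmnta\circ\witness} = (\vz\uh_\witness)\uh_\assmnta$, which uses that $\img(\witness)\cap\va=\emptyset$ so substituting $\witness$ into $\vz$ produces literals that $\assmnta$ can evaluate. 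Since $\spec(\vz\uh_\witness,\vz,\va)$ is a specification over $\va$ in the variables $\vz\uh_\witness\cup\vz$, Lemma~\ref{applem:spec_behaves_nicely} lets me extend $\assmnta$ (restricted to the relevant $\vz$-variables) by an assignment $\assmntp$ on $\va$ so that the extended assignment $\assmnta\cup\assmntp$ satisfies $\spec(\vz\uh_\witness,\vz,\va)$. Since $\assmnta$ already satisfies $\formg$ and $\va\cap\var(\formg)=\emptyset$, the extension still satisfies $\formg$. Applying the hypothesis $\formg\cup\spec(\vz\uh_\witness,\vz,\va)\vdash\{f\uh_\witness\leq f\}\cup\ord(\vz\uh_\witness,\vz,\va)$ under $\assmnta\cup\assmntp$ yields both $\ord(\vz\uh_\witness,\vz,\va)$ (which, after evaluating, is precisely the $\ord$-half of the preorder condition) and $f\uh_\witness\leq f$ (which is $f\uh_{\assmnta\circ\witness}\leq f\uh_\assmnta$). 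Together these give $\assmnta\circ\witness\preceq_f\assmnta$.

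\textbf{Part 2.} Again fix $\assmnta$ satisfying $\formg$, and argue by contradiction. Suppose $\assmnta\preceq_f\assmnta\circ\witness$; a fortiori $\assmnta\preceq\assmnta\circ\witness$, so Definition~\ref{appdef:induced-preorder} supplies some $\assmntp$ on $\va$ with $\spec(\vz,\vz\uh_\witness,\va)$ and $\ord(\vz,\vz\uh_\witness,\va)$ both satisfied by $\assmnta\cup\assmntp$. Using $\va\cap\var(\formg)=\emptyset$ the extension still satisfies $\formg$, so the hypothesis $\formg\cup\spec(\vz,\vz\uh_\witness,\va)\cup\ord(\vz,\vz\uh_\witness,\va)\vdash\bot$ is violated, a contradiction. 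Hence $\assmnta\not\preceq\assmnta\circ\witness$, and by definition of $\preceq_f$ this already implies $\assmnta\not\preceq_f\assmnta\circ\witness$.

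\textbf{Main obstacle.} The technical subtleties are all in the substitution bookkeeping: making sure the various disjointness hypotheses ($\supp(\witness)\subseteq\var(\formg)$, $\img(\witness)\cap\va=\emptyset$, $\va\cap\var(\formg)=\emptyset$, $\va$ disjoint from $\vz$) interact correctly so that (i) applying $\witness$ inside a term with $\va$-variables leaves those variables untouched, (ii) the extension of $\assmnta$ to $\va$ still satisfies $\formg$, and (iii) the assignment produced by Lemma~\ref{applem:spec_behaves_nicely} evaluates $\spec(\vz\uh_\witness,\vz,\va)$ in the right frame. Once these identifications are laid out explicitly, both implications reduce to a single invocation of Lemma~\ref{applem:spec_behaves_nicely} plus one application of the derivation hypothesis; there is no further proof-theoretic content.
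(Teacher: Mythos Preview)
Your proposal is correct and follows essentially the same route as the paper: in Part~1 you invoke Lemma~\ref{applem:spec_behaves_nicely} to produce the assignment $\assmntp$ on $\va$ satisfying the specification, then apply the derivation hypothesis; in Part~2 you argue by contradiction, extracting $\assmntp$ directly from Definition~\ref{appdef:induced-preorder} and again applying the hypothesis. One small quibble: you cite $\img(\witness)\cap\va=\emptyset$ as a hypothesis, but it is not listed in the lemma statement---the relevant disjointness facts you actually need are $\va\cap\var(\formg)=\emptyset$ and $\va$ disjoint from $\vz$ (a standing configuration invariant), which together with $\supp(\witness)\subseteq\var(\formg)$ suffice for the bookkeeping.
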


\begin{proof}
1. Let $\assmnta$ be an assignment satisfying $\formg$. 
By Lemma~\ref{lem:spec_behaves_nicely}, we can extend $\assmnta$ to $\assmntp$ 
such that $\spec(\vz\uh_{\assmnta \circ \witness}, \vz \uh_{\assmnta}, \va \uh_{\assmntp})$
is satisfied. This means that $\assmntp$ satisfies the preconditions of 
\begin{equation}
\formg \cup \spec(\vz\uh_\witness, \vz, \va)  \vdash \{f\uh_\witness \leq f\} \cup \ord(\vz\uh_\witness, \vz, \va)\eqcomma
\end{equation}
so also $\ord(\vz\uh_{\assmntp \circ \witness}, \vz \uh_{\assmntp}, \va \uh_{\assmntp})$ holds. 
Since $\assmnta$ and $\assmntp$ coincide on~$\vz$ (which holds since $\va$ and $\vz$ are disjoint), it follows that
$\ord(\vz\uh_{\assmnta \circ \witness}, \vz \uh_{\assmnta}, \va \uh_{\assmntp})$ holds,
which is sufficient to conclude that $\assmnta\circ\witness \preceq \assmnta$ holds. In addition,  $\assmntp$ satisfies $\{f\uh_\witness \leq f\}$. Since $\assmntp$ and $\assmnta$ coincide on the variables of $f$, this implies that $f\uh_{\assmnta \circ \witness} \leq f_\assmnta$, so together with $\assmnta\circ\witness \preceq \assmnta$ we conclude that $\assmnta\circ\witness \preceq_f \assmnta$ holds, as required.

2. Let $\assmnta$ be an assignment satisfying $\formg$. 
We argue by contradiction. Suppose that $\assmnta \preceq \assmnta\circ\witness$ holds. 
Then there exists an assignment $\assmntp$ to $\va$ such that
\[\spec(\vz_{\assmnta}, \vz\uh_{\assmnta \circ \witness}, \va\uh_{\assmntp}) 
    \cup \ord(\vz_{\assmnta}, \vz\uh_{\assmnta \circ \witness}, \va\uh_{\assmntp})\eqperiod\]
Define  $\assmntaprime$ to coincide with $\assmntp$ on $\va$, and with $\assmnta$ otherwise. 
Since $\va$ and $\vz$ are disjoint, $\assmntaprime$ then coincides with $\assmnta$ on $\vz$. 
Since $\va \cap \var(\formg) = \emptyset$,  $\assmntaprime$ also coincides with $\assmnta$ on $\var(\formg)$. 
Hence, $\assmntaprime$ satisfies $\formg$ and 
\[\spec(\vz_{\assmntaprime}, \vz\uh_{\assmntaprime \circ \witness}, \va\uh_{\assmntaprime}) 
    \cup \ord(\vz_{\assmntaprime}, \vz\uh_{\assmntaprime \circ \witness}, \va\uh_{\assmntaprime})\eqcomma\]
so $\assmntaprime$ satisfies the preconditions of $\formg \cup \spec(\vz, \vz\uh_\witness, \va) 
    \cup \ord(\vz, \vz\uh_\witness, \va) \vdash \bot$, which is clearly contradictory. Hence, we have $\assmnta \not\preceq \assmnta\circ\witness$, which implies $\assmnta \not\preceq_f \assmnta\circ\witness$.
\end{proof}

Hence, we define the dominance rule as follows:

\begin{definition}[Dominance-based strengthening with specification]
    We can transition from the configuration $\conf$ to $(\core, \der \cup \{\constrc\}, \ord, \spec, \vz, \va, v)$ using the dominance rule 
    if the following conditions are met: 
    \begin{enumerate}
        \item The constraint $\constrc$ does not contain variables in $\va$.
        \item There is a witness $\witness$ for which $\img(\witness) \cap \va = \emptyset$ holds.
        \item We have cutting planes proofs for the following:
\end{enumerate}
\begin{equation}
\begin{split}
\core \cup \der &\cup \{\neg \constrc\}  \cup \{f \leq v-1\} \cup \spec(\vz\uh_\witness, \vz, \va) \\ 
     & \vdash \core\uh_\witness \cup \{f\uh_\witness \leq f\} \cup  \ord(\vz\uh_\witness, \vz, \va)\eqcomma \label{appeq:dom1} \\
\end{split}
\end{equation}
\begin{equation}
\begin{split}
\core \cup \der &\cup \{\neg \constrc\}  \cup \{f \leq v-1\} \\ 
 &\cup \spec(\vz, \vz\uh_\witness,\va) \cup \ord(\vz, \vz\uh_\witness, \va) \vdash \bot \eqperiod \label{appeq:dom2}
\end{split}
\end{equation}
\end{definition}

Using Lemma~\ref{applem:order_reasonable_both}, we can show that the dominance rule preserves the invariants required by weak validity: 

\begin{lemma}
    If we can transition from $\conf$ to 
    $(\core, \der \cup \{\constrc\}, \ord, \spec, \vz, \va, v)$ 
    by the dominance rule, and the configuration $\conf$ is weakly-$(\formf, f)$-valid, 
    then the configuration~$(\core, \der \cup \{\constrc\}, \ord, \spec, \vz, \va, v)$ is also weakly-$(\formf, f)$-valid.
\end{lemma}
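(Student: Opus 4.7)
To show that $(\core, \der \cup \{\constrc\}, \ord, \spec, \vz, \va, v)$ is weakly-$(\formf, f)$-valid, I verify both clauses of Definition~\ref{appdef:validity}. Clause~1 (that satisfiability of $\formf \cup \{f \leq v'\}$ implies satisfiability of $\core \cup \{f \leq v'\}$ for every $v' < v$) is immediate, since $\core$ is unchanged and the old configuration is already assumed weakly valid. All the work therefore lies in clause~2: given a total assignment $\assmnta$ over $\var(\core \cup \der)$ satisfying $\core \cup \{f \leq v-1\}$, I must exhibit an assignment $\assmntaprime \preceq_f \assmnta$ satisfying $\core \cup \der \cup \{\constrc\} \cup \{f \leq v-1\}$.

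\textbf{The iteration.} Following the dominance-rule argument of \citet{BGMN23Dominance}, the plan is to construct a $\prec_f$-decreasing chain $\assmntb_0, \assmntb_1, \dots{}$ that must eventually hit $\constrc$. Feeding $\assmnta$ into weak validity of the old configuration produces $\assmntb_0 \preceq_f \assmnta$ satisfying $\core \cup \der \cup \{f \leq v-1\}$. If some $\assmntb_i$ satisfies $\constrc$, I set $\assmntaprime = \assmntb_i$ and stop. Otherwise $\assmntb_i$ satisfies the hypothesis $\formg := \core \cup \der \cup \{\neg \constrc\} \cup \{f \leq v-1\}$ of the two proof obligations \eqref{appeq:dom1} and \eqref{appeq:dom2}. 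Part~1 of Lemma~\ref{applem:order_reasonable_both} applied to \eqref{appeq:dom1} gives $\assmntb_i \circ \witness \preceq_f \assmntb_i$, and the same derivation also shows that $\assmntb_i \circ \witness$ satisfies $\core$ and that $f\uh_{\assmntb_i \circ \witness} \leq v-1$. Thus $\assmntb_i \circ \witness$ is a legal input to weak validity of the old configuration, which yields some $\assmntb_{i+1} \preceq_f \assmntb_i \circ \witness$ satisfying $\core \cup \der \cup \{f \leq v-1\}$, and I iterate.

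\textbf{Termination and the main obstacle.} Strict decrease along the chain comes from the second proof obligation: part~2 of Lemma~\ref{applem:order_reasonable_both} applied to \eqref{appeq:dom2} gives $\assmntb_i \not\preceq_f \assmntb_i \circ \witness$, so $\assmntb_i \circ \witness \prec_f \assmntb_i$ and hence $\assmntb_{i+1} \prec_f \assmntb_i$. Since total assignments over $\var(\core \cup \der)$ form a finite set, no infinite strictly decreasing chain exists, so the process must terminate at some $\assmntb_k$ satisfying $\constrc$, which serves as $\assmntaprime$. The main subtlety I expect is bookkeeping around the auxiliary variables: to apply Lemma~\ref{applem:order_reasonable_both} at each step, I must check that $\supp(\witness) \subseteq \var(\formg)$ and $\va \cap \var(\formg) = \emptyset$, and to keep feeding assignments back into weak validity I must check that $\assmntb_i \circ \witness$ remains a total assignment on $\var(\core \cup \der)$. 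Both properties follow from the standing invariants that $\va$ occurs only in $\ord$ and $\spec$, together with the side conditions on the dominance rule that $\constrc$ contains no $\va$-variable and that $\img(\witness) \cap \va = \emptyset$. Once these are verified, the argument is a direct lift of the \original-system proof, with Lemma~\ref{applem:order_reasonable_both} cleanly encapsulating the new auxiliary-variable handling.
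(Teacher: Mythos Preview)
Your proposal is correct and uses the same core ingredients as the paper: invoke weak validity of the old configuration, apply Lemma~\ref{applem:order_reasonable_both} to the formula $\formg = \core \cup \der \cup \{\neg \constrc\} \cup \{f \leq v-1\}$ to obtain a strict $\prec_f$-decrease, and appeal to finiteness. The only difference is organizational: you build an explicit descending chain $\assmntb_0 \succ_f \assmntb_1 \succ_f \cdots$ and argue termination, whereas the paper argues by contradiction, taking a $\prec_f$-minimal element of the set of ``bad'' assignments and showing in a single step that it cannot be minimal. These are standard equivalent formulations of the same well-foundedness argument; the paper's version avoids having to verify that transitivity carries $\prec_f$ across the composite step $\assmntb_{i+1} \preceq_f \assmntb_i \circ \witness \prec_f \assmntb_i$, but your version handles this correctly too.
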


\begin{proof}
    Since~$\formf$, $f$, $\core$, and $v$ are not affected, item 1. in Definition~\ref{appdef:valid} still holds.
    Hence, it remains to show that item 2. holds, \ie that for every total assignment~$\assmnta$ that 
    satisfies~$\core \cup \{f \leq v-1\}$ there exists a total assignment $\assmntaprime \preceq_f \assmnta$ 
    that satisfies~$\core \cup \der \cup \{\constrc\} \cup \{f \leq v-1\}$.

    Assume towards a contradiction that it does not hold. 
    Let $S$ denote the set of total assignments $\assmnta$ that 
    \begin{enumerate}
        \item satisfy $\core \cup \{f \leq v-1\}$ and 
        \item admit no $\assmntaprime \preceq_f \assmnta$ satisfying 
        $\core \cup \der \cup \{\constrc\} \cup \{f \leq v-1\}$.
    \end{enumerate}
    By our assumption, $S$ is non-empty.
	Since $\prec_f$ is a strict order and $S$ is finite and non-empty, 
	$S$ contains a $\prec_f$-minimal element. 
	Let $\assmnta$ be some $\prec_f$-minimal assignment in $S$. 
	Since $\conf$ is weakly-$(F, f)$-valid, 
	there exists some $\assmnta_1 \preceq_f \assmnta$ that satisfies 
	$\core \cup \der \cup \{f \leq v-1\}$.
	We know that $\assmnta_1$ cannot satisfy $\constrc$ since $\assmnta \in S$. 
	Hence, $\assmnta_1$ satisfies 
	\begin{equation*}
	\formg = \core \cup \der \cup \{\neg \constrc\} \cup \{f \leq v-1\} \eqperiod
	\end{equation*}
	
	Since $\formg$ does not contain variables of $\va{}$, both points of Lemma~\ref{applem:order_reasonable_both}
	apply by \eqref{appeq:dom1} and \eqref{appeq:dom2}, respectively. Hence, it follows that 
	$\assmnta_1\circ\witness \preceq_f \assmnta_1$ and  
	$\assmnta_1 \not\preceq_f \assmnta_1\circ\witness$ hold, 
	which together imply that 
	$\assmnta_1\circ\witness \prec_f \assmnta_1$.
	Note that \eqref{appeq:dom1} also implies that $\assmnta_1$ satisfies $\core\uh_\witness$, 
	so $\assmnta_1\circ\witness$ satisfies $\core$. Moreover, $\assmnta_1$ satisfies 
	$f \leq v-1$ and $f\uh_\witness\, \leq f$, which together imply $f\uh_\witness\, \leq v-1$.
	Hence, $\assmnta_1\circ\witness$ satisfies $f \leq v-1$.
	
	Let  $\assmnta_2 = \assmnta_1\circ\witness$. Then $\assmnta_2$ satisfies  $\core \cup \{f \leq v-1\}$.
	Since $\assmnta_2 \prec_f \assmnta_1$ and $\assmnta_1 \preceq_f \assmnta$, we have $\assmnta_2\prec_f \assmnta$.
	Since $\assmnta$ is a $\prec_f$-minimal element of $S$, this implies that $\assmnta_2 \not\in S$. 
	Hence, since $\assmnta_2$ does satisfy $\core \cup \{f \leq v-1\}$, it follows that $\assmnta_2$ 
	does admit an $\assmntaprime \preceq_f \assmnta_2$ satisfying 
	$\core \cup \der \cup \{\constrc\} \cup \{f \leq v-1\}$.
	But it also holds that $\assmntaprime \preceq_f \assmnta_2 \prec_f \assmnta$, so $\assmntaprime \preceq_f \assmnta$. 
	This contradicts $\assmnta \in S$, finishing the proof.
\end{proof}

\subsection{Redundance-Based Strengthening Rule} \label{appssc:redrule}
We also modify the redundance rule to work in our \extended proof system.
Similarly to the dominance rule, we can use $\spec(\vz\uh_\witness, \vz, \va)$ as an 
extra premise in our proof obligations. 

\begin{definition}[Redundance-based strengthening with specification]
We can transition from $\conf$ to $(\core, \der \cup \{\constrc\}, \ord, \spec, \vz, \va, v)$ using 
the redundance rule if the following conditions are met: 
\begin{enumerate}
    \item The constraint $\constrc$ does not contain variables in $\va$.
    \item There is a witness $\witness$ for which $\img(\witness) \cap \va = \emptyset$ holds.
    \item We have cutting planes proof that the following holds:
\end{enumerate}
\begin{equation}
\begin{split}
    \core &\cup \der \cup \{\neg \constrc\} \cup \{f \leq v-1\} \cup \spec(\vz\uh_\witness, \vz, \va)  \\
          &\vdash (\core \union \der \union \{ \constrc \})\uh_{\witness} \cup \{f\uh_\witness \leq f\} 
           \cup \ord(\vz\uh_\witness, \vz, \va) \eqperiod \label{appeq:red}
\end{split}
\end{equation}
\end{definition}

The redundance rule preserves weak validity: 

\begin{lemma}\!\!
    \label{applem:red_preserves_validity}
    \mbox{If we can transition from $\conf$} to 
    $(\core, \der \cup \{\constrc\}, \ord, \spec, \vz, \va, v)$
    by the redundance rule, and the configuration $\conf$ is weakly-$(\formf, f)$-valid, 
    then the configuration~$(\core, \der \cup \{\constrc\}, \ord, \spec, \vz, \va, v)$ 
    is also weakly-$(\formf, f)$-valid.
\end{lemma}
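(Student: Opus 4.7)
The plan is to mimic the structure of the corresponding proof for the dominance rule given just above, but with a significantly simpler argument, since the redundance rule does not require the witnessed configuration to be strictly smaller and so no well-founded induction on $\prec_f$ is needed. Condition~1 of weak-$(\formf, f)$-validity is immediate because the rule does not touch $\formf$, $f$, $\core$, or $v$. All work therefore goes into condition~2: given any total assignment $\assmnta$ satisfying $\core \cup \{f \leq v-1\}$, we must find some $\assmntaprime \preceq_f \assmnta$ satisfying $\core \cup \der \cup \{\constrc\} \cup \{f \leq v-1\}$.

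First I would invoke the weak validity of $\conf$ to obtain an assignment $\assmnta_1 \preceq_f \assmnta$ satisfying $\core \cup \der \cup \{f \leq v-1\}$. If $\assmnta_1$ already satisfies $\constrc$, we simply take $\assmntaprime = \assmnta_1$. Otherwise $\assmnta_1$ satisfies $\core \cup \der \cup \{\neg\constrc\} \cup \{f \leq v-1\}$, and I would use Lemma~\ref{applem:spec_behaves_nicely} applied to $\spec(\vz\uh_\witness, \vz, \va)$ to extend $\assmnta_1$ to an assignment $\assmntp_1$ that additionally satisfies this specification. This extension is well-defined because $\va$ is disjoint from $\vz$ and from $\img(\witness)$, so all non-auxiliary variables appearing in $\spec(\vz\uh_\witness, \vz, \va)$ are already set by $\assmnta_1$; and assigning values to $\va$ does not disturb satisfaction of $\core \cup \der \cup \{\neg\constrc\} \cup \{f \leq v-1\}$, since those formulas contain no $\va$-variables by the standing invariants.

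Now $\assmntp_1$ satisfies every premise of the redundance implication~\eqref{appeq:red}, so by the cutting planes derivation it satisfies its conclusion as well. In particular, $\assmntp_1$ satisfies $(\core \cup \der \cup \{\constrc\})\uh_\witness$, the constraint $f\uh_\witness \leq f$, and $\ord(\vz\uh_\witness, \vz, \va)$. Taking $\assmntaprime = \assmnta_1 \circ \witness$ restricted to the non-auxiliary domain, the first of these yields that $\assmntaprime$ satisfies $\core \cup \der \cup \{\constrc\}$ (using that $\assmntp_1$ and $\assmnta_1$ agree off $\va$, together with $\img(\witness) \cap \va = \emptyset$), while the second yields $f\uh_{\assmntaprime} \leq f\uh_{\assmnta_1} \leq v-1$. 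For the order component of $\assmntaprime \preceq \assmnta_1$, I would use the restriction of $\assmntp_1$ to $\va$ as the existential witness in Definition~\ref{appdef:induced-preorder}, so that both $\spec(\vz\uh_{\assmntaprime}, \vz\uh_{\assmnta_1}, \va\uh_{\assmntp_1})$ and $\ord(\vz\uh_{\assmntaprime}, \vz\uh_{\assmnta_1}, \va\uh_{\assmntp_1})$ evaluate to true; transitivity of $\preceq_f$ together with $\assmnta_1 \preceq_f \assmnta$ then closes the argument.

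The main subtlety, and the only genuine departure from the \original soundness proof, is the bookkeeping around the auxiliary variables $\va$: we must argue carefully that Lemma~\ref{applem:spec_behaves_nicely} produces an extension $\assmntp_1$ of $\assmnta_1$ that simultaneously (i) preserves satisfaction of every $\va$-free premise, (ii) satisfies the new specification premise in~\eqref{appeq:red}, and (iii) can be repackaged, by restriction to $\va$, as the existential witness required by Definition~\ref{appdef:induced-preorder}. Once this auxiliary-variable bookkeeping is in place, the remainder of the argument is a straightforward reprise of the redundance-rule soundness proof of \citet{BGMN23Dominance}.
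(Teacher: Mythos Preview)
The proposal is correct and follows essentially the same argument as the paper. The only difference is presentational: where the paper invokes part~1 of Lemma~\ref{applem:order_reasonable_both} as a black box to conclude $\assmnta_1 \circ \witness \preceq_f \assmnta_1$, you inline that lemma's proof by explicitly extending $\assmnta_1$ via Lemma~\ref{applem:spec_behaves_nicely} and reading off the $\preceq$-witness from Definition~\ref{appdef:induced-preorder}; the remaining steps---transferring the $\va$-free conclusions of~\eqref{appeq:red} back to $\assmnta_1$ and chaining $f\uh_\witness \leq f$ with $f \leq v-1$---match the paper exactly.
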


\begin{proof}
    As for the dominance rule, $\formf$, $f$, $\core$, and $v$ are not affected, 
    so item 1. in Definition~\ref{appdef:valid} still holds.
    Hence, it remains to show that item 2. holds, \ie that for every total assignment~$\assmnta$ that 
    satisfies~$\core \cup \{f \leq v-1\}$ there exists a total assignment $\assmntaprime \preceq_f \assmnta$ 
    that satisfies
    \begin{equation*}
        \core \cup \der \cup \{\constrc\} \cup \{f \leq v-1\}\eqperiod
    \end{equation*}
    Let $\assmnta$ be an assignment that satisfies~$\core \cup \{f \leq v-1\}$.
	Since $\conf$ is weakly-$(F, f)$-valid, 
	there exists some $\assmnta_1 \preceq_f \assmnta$ that satisfies 
	$\core \cup \der \cup \{f \leq v-1\}$.
	If $\assmnta_1$ also satisfies $\constrc$, then we are done. Otherwise, $\assmnta_1$ satisfies 
	\begin{equation*}
	\formg = \core \cup \der \cup \{\neg \constrc\} \cup \{f \leq v-1\} \eqperiod
	\end{equation*}
	Since $\formg$ does not contain variables of $\va{}$, 
	point 1 of Lemma~\ref{applem:order_reasonable_both}
	apply by \eqref{appeq:red}. Hence, it follows that 
	$\assmnta_1\circ\witness \preceq_f \assmnta_1$. 
	Equation \eqref{appeq:red}  also implies that $\assmnta_1$ satisfies 
	$(\core \union \der \union \{ \constrc \})\uh_{\witness}$, so 
	$\assmnta_1\circ\witness$ satisfies $\core \union \der \union \{ \constrc \}$.
	Finally, $\assmnta_1$ satisfies 
	$f \leq v-1$ and $f\uh_\witness\, \leq f$, which together imply $f\uh_\witness\, \leq v-1$.
	Hence, $\assmnta_1\circ\witness$ satisfies $f \leq v-1$.
	Then $\assmnta_1\circ\witness \preceq_f \assmnta_1 \preceq_f \assmnta$ and 
	$\assmnta_1\circ\witness$ satisfies $\core \union \der \union \{ \constrc \} \cup \{f \leq v-1\}$,
	so $\assmnta_1\circ\witness$ is an assignment showing that 
	item 2. in Definition~\ref{appdef:valid} holds for $\assmnta$.
\end{proof}

\section{Proof Logging in Satsuma} \label{app:satsumalogging}

In this appendix, we show how our proof system using auxiliary variables is 
used to log the symmetry-breaking clauses generates by \satsuma. 
This is done in two steps:
\begin{enumerate}
\item Defining the lexicographical order using auxiliary variables 
and proving its transitivity and reflexivity.
\item Deriving the symmetry-breaking clauses using the dominance rule 
and the loaded lexicographical order.
\end{enumerate}

\subsection{Defining the Lexicographical Order}
\label{ssc:satsuma_order}

We first discuss how to define the lexicographical order on sequences of 
length $\nvar$ using auxiliary variables. 
In Lemma~\ref{lem:lex_using_aux}, we provide an encoding of the 
lexicographical order using auxiliary variables, 
and show that this indeed encodes the lexicographical order. 
In Lemma~\ref{lem:lex_using_aux_PB}, we explain how 
to translate this encoding to pseudo-Boolean constraints. 
In Lemma~\ref{lem:lex_using_aux_reflexivity}, we then show that the 
reflexivity follows by reverse unit propagation (RUP), 
while Lemma~\ref{lem:lex_using_aux_trans} shows that transitivity 
can be shown using a cutting planes derivation of size $\bigoh{\nvar}$. 
Finally, this implies Theorem~\ref{thm:lexorderdef}, which states that we 
can define the lexicographical order in \veripb using a derivation 
of size $\bigoh{\nvar}$. 

Throughout this section, $\nvar \in \Nplus$ denotes the length of the sequences
over which we define lexicographical order, $\varx_i$ and $\vary_i$ are Boolean
variables, and $\auxa_i$ and $\auxd_i$ are (auxiliary) Boolean variables.

\begin{lemma} \label{lem:lex_using_aux}
Let $\varx_1, \ldots,  \varx_\nvar$ and $\vary_1, \ldots, \vary_\nvar$ be given 
sequences of Boolean variables. 
Define the Boolean variables $\auxa_1, \ldots, \auxa_{\nvar-1}$ and 
$\auxd_1, \ldots, \auxd_\nvar$ inductively by 
\begin{align}
\auxa_1 &\iff (\varx_1 \geq \vary_1)\eqcomma \label{eq:a1}\\
\auxa_{i+1} &\Leftrightarrow (\auxa_i \;\wedge\; \varx_{i+1} \geq \vary_{i+1})\eqcomma \label{eq:ai} \\
\auxd_1 &\Leftrightarrow (\vary_1 \geq \varx_1)\eqcomma \label{eq:d1} \\
\auxd_{i+1} &\Leftrightarrow (\auxd_i \;\wedge\; (\olnot{\auxa_i} \,\vee\, \vary_{i+1} \geq \varx_{i+1}))\eqperiod \label{eq:di}
\end{align}
Then $\auxd_\nvar$ holds if and only if $(\varx_1, \ldots,  \varx_\nvar) \lexorder (\vary_1, \ldots, \vary_\nvar)$.
\end{lemma}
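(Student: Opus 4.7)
The plan is to prove the lemma by simultaneous induction on $i$, strengthening the induction hypothesis to also characterise the meaning of the intermediate auxiliaries $\auxa_i$. Concretely, I will show by induction on $i$ that
\begin{enumerate}
\item $\auxa_i$ holds iff $\varx_j \geq \vary_j$ for every $j \leq i$, and
\item $\auxd_i$ holds iff $(\varx_1, \ldots, \varx_i) \lexorder (\vary_1, \ldots, \vary_i)$.
\end{enumerate}
Taking $i = \nvar$ in the second statement then immediately gives the lemma. The reason for carrying the first statement along is that the recursive definition of $\auxd_{i+1}$ refers to $\auxa_i$, so we cannot prove the second statement in isolation.

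For the base case $i = 1$, both biconditionals follow directly from~\eqref{eq:a1} and~\eqref{eq:d1}, noting that on singletons lexicographic order on Boolean variables is just the ordinary order $\vary_1 \geq \varx_1$.

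For the inductive step, the claim about $\auxa_{i+1}$ is immediate from~\eqref{eq:ai} and the induction hypothesis for $\auxa_i$. The interesting part is $\auxd_{i+1}$, which I would handle by splitting on the lexicographic relationship between the prefixes $(\varx_1, \ldots, \varx_i)$ and $(\vary_1, \ldots, \vary_i)$ into three cases. If the prefixes are equal, then both $\auxa_i$ and $\auxd_i$ hold by the induction hypothesis, so~\eqref{eq:di} reduces to $\vary_{i+1} \geq \varx_{i+1}$, which is exactly the condition for extending equality to $\lexorder$ at position $i+1$. If the prefix of $\varx$ is strictly less in lex order, then at the first differing position $j \leq i$ one has $\varx_j < \vary_j$, so $\auxa_i$ fails; meanwhile $\auxd_i$ holds, so~\eqref{eq:di} is satisfied regardless of the values at position $i+1$, matching the fact that any extension preserves the strict lex inequality. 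Finally, if the prefix of $\varx$ is strictly greater in lex order, then $\auxd_i$ fails, hence $\auxd_{i+1}$ fails, matching the fact that no extension can then make the whole sequence $\lexorder$-smaller.

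The main obstacle is being precise in the middle case: one must observe that for Boolean sequences, a prefix that is both $\leq_{\text{lex}}$-dominated and not componentwise $\geq$ must in fact be strictly lex-smaller, which justifies reading $\neg \auxa_i \wedge \auxd_i$ as a strict lex-inequality on the prefix. Once this small observation is made, the case analysis reduces to a routine verification that~\eqref{eq:di} implements the standard recursive characterisation of $\lexorder$ in terms of agreement of an initial segment followed by a single bit-wise comparison.
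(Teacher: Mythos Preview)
Your proposal is correct and follows essentially the same inductive strategy as the paper: strengthen the induction hypothesis with a semantic characterisation of the $\auxa_i$ variables and then induct on~$i$. The one notable difference is the choice of auxiliary invariant: you characterise $\auxa_i$ directly as ``$\varx_j \geq \vary_j$ for all $j \leq i$'', whereas the paper instead carries the joint invariant ``$\auxa_i \wedge \auxd_i$ holds iff the prefixes are equal''. Your invariant is the natural unrolling of the recurrence~\eqref{eq:ai} and lets you organise the inductive step as a clean three-way case split on the lex-relationship of the prefixes; the paper's invariant is slightly weaker (it only pins down $\auxa_i$ in conjunction with $\auxd_i$) and leads them to prove the four implications separately. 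Both work, and the arguments are close enough that neither buys anything substantial over the other. Your closing remark about the ``main obstacle'' is a bit overstated: with your case split on the total lex-order of prefixes, the observation you highlight is never actually needed, since in each case you compute $\auxa_i$ and $\auxd_i$ from the induction hypothesis rather than the other way around.
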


\begin{proof}
We show by induction on $i \geq 1$ that $\auxd_i$ holds if and only if 
$(\varx_1, \ldots,  \varx_i) \lexorder (\vary_1, \ldots, \vary_i)$ (for $i \leq \nvar$) 
and that $\auxd_i \wedge \auxa_i$ holds if and only if 
$(\varx_1, \ldots,  \varx_i) = (\vary_1, \ldots, \vary_i)$ (for $i \leq \nvar-1$). 
The base case is immediate from the definitions of $\auxa_1$ and~$\auxd_1$. 

We proceed with the induction step. Note that we have to show two equivalences, so four implications.

\paragraph{$\auxd_{i+1}$ implies $(\varx_1, \ldots,  \varx_{i+1}) \lexorder (\vary_1, \ldots, \vary_{i+1})$.} 
Suppose that  $\auxd_{i+1}$ holds. Then~\eqref{eq:di} implies that $\auxd_i$ 
and $\olnot{\auxa_i} \,\vee\, \vary_{i+1} \geq \varx_{i+1}$ hold. 
Since $\auxd_i$ holds, the induction hypothesis implies that 
\begin{equation}
(\varx_1, \ldots,  \varx_i) \lexorder (\vary_1, \ldots, \vary_i).\label{eq:lexi}
\end{equation}
We split cases using $\olnot{\auxa_i} \,\vee\, \vary_{i+1} \geq \varx_{i+1}$:
\begin{itemize}
\item If $\olnot{\auxa_i}$, then $\auxd_i \wedge \auxa_i$ does not hold, so the 
induction hypothesis implies $(\varx_1, \ldots,  \varx_i) \neq (\vary_1, \ldots, \vary_i)$, 
which combined with \eqref{eq:lexi} implies 
$(\varx_1, \ldots,  \varx_{i+1}) \lexorder (\vary_1, \ldots, \vary_{i+1})$. 
\item If $\vary_{i+1} \geq \varx_{i+1}$, then  \eqref{eq:lexi} also implies
$(\varx_1, \ldots,  \varx_{i+1}) \lexorder (\vary_1, \ldots, \vary_{i+1})$.
\end{itemize}
Hence, we have $(\varx_1, \ldots,  \varx_{i+1}) \lexorder (\vary_1, \ldots, \vary_{i+1})$.

\paragraph{$(\varx_1, \ldots,  \varx_{i+1}) \lexorder (\vary_1, \ldots, \vary_{i+1})$  implies $\auxd_{i+1}$.} 
Conversely, suppose that  $(\varx_1, \ldots,  \varx_{i+1}) \lexorder (\vary_1, \ldots, \vary_{i+1})$. 
Then also $(\varx_1, \ldots,  \varx_i) \lexorder (\vary_1, \ldots, \vary_i)$, so the induction hypothesis 
implies that  $\auxd_i$ holds. We consider two cases: 
\begin{itemize}
\item If $(\varx_1, \ldots,  \varx_i) \neq (\vary_1, \ldots, \vary_i)$, 
then $\auxd_i \wedge \auxa_i$ does not hold (by the induction hypothesis), 
so $\olnot{\auxa_i}$ holds (since  $\auxd_i$ holds). 
\item If $(\varx_1, \ldots,  \varx_i) = (\vary_1, \ldots, \vary_i)$, then 
$(\varx_1, \ldots,  \varx_{i+1}) \lexorder (\vary_1, \ldots, \vary_{i+1})$ 
implies $\vary_{i+1} \geq \varx_{i+1}$. 
\end{itemize}
Hence, $\olnot{\auxa_i} \,\vee\, \vary_{i+1} \geq \varx_{i+1}$ holds, 
which implies that $\auxd_{i+1}$ holds.

\paragraph{$\auxd_{i+1} \wedge \auxa_{i+1}$ implies $(\varx_1, \ldots,  \varx_{i+1}) = (\vary_1, \ldots, \vary_{i+1})$.} 
Suppose that  $\auxd_{i+1} \wedge \auxa_{i+1}$ holds. 
Then also $\auxd_i \wedge \auxa_i$ holds (by \eqref{eq:ai} and \eqref{eq:di}),
so the induction hypothesis implies that  $(\varx_1, \ldots, \varx_i) = (\vary_1, \ldots, \vary_i)$.
It remains to show that $\varx_{i+1} = \vary_{i+1}$, which we show by showing both inequalities:
\begin{itemize}
\item $\auxa_{i+1}$ implies $\varx_{i+1} \geq \vary_{i+1}$ (by \eqref{eq:ai}). 
\item $\auxd_{i+1}$ implies $\olnot{\auxa_i} \,\vee\, \vary_{i+1} \geq \varx_{i+1}$ (by \eqref{eq:di}), 
which using that $\auxa_i$ holds implies $\vary_{i+1} \geq \varx_{i+1}$. 
\end{itemize}
Hence, we conclude that $(\varx_1, \ldots,  \varx_{i+1}) = (\vary_1, \ldots, \vary_{i+1})$.  

\paragraph{$(\varx_1, \ldots,  \varx_{i+1}) = (\vary_1, \ldots, \vary_{i+1})$ implies $\auxd_{i+1} \wedge \auxa_{i+1}$.} 
Conversely, suppose that $(\varx_1, \ldots,  \varx_{i+1}) = (\vary_1, \ldots, \vary_{i+1})$. 
Then also $(\varx_1, \ldots,  \varx_i) = (\vary_1, \ldots, \vary_i)$, 
so the induction hypothesis implies that  $\auxd_i \wedge \auxa_i$  holds. 
Moreover, we have  $\varx_{i+1} = \vary_{i+1}$. 
Then $\vary_{i+1} \geq \varx_{i+1}$ together with $\auxd_i$ implies $\auxd_{i+1}$ (by \eqref{eq:di}),
while $\varx_{i+1} \geq \vary_{i+1}$ together with $\auxa_i$ implies $\auxa_{i+1}$ (by \eqref{eq:ai}). 
We conclude that $\auxd_{i+1} \wedge \auxa_{i+1}$ holds.

This completes the induction and hence the proof.
\end{proof}

The constraints given in Lemma~\ref{lem:lex_using_aux} are not written as
pseudo-Boolean constraints, but they can be expressed using pseudo-Boolean
constraints. For this, note that the constraint $r \Leftrightarrow \constrc$, where 
$r$ is a Boolean variable and $\constrc \synteq \sum_i \coeffapp_i \varx_i \geq \degapp$
is a pseudo-Boolean constraint in normalized form, can be expressed using
the pair of pseudo-Boolean constraints
$\degapp \olnot{r} + \sum_i \coeffapp_i \varx_i \geq \degapp$ and 
$( \sum_i \coeffapp_i - \degapp + 1) r + \sum_i \coeffapp_i \olnot{\varx_i} \geq  \sum_i \coeffapp_i - \degapp + 1$. 
We~call a constraint of the form  $r \Leftrightarrow \constrc$ a \emph{reification}, 
and the process of rewriting a reification to
 a pair of pseudo-Boolean constraints \emph{expanding the reification}.

\begin{lemma}  \label{lem:lex_using_aux_PB}
We can write the constraints~\eqref{eq:a1} up to~\eqref{eq:di} as follows as pseudo-Boolean constraints:
\begin{align}
\olnot{\auxa_1} + \varx_1 + \olnot{\vary_1} &\geq 1\eqcomma \label{eq:a1.1}\\
2\auxa_1 + \olnot{\varx_1} + \vary_1 &\geq 2\eqcomma \label{eq:a1.2}\\
3\olnot{\auxa_{i+1}} + 2\auxa_i + \varx_{i+1} + \olnot{\vary_{i+1}} &\geq 3\eqcomma \label{eq:ai.1}   \\
2\auxa_{i+1} + 2\olnot{\auxa_i} + \olnot{\varx_{i+1}} + \vary_{i+1} &\geq 2\eqcomma \label{eq:ai.2}   \\
\olnot{\auxd_1} + \vary_1 + \olnot{\varx_1} &\geq 1\eqcomma \label{eq:d1.1}\\
2\auxd_1 + \olnot{\vary_1} + \varx_1 &\geq 2\eqcomma \label{eq:d1.2}\\
4\olnot{\auxd_{i+1}} + 3\auxd_i + \olnot{\auxa_i} + \vary_{i+1} + \olnot{\varx_{i+1}} &\geq 4\eqcomma \label{eq:di.1} \\
4 \auxd_{i+1} + 3 \olnot{\auxd_i} + \auxa_i + \olnot{\vary_{i+1}} + \varx_{i+1} &\geq 3\eqperiod \label{eq:di.2}
\end{align}
\end{lemma}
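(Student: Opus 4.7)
The plan is to apply the reification-expansion template recalled immediately before the lemma to each of the four defining equivalences \eqref{eq:a1}--\eqref{eq:di} in turn. Recall that for a reification $r \Leftrightarrow \constrc$ with $\constrc \synteq \sum_i \coeffapp_i \varx_i \geq \degapp$ in normalized form, the two produced pseudo-Boolean constraints are $\degapp\olnot{r} + \sum_i \coeffapp_i \varx_i \geq \degapp$ and $(\sum_i \coeffapp_i - \degapp + 1)r + \sum_i \coeffapp_i \olnot{\varx_i} \geq \sum_i \coeffapp_i - \degapp + 1$. So the whole task reduces to rewriting each right-hand side as a single normalized pseudo-Boolean inequality and substituting into the template.

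For \eqref{eq:a1} and \eqref{eq:d1}, the bodies $\varx_1 \geq \vary_1$ and $\vary_1 \geq \varx_1$ normalize to $\varx_1 + \olnot{\vary_1} \geq 1$ and $\vary_1 + \olnot{\varx_1} \geq 1$, respectively. Plugging into the template with $\sum \coeffapp = 2$ and $\degapp = 1$ yields \eqref{eq:a1.1}--\eqref{eq:a1.2} and \eqref{eq:d1.1}--\eqref{eq:d1.2} directly.

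For \eqref{eq:ai} and \eqref{eq:di} we first need a preliminary compression step: a conjunction ``$r'$ is true, and $\sum_j \coeffapp_j \ell_j \geq 1$'' can be written as the single inequality $M r' + \sum_j \coeffapp_j \ell_j \geq M + 1$ whenever $M \geq \sum_j \coeffapp_j$, since then $r' = 0$ alone falsifies the constraint. Taking $M = \sum_j \coeffapp_j$, the body of \eqref{eq:ai} becomes $2\auxa_i + \varx_{i+1} + \olnot{\vary_{i+1}} \geq 3$ (so $\sum \coeffapp = 4$, $\degapp = 3$), and the body of \eqref{eq:di} becomes $3\auxd_i + \olnot{\auxa_i} + \vary_{i+1} + \olnot{\varx_{i+1}} \geq 4$ (so $\sum \coeffapp = 6$, $\degapp = 4$). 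Running the template then reproduces \eqref{eq:ai.1}--\eqref{eq:ai.2} and \eqref{eq:di.1} exactly, together with the inequality $3\auxd_{i+1} + 3\olnot{\auxd_i} + \auxa_i + \olnot{\vary_{i+1}} + \varx_{i+1} \geq 3$; since $\auxd_{i+1} \in \{0,1\}$, inflating its coefficient from $3$ to $4$ does not change the set of satisfying assignments, giving \eqref{eq:di.2}.

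The main (minor) obstacle is the bookkeeping in the conjunctive rewriting step: one must verify that the compression coefficient $M$ is genuinely large enough to force the conjunction, and then check that the computed template coefficients $\sum\coeffapp - \degapp + 1$ reproduce those in the statement (up to the harmless $3 \mapsto 4$ inflation noted above). Everything else is mechanical substitution.
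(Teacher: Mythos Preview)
Your approach is essentially identical to the paper's: normalize each right-hand side to a single pseudo-Boolean inequality and then expand the reification via the template. You are in fact slightly more careful than the paper on one point: the mechanical template applied to the body $3\auxd_i + \olnot{\auxa_i} + \vary_{i+1} + \olnot{\varx_{i+1}} \geq 4$ yields coefficient $3$ (not $4$) on $\auxd_{i+1}$ in \eqref{eq:di.2}, and you correctly observe that inflating $3$ to $4$ is semantically harmless---the paper simply says ``expand the reification'' without commenting on this. The only step you leave implicit is the inner disjunction in \eqref{eq:di}, where $\olnot{\auxa_i} \vee (\vary_{i+1} \geq \varx_{i+1})$ is first rewritten as the clause $\olnot{\auxa_i} + \vary_{i+1} + \olnot{\varx_{i+1}} \geq 1$ before the conjunction-compression is applied; this is trivial and your stated body is correct.
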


\begin{proof}
For constraint \eqref{eq:a1}, we rewrite $\varx_1 \geq \vary_1$ as 
$\varx_1 + \olnot{\vary_1} \geq 1$, and then expand the reification.

For constraint \eqref{eq:ai}, we note that
$\auxa_i \;\wedge\; \varx_{i+1} \geq \vary_{i+1}$ is equivalent to 
$\auxa_i \;\wedge\; \varx_{i+1} + \olnot{\vary_{i+1}} \geq 1$, 
which in turn is equivalent to 
$2\auxa_i + \varx_{i+1} + \olnot{\vary_{i+1}} \geq 3$. 
Then we expand the reification.

For constraint \eqref{eq:d1}, we rewrite $\vary_1 \geq \varx_1$ as 
$\vary_1 + \olnot{\varx_1} \geq 1$, and then expand the reification.

For constraint \eqref{eq:di}, we first note that 
$\olnot{\auxa_i} \,\vee\, \vary_{i+1} \geq \varx_{i+1}$ 
is equivalent to $\olnot{\auxa_i} \,\vee\, \vary_{i+1} + \olnot{\varx_{i+1}} \geq 1$, 
which is in turn equivalent to 
$\olnot{\auxa_i} + \vary_{i+1} + \olnot{\varx_{i+1}} \geq 1$. 
Hence, $\auxd_i \;\wedge\; (\olnot{\auxa_i} \,\vee\, \vary_{i+1} \geq \varx_{i+1})$ 
is equivalent to the pseudo-Boolean constraint 
$3\auxd_i + \olnot{\auxa_i} + \vary_{i+1} + \olnot{\varx_{i+1}} \geq 4$. 
Finally, we expand the reification.
\end{proof}

We write $\spec(\vx, \vy, \va, \vd)$ for the constraints~\eqref{eq:a1.1}
up to~\eqref{eq:di.2} in Lemma~\ref{lem:lex_using_aux_PB}. Note that $\spec$ 
has four arguments, rather than three as in Section~\ref{subsec:orderswithauxvars}, since we 
use two different symbols for the auxiliary variables to emphasize their different 
functions. We write $\ord(\vd)$ for the single constraint $\auxd_\nvar \geq 1$. 
Throughout the remainder of this appendix, we write $\preceq$ for the preorder 
with the specification $\spec(\vx, \vy, \va, \vd)$ and definition $\ord(\vd)$. 
Then Lemma~\ref{lem:lex_using_aux} shows that the preorder  $\preceq$ 
defined by $\spec$ and $\ord$ is indeed the lexicographical order. 

Next, we show how we can prove reflexivity and transitivity of this order in \veripb.

\begin{lemma} \label{lem:lex_using_aux_reflexivity}
We can prove reflexivity of the order $\preceq$ using a single RUP step, 
requiring $\bigoh{\nvar}$ propagations. 
\end{lemma}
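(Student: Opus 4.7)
The plan is to verify directly that the RUP rule applies. The proof obligation for reflexivity, specialized to our setting, is $\spec(\vx, \vx, \va, \vd) \vdash \auxd_\nvar \geq 1$. To handle this via RUP, I would assume the negation $\olnot{\auxd}_\nvar \geq 1$ together with the constraints of $\spec(\vx, \vx, \va, \vd)$ (i.e., the constraints \eqref{eq:a1.1}--\eqref{eq:di.2} after substituting $\vary_i \mapsto \varx_i$) and show that unit propagation reaches a conflict using $\bigoh{\nvar}$ propagation steps.

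The key observation is that the substitution $\vy = \vx$ turns every literal pair $\varx_i + \olnot{\vary_i}$ and $\vary_i + \olnot{\varx_i}$ into a tautology contributing value~$1$, which makes half of the specification constraints redundant and forces the remaining ones to propagate the auxiliary variables to~$1$. I would carry this out in two linear sweeps:
\begin{enumerate}
\item First, propagate the $\auxa_i$ to~$1$ in order $i = 1, 2, \ldots, \nvar-1$. Constraint~\eqref{eq:a1.2} under $\vary_1 = \varx_1$ reduces to $2\auxa_1 + 1 \geq 2$, propagating $\auxa_1 = 1$. Inductively, constraint~\eqref{eq:ai.2} under $\vary_{i+1} = \varx_{i+1}$ and $\auxa_i = 1$ reduces to $2\auxa_{i+1} + 1 \geq 2$, propagating $\auxa_{i+1} = 1$.
\item Next, propagate the $\auxd_i$ to~$1$ in order $i = 1, 2, \ldots, \nvar$. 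Constraint~\eqref{eq:d1.2} under $\vary_1 = \varx_1$ propagates $\auxd_1 = 1$ for the same reason as above. Inductively, constraint~\eqref{eq:di.2} under $\vary_{i+1} = \varx_{i+1}$ with $\auxd_i = 1$ and $\auxa_i = 1$ reduces to $4\auxd_{i+1} + 2 \geq 3$, propagating $\auxd_{i+1} = 1$.
\end{enumerate}
At the end, we have derived $\auxd_\nvar = 1$, which together with the assumed $\olnot{\auxd}_\nvar \geq 1$ yields a conflict.

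Counting the work, the first sweep performs $\nvar - 1$ propagations on the $\auxa_i$ variables and the second sweep performs $\nvar$ propagations on the $\auxd_i$ variables, for a total of $\bigoh{\nvar}$ unit propagations. Hence a single RUP step of size $\bigoh{\nvar}$ suffices to derive $\auxd_\nvar \geq 1$ from $\spec(\vx, \vx, \va, \vd)$. There is no real obstacle here beyond carefully checking that the ``trivial'' half of each reification contributes exactly the constant needed to make the ``non-trivial'' half propagate; this is a mechanical check but is the only place where a mistake in the coefficient bookkeeping in Lemma~\ref{lem:lex_using_aux_PB} would show up.
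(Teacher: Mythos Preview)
Your proof is correct and follows essentially the same approach as the paper: negate the goal, use the specification with $\vy = \vx$, and propagate the $\auxd_i$ to $1$ in order to reach a conflict.

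One small difference worth noting: your first sweep over the $\auxa_i$ is unnecessary. After the substitution $\vary_{i+1} = \varx_{i+1}$, constraint~\eqref{eq:di.2} becomes $4\auxd_{i+1} + 3\olnot{\auxd_i} + \auxa_i \geq 2$, and once $\auxd_i = 1$ this already forces $\auxd_{i+1} = 1$ regardless of $\auxa_i$ (since $\auxa_i$ alone cannot reach the degree $2$). The paper exploits this and propagates only the $\auxd_i$, using $\nvar$ propagations rather than $2\nvar - 1$. Both arguments are $\bigoh{\nvar}$, so this does not affect the statement, but it does show that the coefficient choice in~\eqref{eq:di.2} was made precisely so that the $\auxa$-chain need not be touched for reflexivity.
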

\begin{proof}
Since proof goals are always shown by contradiction in \veripb, we assume that
$\auxd_\nvar \geq 1$ does not hold, \ie that $\olnot{\auxd_\nvar} \geq 1$ holds.
For the reflexivity proof, we have access to the specification 
$\spec(\vx, \vx, \va, \vd)$. 
Under the substitution $y_i = x_i$, the constraints \eqref{eq:d1.2} and 
\eqref{eq:di.2} simplify to  
\begin{align}
2\auxd_1 &\geq 1\eqcomma \label{eq:d1.2ref}\\
4 \auxd_{i+1} + 3 \olnot{\auxd_i} + \auxa_i &\geq 2\eqperiod \label{eq:di.2ref}
\end{align}
Then \eqref{eq:d1.2ref} propagates $\auxd_1$ to 1, after which~\eqref{eq:di.2ref}
inductively propagates $\auxd_{i+1}$ to 1 (for $1 \leq i \leq \nvar-1$). 
After deriving $\auxd_{\nvar}$, we obtain a contradiction with $\olnot{\auxd_\nvar} \geq 1$.
\end{proof}

\begin{lemma} \label{lem:lex_using_aux_trans}
We can prove transitivity of the order $\preceq$ 
using a cutting planes derivation of size $\bigoh{\nvar}$.
\end{lemma}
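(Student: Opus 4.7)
The goal is to derive $\auxf_\nvar \geq 1$ (i.e., $\ord(\vf)$) from the premises of the transitivity obligation, namely the three specifications $\spec(\vx,\vy,\va,\vd)$, $\spec(\vy,\vz,\vb,\ve)$, $\spec(\vx,\vz,\vc,\vf)$ together with $\auxd_\nvar \geq 1$ and $\auxe_\nvar \geq 1$. The overall idea is to derive, simultaneously by induction on $i = 1, 2, \ldots, \nvar$, the three invariants
\begin{align*}
(\star_i):   &\quad \olnot{\auxd_i} + \olnot{\auxe_i} + \auxf_i \geq 1 \eqcomma \\
(\dagger_i): &\quad \olnot{\auxd_i} + \olnot{\auxe_i} + \olnot{\auxc_i} + \auxa_i \geq 1 \eqcomma \\
(\ddagger_i):&\quad \olnot{\auxd_i} + \olnot{\auxe_i} + \olnot{\auxc_i} + \auxb_i \geq 1 \eqperiod
\end{align*}

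Intuitively, $(\star_i)$ is transitivity of the lexicographic order up to position $i$, while $(\dagger_i), (\ddagger_i)$ say that if in addition $\vx$ and $\vz$ agree up to position $i$ (captured by $\auxc_i$), then so do $\vx, \vy$ and $\vy, \vz$ respectively (captured by $\auxa_i, \auxb_i$). These auxiliary invariants are needed because the defining disjunctions for $\auxd_{i+1}$ and $\auxe_{i+1}$ involve $\olnot{\auxa_i}$ and $\olnot{\auxb_i}$, whereas the one for $\auxf_{i+1}$ involves $\olnot{\auxc_i}$; the translation between these is exactly what $(\dagger_i), (\ddagger_i)$ provide.

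\textbf{Base case and inductive step.} The base case $i = 1$ is a direct cutting planes combination of \eqref{eq:a1.1}--\eqref{eq:d1.2} applied to the three pairs. For instance, $(\star_1)$ follows by summing $\olnot{\auxd_1} + \vary_1 + \olnot{\varx_1} \geq 1$ with $\olnot{\auxe_1} + \varz_1 + \olnot{\vary_1} \geq 1$ (cancelling $\vary_1$ and $\olnot{\vary_1}$), combining with the $(\vx,\vz)$-analogue of~\eqref{eq:d1.2}, and saturating. For the inductive step, each invariant at index $i+1$ is obtained from \eqref{eq:ai.1}--\eqref{eq:di.2} for the three pairs together with the invariants at index $i$. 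Specifically, to derive $(\star_{i+1})$ I add \eqref{eq:di.1} for $(\vx,\vy)$ and $(\vy,\vz)$ (cancelling $\vary_{i+1}$ and $\olnot{\vary_{i+1}}$), then add $(\dagger_i) + (\ddagger_i)$ to convert the $\olnot{\auxa_i}, \olnot{\auxb_i}$ literals into $\olnot{\auxc_i}$ contributions modulo $\olnot{\auxd_i}, \olnot{\auxe_i}$ terms, and finally combine the result with the $(\vx,\vz)$-analogue of~\eqref{eq:di.2} and with $(\star_i)$, saturating at the end. The derivations of $(\dagger_{i+1})$ and $(\ddagger_{i+1})$ are structurally analogous, using \eqref{eq:ai.2} to conclude $\auxa_{i+1}$ or $\auxb_{i+1}$ from the forced chain of equalities $\varx_{i+1} = \vary_{i+1} = \varz_{i+1}$. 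Each stage uses a constant number of additions, small-coefficient multiplications, and saturation steps.

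\textbf{Conclusion and main obstacle.} After the last stage, $(\star_\nvar)$ together with the order premises $\auxd_\nvar \geq 1$ and $\auxe_\nvar \geq 1$ yields $\auxf_\nvar \geq 1$ by one more addition and saturation. Since each of the $\nvar$ stages contributes $O(1)$ cutting planes lines, the total derivation has size $\bigoh{\nvar}$ as claimed. The main obstacle is identifying the right system of invariants: tracking only $(\star_i)$ fails in the inductive step because the disjunctions defining $\auxd_{i+1}$ and $\auxe_{i+1}$ cannot be recombined into the disjunction defining $\auxf_{i+1}$ without some way to move between $\olnot{\auxa_i}, \olnot{\auxb_i}$ and $\olnot{\auxc_i}$, and only tracking the ``equal-prefix'' variables $\auxa, \auxb, \auxc$ misses the strict case. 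The enriched triple $(\star_i), (\dagger_i), (\ddagger_i)$ carries just enough information to make every inductive step a routine constant-size cutting planes combination.
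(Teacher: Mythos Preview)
Your proposal is correct and identifies the same key insight as the paper: that to carry the transitivity induction one must track not only a ``transitivity up to $i$'' invariant but also the auxiliary implications linking $\auxc_i$ back to $\auxa_i$ and $\auxb_i$. Where you differ is in the overall structure. The paper first does a \emph{backward} pass: from $\auxd_\nvar \geq 1$ and $\auxe_\nvar \geq 1$ it derives $\auxd_i \geq 1$ and $\auxe_i \geq 1$ for all $i$ (via $\auxd_{i+1} \Rightarrow \auxd_i$), and uses these to simplify the constraints \eqref{eq:di.1} to the unconditional form $\olnot{\auxa_i} + \vary_{i+1} + \olnot{\varx_{i+1}} \geq 1$. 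Only then does it run a forward induction proving the unconditional invariants $\auxf_i \geq 1$, $\auxa_i + \olnot{\auxc_i} \geq 1$, and $\auxb_i + \olnot{\auxc_i} \geq 1$.

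Your single forward pass instead carries the $\olnot{\auxd_i}, \olnot{\auxe_i}$ literals inside every invariant and only discharges them at the very end against the order premises. This is semantically equivalent and also $\bigoh{\nvar}$, but each inductive step has to shuttle those extra terms around (e.g., you repeatedly need $\olnot{\auxd_{i+1}} + \auxd_i \geq 1$ to upgrade $\olnot{\auxd_i}$ to $\olnot{\auxd_{i+1}}$). The paper's two-pass organization buys cleaner intermediate constraints and slightly smaller constants, at the price of an extra sweep; your approach buys a single unified induction, at the price of somewhat heavier per-step bookkeeping. Both are valid ways to get the $\bigoh{\nvar}$ bound.
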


\begin{proof}
Let $\spec(\vx, \vy, \va, \vd)$, $\spec(\vy, \vz, \vb, \ve)$
and $\spec(\vx, \vz, \vc, \vf)$ be the specifications corresponding
to the relations 
$(\varx_1, \ldots,  \varx_\nvar) \preceq (\vary_1, \ldots, \vary_\nvar)$, 
$(\vary_1, \ldots,  \vary_\nvar) \preceq (\varz_1, \ldots, \varz_\nvar)$, and 
$(\varx_1, \ldots,  \varx_\nvar) \preceq (\varz_1, \ldots, \varz_\nvar)$, respectively. 
We also assume that $\ord(\vd)$ and $\ord(\ve)$ hold, which are the constraints
$\auxd_\nvar \geq 1$ and $\auxe_\nvar \geq 1$. 
From this, we need to show that $\ord(\vf)$ holds, 
which is the constraint $\auxf_\nvar \geq 1$. 

\paragraph{Overview.} The proof consists of two main steps: 

\begin{enumerate}
\item Deriving  inductively that $\auxd_i \geq 1$ for $1 \leq i \leq \nvar$, 
and simplifying constraints~\eqref{eq:d1.1} and~\eqref{eq:di.1} to account 
for the known values of the $\auxd_i$-variables. Similarly, deriving that 
$\auxe_i \geq 1$ for $1 \leq i \leq \nvar$, and simplifying constraints 
using known values of the $\auxe_i$-variables.
\item Deriving inductively that  $\auxf_i \geq 1$ for $1 \leq i \leq \nvar$,
for which we also inductively derive that $\auxc_i$ implies $\auxa_i$ and $\auxb_i$. 

The intuition for why $\auxc_i$ implies  $\auxa_i$ and  $\auxb_i$ is as follows. 
We know that 
$(\varx_1, \ldots,  \varx_i) \preceq (\vary_1, \ldots, \vary_i) \preceq (\varz_1, \ldots, \varz_i)$,
and $\auxc_i$ encodes that $(\varx_1, \ldots,  \varx_i) = (\varz_1, \ldots, \varz_i)$. 
However, since $ (\vary_1, \ldots, \vary_i)$ is `squeezed' in between 
$(\varx_1, \ldots,  \varx_i)$ and $(\varz_1, \ldots, \varz_i)$,
this equality means that also the equalities
$(\varx_1, \ldots,  \varx_i) = (\vary_1, \ldots, \vary_i)$ and 
$(\vary_1, \ldots,  \vary_i) = (\varz_1, \ldots, \varz_i)$ hold, 
which is in turn encoded by $\auxa_i$ and  $\auxb_i$. 
\end{enumerate}

Since we use a constant number of cutting planes steps in the inductive step 
of these inductive derivations, the total number of cutting planes steps is $\bigoh{\nvar}$. Since each constraint has bounded size, the size of the derivation is also $\bigoh{\nvar}$.

\paragraph{Deriving $\auxd_i \geq 1$ and simplifying constraints.} 
We start by deriving that all $\auxd_i$ hold and simplify the constraints
containing the  $\auxd_i$. Weakening constraint~\eqref{eq:di.1} on 
$\auxa_i$, $\vary_{i+1}$ and $\varx_i$ yields 
$4 \olnot{\auxd_{i+1}} + 3 \auxd_i \geq 1$. 
Hence, from  $\auxd_\nvar \geq 1$ we can inductively prove $\auxd_i \geq 1$
for $1 \leq i \leq n$. This in turn allows us to derive the constraints 
$\olnot{\auxa_i} + \vary_{i+1} + \olnot{\varx_{i+1}} \geq 1$
(by adding $4 \cdot (\auxd_{i+1} \geq 1)$ to~\eqref{eq:di.1} and weakening on $\auxd_i$),
and the constraint $\vary_1 + \olnot{\varx_1} \geq 1$. 
Similarly, we derive $\auxe_i \geq 1$, the constraints 
$\olnot{\auxb_i} + \varz_{i+1} + \olnot{\vary_{i+1}} \geq 1$, 
and the constraint $\varz_1 + \olnot{\vary_1} \geq 1$.

\paragraph{Deriving $\auxf_i \geq 1$ and $\auxc_i$ implies  $\auxa_i$ and  $\auxb_i$.} 
We inductively show that $\auxf_i \geq 1$,
and that $\auxc_i$ implies $\auxa_i$ and  $\auxb_i$,
\ie that $\auxa_i + \olnot{\auxc_i} \geq 1$ and $\auxb_i + \olnot{\auxc_i} \geq 1$.
For the base case, adding $\vary_1 + \olnot{\varx_1} \geq 1$ and 
$\varz_1 + \olnot{\vary_1} \geq 1$ yields $\varz_1 + \olnot{\varx_1} \geq 1$,
which in turn implies that $\auxf_1 \geq 1$. Adding the three constraints
\begin{align*}
2\auxa_1 + \olnot{\varx_1} + \vary_1 &\geq 2\eqcomma  \\
\olnot{\auxc_1} + \varx_1 + \olnot{\varz_1} &\geq 1\eqcomma \\
\varz_1 + \olnot{\vary_1} &\geq 1\eqcomma
\end{align*}
and saturating yields $\auxa_1 + \olnot{\auxc_1} \geq 1$. 
Similarly, we derive the implication $\auxb_1 + \olnot{\auxc_1} \geq 1$, 
which completes the base case. 
 
We proceed with the inductive step. We first derive that $\auxf_{i+1}$ holds. 
Adding the constraints
\begin{align*}
\olnot{\auxa_i} + \vary_{i+1} + \olnot{\varx_{i+1}} &\geq 1 \eqcomma \\ 
\olnot{\auxb_i} + \varz_{i+1} + \olnot{\vary_{i+1}} &\geq 1 \eqcomma \\
\auxa_i + \olnot{\auxc_i} &\geq 1 \eqcomma \\
\auxb_i + \olnot{\auxc_i} &\geq 1 \eqcomma
\end{align*} 
and saturating, we get $\olnot{\auxc_i} + \varz_{i+1} + \olnot{\varx_{i+1}} \geq 1$. 
Adding this constraint, $3 \cdot (\auxf_i \geq 1)$ and 
$3 \auxf_{i+1} + 3 \olnot{\auxf_i} + \auxc_i + \olnot{\varz_{i+1}} + \varx_{i+1} \geq 3$
and saturating then allows us to derive $\auxf_{i+1} \geq 1$.
 
Finally, we have to show that $\auxc_{i+1}$ implies  $\auxa_{i+1}$ and  $\auxb_{i+1}$.
First note that
$\auxc_{i+1} \Rightarrow (\auxc_i \wedge \varx_{i+1} \geq \varz_{i+1})$ 
implies the two constraints $\auxc_{i+1} \Rightarrow \auxc_i$ and 
$\auxc_{i+1} \Rightarrow (\varx_{i+1} \geq \varz_{i+1})$. Indeed, weakening 
$3\olnot{\auxc_{i+1}} + 2\auxc_i + \varx_{i+1} + \olnot{\varz_{i+1}} \geq 3$
on $\varx_{i+1}$ and $\varz_{i+1}$ and saturating yields 
$\olnot{\auxc_{i+1}} + \auxc_i \geq 1$, 
while weakening on $\auxc_i$ and saturating yields 
$\olnot{\auxc_{i+1}} + \varx_{i+1} + \olnot{\varz_{i+1}} \geq 1$. 
Adding $\olnot{\auxc_{i+1}} + \auxc_i \geq 1$ to $\auxa_i + \olnot{\auxc_i} \geq 1$ 
and $\auxb_i + \olnot{\auxc_i} \geq 1$ respectively yields 
$\auxa_i + \olnot{\auxc_{i+1}} \geq 1$ and $\auxb_i + \olnot{\auxc_{i+1}} \geq 1$.
Adding the constraints 
\begin{align*}
\olnot{\auxc_{i+1}} + \varx_{i+1} + \olnot{\varz_{i+1}} &\geq 1 \eqcomma  \\
2\auxa_{i+1} + 2\olnot{\auxa_i} + \olnot{\varx_{i+1}} + \vary_{i+1} &\geq 2 \eqcomma \\
\olnot{\auxb_i} + \varz_{i+1} + \olnot{\vary_{i+1}} &\geq 1 \eqcomma
\end{align*}
and saturating yields $\olnot{\auxc_{i+1}} + \auxa_{i+1} + \olnot{\auxa_i}  + \olnot{\auxb_i} \geq 1$. Then adding $\auxa_i + \olnot{\auxc_{i+1}} \geq 1$ and $\auxb_i + \olnot{\auxc_{i+1}} \geq 1$ and saturating yields $\olnot{\auxc_{i+1}} + \auxa_{i+1} \geq 1$. We similarly derive $\olnot{\auxc_{i+1}} + \auxb_{i+1} \geq 1$, which completes the inductive step and hence the proof.
\end{proof}

Together, this shows the following result:

\begin{theorem} \label{thm:lexorderdef}
We can define the lexicographical order over $\nvar$ variables in \veripb using
auxiliary variables and prove its transitivity and reflexivity using a \veripb 
proof of size $\bigoh{\nvar}$ that can be checked in time $\bigoh{\nvar}$.
\end{theorem}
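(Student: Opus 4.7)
The plan is to assemble this theorem directly from the three main technical lemmas already established in the appendix. First, I would introduce the auxiliary variables $\auxa_1, \ldots, \auxa_{\nvar-1}$ and $\auxd_1, \ldots, \auxd_\nvar$ by adding the pseudo-Boolean constraints \eqref{eq:a1.1}--\eqref{eq:di.2} one by one to form the specification $\spec(\vx, \vy, \va, \vd)$, and take $\ord(\vd)$ to be the single constraint $\auxd_\nvar \geq 1$. Since each $\auxa_i$ (resp.\ $\auxd_i$) is functionally defined from previously introduced variables, each pair of reified constraints can be added by redundance using a witness that simply sets the new auxiliary variable to the appropriate function of the already-fixed variables. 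This yields $\bigoh{\nvar}$ redundance steps, so $\spec$ qualifies as a specification over $\va \cup \vd$ in the sense of Definition~1.

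Next, I would invoke Lemma~\ref{lem:lex_using_aux} to conclude that the preorder $\preceq$ induced by $(\spec, \ord)$ (in the sense of Definition~\ref{def:induced-preorder}) really is lexicographic order on $\vx$ and $\vy$. At this point all that remains is to discharge the two proof obligations \eqref{eq:preorder1} and \eqref{eq:preorder2} required by the \veripb order-definition machinery. Reflexivity is handled by Lemma~\ref{lem:lex_using_aux_reflexivity} via a single RUP step whose propagation chain has length $\bigoh{\nvar}$, and transitivity is handled by Lemma~\ref{lem:lex_using_aux_trans} via a cutting planes derivation of size $\bigoh{\nvar}$ whose checking cost (constantly many additions/saturations per inductive step) is also $\bigoh{\nvar}$.

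Finally, I would sum up the costs: the specification itself contributes $\bigoh{\nvar}$ redundance lines whose checkability follows from the functional-definition nature of the witnesses, reflexivity contributes one RUP step with $\bigoh{\nvar}$ propagations, and transitivity contributes $\bigoh{\nvar}$ cutting planes lines. Each individual constraint in the derivation has bounded size (the inductive reifications use at most five literals with bounded coefficients), so total proof length and total checking time are both $\bigoh{\nvar}$, as claimed.

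The only step that requires real care is the bookkeeping that each redundance application defining an auxiliary variable really is locally checkable in constant time and that its witness support lies inside $\va \cup \vd$ as required by the specification definition; the induction arguments in Lemmas~\ref{lem:lex_using_aux_reflexivity} and \ref{lem:lex_using_aux_trans} are the substantive content, and I would rely on them directly rather than reproving anything.
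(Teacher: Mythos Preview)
Your proposal is correct and follows essentially the same route as the paper, which simply states the theorem as the assembly of Lemmas~\ref{lem:lex_using_aux_PB}, \ref{lem:lex_using_aux_reflexivity}, and \ref{lem:lex_using_aux_trans} (the paper writes only ``Together, this shows the following result''). One small imprecision: the redundance witnesses for the specification constraints do not set the new auxiliary variable to a \emph{function} of earlier variables but rather to the constant $0$ or $1$ (one for each half of the reification, as in the worked example); this does not affect your cost analysis or the overall argument.
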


\subsection{Deriving the Symmetry-Breaking Clauses}

Next, we show how to use our proof system in applications of the dominance
rule where the witness is a symmetry $\symmetry$ of the input formula. 
Let $\suppsymmetry = \{\varx_{\idxi_1}, \ldots, \varx_{\idxi_\suppsize}\}$ 
be the support of $\symmetry$ (where $\idxi_1 \leq \ldots \leq \idxi_\suppsize$). 
We want to derive symmetry breaking clauses encoding the lex-leader constraint
\begin{equation*}
(\varx_{\idxi_1}, \ldots, \varx_{\idxi_\suppsize}) \lexorder (\symmetry(\varx_{\idxi_1}), \ldots, \symmetry(\varx_{\idxi_\suppsize}))\eqperiod
\end{equation*}

Throughout this appendix, we assume that the lexicographical order $\preceq$ over 
$\nvar$ variables is loaded. We first show how to define a circuit 
over variables $\auxs_\subidx$ and $\auxt_\subidx$ defining 
the lexicographical order over just $\suppsymmetry$. 
Then we show how to do the actual symmetry breaking using the dominance rule. 
In particular, Lemma~\ref{lem:first_proof_goal} and 
Lemma~\ref{lem:second_proof_goal} show how we can prove 
the order proof goals 
corresponding to the dominance rule application.
Lemma~\ref{lem:symmetrybreakingclauses} then shows how to derive the 
symmetry breaking clauses. 
Together, this leads to Theorem~\ref{thm:symmetrybreakingclauses},
which shows that we can justify a symmetry~$\symmetry$ of the input formula with 
$\suppsize = |\suppsymmetry|$ using a \veripb proof of size~$\bigoh{\suppsize}$
that can be checked in time~$\bigoh{\nvar}$.

\paragraph{Defining a circuit.} 
We first define at top level (using the redundance rule) a
circuit similarly to the circuit defined in Lemma~\ref{lem:lex_using_aux}, 
using the following equations:
\begin{align}
\auxs_1 &\Leftrightarrow (\varx_{\idxi_1} \geq \symmetry(\varx_{\idxi_1}))\eqcomma \label{eq:s1}\\
\auxs_{\subidx+1} &\Leftrightarrow (\auxs_{\subidx} \;\wedge\; \varx_{\idxi_{\subidx+1}} 
  \geq \symmetry(\varx_{\idxi_{\subidx+1}}))\eqcomma \label{eq:si}   \\
\auxt_1 &\Leftrightarrow (\symmetry(\varx_{\idxi_1}) \geq \varx_{\idxi_1})\eqcomma \label{eq:t1} \\
\auxt_{\subidx+1} &\Leftrightarrow (\auxt_{\subidx} \;\wedge\; (\olnot{\auxs_{\subidx}} 
  \,\vee\, \symmetry(\varx_{\idxi_{\subidx+1}}) \geq \varx_{\idxi_{\subidx+1}}))\eqcomma \label{eq:ti}
\end{align}
where we have constraint \eqref{eq:si} for $1\leq\subidx\leq\suppsize-2$ 
and constraint \eqref{eq:ti} for $1\leq\subidx\leq\suppsize-1$.

Intuitively, the variables $\auxs_\subidx$ and $\auxt_\subidx$ have the
same role as the auxiliary variables $\auxa_{\idxi_\subidx}$ and
$\auxd_{\idxi_\subidx}$ in the specification, respectively.
As in the proof of Lemma~\ref{lem:lex_using_aux}, $\auxt_{\subidx}$ hold
 if and only if 
 $(\varx_{\idxi_1}, \ldots, \varx_{\idxi_\subidx}) \lexorder 
  (\symmetry(\varx_{\idxi_1}), \ldots, \symmetry(\varx_{\idxi_\subidx}))$, 
 while $\auxs_{\subidx} \wedge \auxt_{\subidx}$ holds if and only if 
 $(\varx_{\idxi_1}, \ldots, \varx_{\idxi_\subidx}) = 
  (\symmetry(\varx_{\idxi_1}), \ldots, \symmetry(\varx_{\idxi_\subidx}))$.

As in Lemma~\ref{lem:lex_using_aux_PB}, we can write these constraints as 
follows as pseudo-Boolean constraints:
\begin{align}
\olnot{\auxs_1} + \varx_{\idxi_1} + \olnot{\symmetry(\varx_{\idxi_1})} &\geq 1\eqcomma \label{eq:s1.1}\\
2\auxs_1 + \olnot{\varx_{\idxi_1}} + \symmetry(\varx_{\idxi_1}) &\geq 2\eqcomma \label{eq:s1.2}\\
3\olnot{\auxs_{\subidx+1}} + 2\auxs_\subidx + \varx_{\idxi_{\subidx+1}} 
 + \olnot{\symmetry(\varx_{\idxi_{\subidx+1}})} &\geq 3\eqcomma \label{eq:si.1}   \\
2\auxs_{\subidx+1} + 2\olnot{\auxs_\subidx} + \olnot{\varx_{\idxi_{\subidx+1}}} 
 + \symmetry(\varx_{\idxi_{\subidx+1}}) &\geq 2\eqcomma \label{eq:si.2}   \\
\olnot{\auxt_1} + \symmetry(\varx_{\idxi_1}) + \olnot{\varx_{\idxi_1}} &\geq 1\eqcomma \label{eq:t1.1}\\
2\auxt_1 + \olnot{\symmetry(\varx_{\idxi_1})} + \varx_{\idxi_1} &\geq 2\eqcomma \label{eq:t1.2}\\
4\olnot{\auxt_{\subidx+1}} + 3\auxt_\subidx + \olnot{\auxs_\subidx} 
 + \symmetry(\varx_{\idxi_{\subidx+1}}) + \olnot{\varx_{\idxi_{\subidx+1}}} &\geq 4\eqcomma \label{eq:ti.1} \\
3 \auxt_{\subidx+1} + 3 \olnot{\auxt_\subidx} + \auxs_\subidx 
 + \olnot{\symmetry(\varx_{\idxi_{\subidx+1}})} + \varx_{\idxi_{\subidx+1}} &\geq 3\eqperiod\label{eq:ti.2}
\end{align}
In total, this circuit consists of $4 \suppsize - 2$ pseudo-Boolean constraints, 
and can be derived using the redundance rule in time $\bigoh{1}$ per constraint. 
Note that for this the autoproving of the order proof goal of the redundance 
rule and the explicit loading of the 
specification discussed in Section~\ref{sec:implementationdetails} are essential 
to avoid a factor $\nvar$ overhead. Namely, these applications of the redundance
rule witness over fresh variables only, so in particular not over the variables
over which the order is loaded. Hence, the order proof goal follows directly from
the reflexivity of the order, and these checker improvements allow us to detect 
this in time $\bigoh{1}$, instead of wasting time $\bigoh{\nvar}$ on loading 
the specification and repeating the reflexivity proof. 

\paragraph{Overview: the symmetry breaking clauses.} 
We first state the symmetry breaking clauses:
\begin{align}
\auxs_1 +   \olnot{\varx_{\idxi_1}}  &\geq 1\eqcomma \label{eq:symbreak_s1.1} \\
\auxs_{\subidx+1} + \olnot{\auxs_\subidx} + \olnot{\varx_{\idxi_{\subidx+1}}} &\geq 1\eqcomma \label{eq:symbreak_si.1} \\
\auxs_1 + \symmetry(\varx_{\idxi_1}) &\geq 1\eqcomma \label{eq:symbreak_s1.2} \\
\auxs_{\subidx+1} + \olnot{\auxs_\subidx} + \symmetry(\varx_{\idxi_{\subidx+1}}) &\geq 1\eqcomma \label{eq:symbreak_si.2} \\
\symmetry(\varx_{\idxi_1}) + \olnot{\varx_{\idxi_1}} &\geq 1 \eqcomma  \label{eq:symbreak_t1} \\
\olnot{\auxs_{\subidx}} + \symmetry(\varx_{\idxi_{\subidx+1}}) + \olnot{\varx_{\idxi_{\subidx+1}}}&\geq 1\eqperiod  \label{eq:symbreak_ti}
\end{align}
Note that a constraint whose coefficients and whose right hand side is 1
corresponds to a clause: for example, the constraint 
$\auxs_1 + \olnot{\varx_{\idxi_1}}  \geq 1$ is equivalent to the clause 
$\auxs_1 \vee \olnot{\varx_{\idxi_1}}$.  

In the symmetry breaking clauses, the variable $\auxs_\subidx$ corresponds
to the equality $(\varx_{\idxi_1}, \ldots, \varx_{\idxi_\subidx}) = 
 (\symmetry(\varx_{\idxi_1}), \ldots, \symmetry(\varx_{\idxi_\subidx}))$. 
Clauses \eqref{eq:symbreak_s1.1} up to \eqref{eq:symbreak_si.2} ensure that 
$\auxs_\subidx$ propagates to true if this equality holds, 
while \eqref{eq:symbreak_t1} and \eqref{eq:symbreak_ti} provide the symmetry
breaking constraint $\symmetry(\varx_{\idxi_1}) \geq \varx_{\idxi_1}$ and that 
$\symmetry(\varx_{\idxi_{\subidx+1}}) \geq \varx_{\idxi_{\subidx+1}}$
should hold if $(\varx_{\idxi_1}, \ldots, \varx_{\idxi_\subidx}) = 
 (\symmetry(\varx_{\idxi_1}), \ldots, \symmetry(\varx_{\idxi_\subidx}))$.

Intuitively, to break a symmetry we will show using dominance that 
$\auxt_\suppsize$ and hence all $\auxt_\subidx$ hold, and then we 
derive these clauses from the circuit.

\paragraph{Deriving the symmetry breaking clauses.} 
Using the dominance rule, we now derive that $\auxt_{\suppsize} \geq 1$. 
Let $\constrc \synteq \auxt_{\suppsize} \geq 1$. 
To apply the dominance rule, we have to show that
\begin{align}
\core &\cup \der \cup \{\neg \constrc\} \cup \spec(\vx\uh_\symmetry, \vx, \va, \vd)
        \vdash \core\uh_\symmetry  \cup \ord(\vd)\eqcomma \\
\core &\cup \der \cup \{\neg \constrc\} \cup \spec(\vx, \vx\uh_\symmetry, \va, \vd) 
      \cup \ord(\vd) \vdash \perp \eqcomma 
\end{align}

Since the witness $\symmetry$ is a symmetry of the formula, 
the proof goals corresponding to $\core \vdash \core\uh_\symmetry$ follow
directly from $\core\uh_\symmetry = \core$. 
The next two lemmas explain how to prove the other two proof goals.

Note that within the subproof of the dominance rule, we can use the negation
$\neg \constrc \synteq \olnot{\auxt_{\suppsize}} \geq 1$, which intuitively means that 
$(\varx_{\idxi_1}, \ldots, \varx_{\idxi_\suppsize}) \not\lexorder 
 (\symmetry(\varx_{\idxi_1}), \ldots, \symmetry(\varx_{\idxi_\suppsize}))$. 

\begin{lemma} \label{lem:first_proof_goal}
Assume that the lexicographical order $\preceq$ over 
$\nvar$ variables is loaded.
Let $\symmetry$ be a symmetry of the input formula and let $\suppsize = |\suppsymmetry|$. 
Then we can show the proof goal
\begin{equation*}
\core \cup \der \cup \{\neg \constrc\} \cup \spec(\vx\uh_\symmetry, \vx, \va, \vd)
        \vdash \ord(\vd)\eqcomma \\
\end{equation*}
using $\bigoh{\suppsize}$ RUP steps and cutting planes steps, where the RUP steps 
require $\bigoh{\nvar}$ propagations in total.
\end{lemma}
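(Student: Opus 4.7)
The approach is to bridge the two circuits: the top-level circuit on $(\auxs_\subidx, \auxt_\subidx)_{\subidx = 1, \ldots, \suppsize}$ encodes $\vx \lexorder \symmetry(\vx)$ on the support, while the specification circuit on $(\auxa_i, \auxd_i)_{i = 1, \ldots, \nvar}$, instantiated as $\spec(\vx\uh_\symmetry, \vx, \va, \vd)$, encodes $\symmetry(\vx) \lexorder \vx$ on the full variable list. For indices $i$ outside $\suppsymmetry$ we have $\symmetry(\varx_i) = \varx_i$, so the spec constraints at those indices reduce to the trivially propagating form seen in the reflexivity proof of Lemma~\ref{lem:lex_using_aux_reflexivity}; RUP can therefore sweep through them by unit propagation with no explicit step per index.

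The main work would be an induction on $\subidx = 1, \ldots, \suppsize$ that derives, at each support position, an invariant relating the two circuits---for example, a pair of constraints such as $\auxd_{\idxi_\subidx} + \auxt_\subidx \geq 1$ (``if the lex order fails in one direction on the first $\subidx$ support positions, then it must hold in the other direction up to index $\idxi_\subidx$'') together with a companion constraint linking $\auxa_{\idxi_\subidx}$ and $\auxs_\subidx$. Each inductive step would combine, in $O(1)$ cutting planes steps, the relevant spec reifications \eqref{eq:a1.1}--\eqref{eq:di.2} at position $\idxi_\subidx$ (under the substitution $\vx \mapsto \symmetry(\vx)$, $\vy \mapsto \vx$) with the top-level circuit reifications \eqref{eq:s1.1}--\eqref{eq:ti.2} at position $\subidx$, and with the previous invariants. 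Any non-support gap between $\idxi_{\subidx-1}$ and $\idxi_\subidx$ is collapsed by a single RUP step that lets unit propagation equate $\auxa_{\idxi_{\subidx-1}} = \auxa_j$ and $\auxd_{\idxi_{\subidx-1}} = \auxd_j$ for all intermediate $j$; this contributes $O(1)$ explicit steps but $O(\idxi_\subidx - \idxi_{\subidx-1})$ propagations. Summed across support positions, this yields $O(\suppsize)$ explicit steps and $O(\nvar)$ total propagations.

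Finally, at $\subidx = \suppsize$ I would combine the invariant with the premise $\olnot{\auxt_\suppsize} \geq 1$ via one cutting planes step to obtain $\auxd_{\idxi_\suppsize} \geq 1$, and then perform a single RUP step that propagates this value forward through the trailing non-support indices to reach $\auxd_\nvar \geq 1$. The main obstacle I expect is the ``role swap'' between $\auxa/\auxs$ and $\auxd/\auxt$: the spec and top-level circuits encode the lex inequality in opposite directions, so a single invariant on the $\auxd/\auxt$ pair is likely not strong enough to carry the induction and must be coupled with the right contrapositive companion for $\auxa/\auxs$, so that each inductive step remains constant-size and does not accumulate a growing list of premises on the way through the support.
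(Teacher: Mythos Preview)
Your proposal is correct and follows essentially the same approach as the paper: the paper's proof likewise collapses the non-support gaps to rewrite the specification over only the $\suppsize$ support positions, then inductively derives the invariant $\auxd_{\idxi_\subidx} + \auxt_\subidx \geq 1$ together with companion constraints, and finishes by combining with $\olnot{\auxt_\suppsize} \geq 1$ and $\olnot{\auxd_\nvar} \geq 1$. The one detail you left open is the precise form of the companions---the paper uses the cross-implications $\auxs_\subidx \Rightarrow \auxd_{\idxi_\subidx}$ and $\auxa_{\idxi_\subidx} \Rightarrow \auxt_\subidx$ (rather than a direct $\auxa/\auxs$ link), which is exactly the ``contrapositive'' coupling you anticipated as necessary to handle the role swap.
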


\begin{proof}
Note that $\ord(\vd)$ is the single constraint $\auxd_\nvar \geq 1$.
Since proof goals are always shown by contradiction, we assume that $\olnot{\auxd_\nvar} \geq 1$ holds. 
This means that $(\symmetry(\varx_{\idxi_1}), \ldots, \symmetry(\varx_{\idxi_\suppsize}))  
\not\lexorder   (\varx_{\idxi_1}, \ldots, \varx_{\idxi_\suppsize})$, 
but since $\lexorder$ is a complete order and we also have 
 $(\varx_{\idxi_1}, \ldots, \varx_{\idxi_\suppsize}) \not\lexorder 
  (\symmetry(\varx_{\idxi_1}), \ldots, \symmetry(\varx_{\idxi_\suppsize}))$ 
via the constraint $\neg \constrc \synteq \olnot{\auxt_{\suppsize}} \geq 1$, 
we should be able to derive contradiction.

\paragraph{Overview.} The proof consists of four main steps:
\begin{enumerate}
\item Rewriting the specification $\spec(\varx\uh_{\symmetry}, \varx, \auxa, \auxd)$
to only $\bigoh{\suppsize}$ constraints involving  auxiliary variables 
$\auxa_{\idxi_\subidx}$ and $\auxd_{\idxi_\subidx}$ and variables from $\suppsymmetry$ only.
\item Showing that $\auxs_{\subidx} \Rightarrow \auxd_{\idxi_{\subidx}}$ and 
$\auxa_{\idxi_{\subidx}} \Rightarrow \auxt_{\subidx}$.
\item Showing that $\auxt_{\subidx+1} + \olnot{\auxt_{\subidx}} + \auxd_{\idxi_{\subidx}} \geq 1$ 
and $\auxd_{\idxi_{\subidx+1}} + \olnot{\auxd_{\idxi_{\subidx}}} + \auxt_{\subidx} \geq 1$. 
\item Showing $\auxd_{\idxi_{\subidx}} + \auxt_{\subidx} \geq 1$ 
and  $\auxt_{\subidx+1} + \auxd_{\idxi_{\subidx}} \geq 1$ 
and  $\auxd_{\idxi_{\subidx+1}} + \auxt_{\subidx} \geq 1$.
\end{enumerate}

The first of these steps requires $\bigoh{\suppsize}$ RUP steps for which 
$\bigoh{\nvar}$ propagations are needed in total, and $\bigoh{\suppsize}$ cutting planes steps. 
The remaining steps require  $\bigoh{\suppsize}$ RUP steps for which 
$\bigoh{\suppsize}$ propagations are needed.

\paragraph{Rewriting the specification constraints.} 
Write $\suppsymmetryidxs = \{\idxi_1, \dots, \idxi_\suppsize\}$. 
There are two cases which are slightly different, depending on 
whether $1 \in \suppsymmetryidxs$ or not. 
Here we consider the case $1 \not\in \suppsymmetryidxs$. 
Under the substitution $\symmetry$, the specification constraints 
$\spec(\varx\uh_{\symmetry}, \varx, \auxa, \auxd)$ are (semantically) given by 
\begin{align}
\auxa_1 &\geq 1\eqcomma \label{eq:s1pf1}\\
\auxa_{\idxi} &\Leftrightarrow (\auxa_{\idxi-1} \;\wedge\;  \symmetry(\varx_{\idxi}) \geq \varx_{\idxi})\eqcomma && (\idxi \in \suppsymmetryidxs) \label{eq:sipf1}   \\
\auxa_{\idxi} &\Leftrightarrow \auxa_{\idxi-1}\eqcomma && (\idxi \not\in \suppsymmetryidxs) \label{eq:sipf1r}   \\
\auxd_1 &\geq 1\eqcomma \label{eq:t1pf1} \\
\auxd_{\idxi} &\Leftrightarrow (\auxd_{\idxi-1} \wedge (\olnot{\auxa_{\idxi-1}} \vee \varx_{\idxi} \!\geq\! \symmetry(\varx_{\idxi})))\eqcomma\!\!&& (\idxi \in \suppsymmetryidxs)\!\!  \label{eq:tipf1} \\ 
\auxd_{\idxi} &\Leftrightarrow \auxd_{\idxi-1}\eqperiod && (\idxi \not\in \suppsymmetryidxs)  \label{eq:tipf1r}
\end{align}
Using the constraints~\eqref{eq:s1pf1} and \eqref{eq:sipf1r}, it follows 
by RUP that $ \auxa_{\idxi_{1}-1} \geq 1$ and 
$\auxa_{\idxi_{\subidx}-1} \Leftrightarrow \auxa_{\idxi_{\subidx-1}}$ for $2 \leq \subidx \leq \suppsize$.
Similarly, using the constraints~\eqref{eq:t1pf1} and \eqref{eq:tipf1r}, 
it follows by RUP that $ \auxd_{\idxi_{1}-1} \geq 1$ and 
$\auxd_{\idxi_{\subidx}-1} \Leftrightarrow \auxd_{\idxi_{\subidx-1}}$ for $2 \leq \subidx \leq \suppsize$. 

Using these constraints, we can (using cutting planes steps) rewrite 
constraints~\eqref{eq:sipf1} and \eqref{eq:tipf1} as 
\begin{align}
\auxa_{\idxi_1} &\Leftrightarrow (\symmetry(\varx_{\idxi_1}) \geq \varx_{\idxi_1})\eqcomma \label{eq:s1pf1supp}\\
\auxa_{\idxi_{\subidx+1}} &\Leftrightarrow (\auxa_{\idxi_{\subidx}} \;\wedge\; \symmetry(\varx_{\idxi_{\subidx+1}})\geq \varx_{\idxi_{\subidx+1}})\eqcomma\label{eq:sipf1supp}   \\
\auxd_{\idxi_1} &\Leftrightarrow (\varx_{\idxi_1} \geq \symmetry(\varx_{\idxi_1}))\eqcomma \label{eq:t1pf1supp} \\
\auxd_{\idxi_{\subidx+1}} &\Leftrightarrow (\auxd_{\idxi_{\subidx}} \;\wedge\; (\olnot{\auxa_{\idxi_{\subidx}}} \,\vee\, \varx_{\idxi_{\subidx+1}} \geq \symmetry(\varx_{\idxi_{\subidx+1}})))\eqperiod \label{eq:tipf1supp}
\end{align}

\paragraph{Showing that $\auxs_{\subidx} \Rightarrow \auxd_{\idxi_{\subidx}}$ and $\auxa_{\idxi_{\subidx}} \Rightarrow \auxt_{\subidx}$.}
We inductively derive using RUP that $\auxs_{\subidx} \Rightarrow \auxd_{\idxi_{\subidx}}$, \ie  $\olnot{\auxs_{\subidx}} + \auxd_{\idxi_{\subidx}} \geq 1$. When showing this by RUP, the negation propagates $\auxs_{\subidx}$ and $\olnot{\auxd_{\idxi_{\subidx}}}$.

For the base case, note that $\olnot{\auxd_{\idxi_1}}$ propagates that the 
inequality $\varx_{\idxi_1} \geq \symmetry(\varx_{\idxi_1})$ is false 
(\ie that $\olnot{\varx_{\idxi_1}}$ and  $\symmetry(\varx_{\idxi_1})$ hold), 
which yields together with $\auxs_1$ a conflict in \eqref{eq:s1.1}.
 
For the inductive step, note that $\auxs_{\subidx+1}$ propagates 
$\auxs_{\subidx}$ using~\eqref{eq:si.1}, which by the induction hypothesis
propagates $\auxd_{\idxi_{\subidx}}$. Then $\olnot{\auxd_{\idxi_{\subidx+1}}}$
propagates that 
$\olnot{\auxa_{\idxi_{\subidx}}} \,\vee\, (\varx_{\idxi_{\subidx+1}} \geq \symmetry(\varx_{\idxi_{\subidx+1}}))$
is false, which in particular propagates that 
$\varx_{\idxi_{\subidx+1}} \geq \symmetry(\varx_{\idxi_{\subidx+1}})$
is false (\ie that $\olnot{\varx_{\idxi_{\subidx+1}}}$ and $\symmetry(\varx_{\idxi_{\subidx+1}})$ hold), 
which yields together with $\auxs_{\subidx+1}$ a conflict in \eqref{eq:si.1}.

In the same way, we derive using RUP that $\auxa_{\idxi_{\subidx}} \Rightarrow \auxt_{\subidx}$.

\paragraph{Showing that $\auxt_{\subidx+1} + \olnot{\auxt_{\subidx}} + \auxd_{\idxi_{\subidx}} \geq 1$ and $\auxd_{\idxi_{\subidx+1}} + \olnot{\auxd_{\idxi_{\subidx}}} + \auxt_{\subidx} \geq 1$.}
Next, we derive using RUP that $\auxt_{\subidx+1} + \olnot{\auxt_{\subidx}} + \auxd_{\idxi_{\subidx}} \geq 1$. 
Its negation propagates $\olnot{\auxt_{\subidx+1}}$, $\auxt_{\subidx}$, and $\olnot{\auxd_{\idxi_{\subidx}}}$. 
This propagates $ \auxs_\subidx$ using \eqref{eq:ti.2}. But then we have
 $\auxs_\subidx$ and $\olnot{\auxd_{\idxi_{\subidx}}}$, which is a 
 conflict in $\auxs_{\subidx} \Rightarrow \auxd_{\idxi_{\subidx}}$. 
In the same way, we derive using RUP that 
$\auxd_{\idxi_{\subidx+1}} + \olnot{\auxd_{\idxi_{\subidx}}} + \auxt_{\subidx} \geq 1$.

\paragraph{Showing $\auxd_{\idxi_{\subidx}} + \auxt_{\subidx} \geq 1$ 
and $\auxt_{\subidx+1} + \auxd_{\idxi_{\subidx}} \geq 1$ and  
$\auxd_{\idxi_{\subidx+1}} + \auxt_{\subidx} \geq 1$.}
Next, we derive using RUP inductively that $\auxd_{\idxi_{\subidx}} + \auxt_{\subidx} \geq 1$
and $\auxt_{\subidx+1} + \auxd_{\idxi_{\subidx}} \geq 1$ and 
$\auxd_{\idxi_{\subidx+1}} + \auxt_{\subidx} \geq 1$. 
The negated constraint propagates $\olnot{\auxd_{\idxi_{\subidx}}}$ and 
$\olnot{\auxt_{\subidx}}$. For the base case, note that $\olnot{\auxd_{\idxi_1}}$
propagates that the inequality $\varx_{\idxi_1} \geq \symmetry(\varx_{\idxi_1})$ 
is false, which yields together with $\olnot{\auxt_1}$ a conflict in \eqref{eq:t1.2}.

For the inductive step, we first show how to derive 
$\auxt_{\subidx+1} + \auxd_{\idxi_{\subidx}} \geq 1$ 
and  $\auxd_{\idxi_{\subidx+1}} + \auxt_{\subidx} \geq 1$ using 
$\auxd_{\idxi_{\subidx}} + \auxt_{\subidx} \geq 1$. 
The negation of  $\auxt_{\subidx+1} + \auxd_{\idxi_{\subidx}} \geq 1$ 
propagates $\olnot{\auxt_{\subidx+1}}$ and  $\olnot{\auxd_{\idxi_{\subidx}}}$,
which propagates $\auxt_{\subidx}$ using  
$\auxt_{\subidx} + \auxd_{\idxi_{\subidx}} \geq 1$,
which then yields a conflict in the constraint 
$\auxt_{\subidx+1} + \olnot{\auxt_{\subidx}} + \auxd_{\idxi_{\subidx}} \geq 1$. 
 
The derivation of $\auxd_{\idxi_{\subidx+1}} + \auxt_{\subidx} \geq 1$ is similar. 
 
Next, we show how to derive $\auxd_{\idxi_{\subidx+1}} + \auxt_{\subidx+1} \geq 1$.
The negation of $\auxd_{\idxi_{\subidx+1}} + \auxt_{\subidx+1} \geq 1$ propagates
$\olnot{\auxt_{\subidx+1}}$ and  $\olnot{\auxd_{\idxi_{\subidx+1}}}$, 
which propagates $\auxt_{\subidx}$ and  $\auxd_{\idxi_{\subidx}}$ using 
$\auxt_{\subidx+1} + \auxd_{\idxi_{\subidx}} \geq 1$ and 
$\auxd_{\idxi_{\subidx+1}} + \auxt_{\subidx} \geq 1$. 
Then $\olnot{\auxd_{\idxi_{\subidx+1}}}$ and $\auxd_{\idxi_{\subidx}}$
propagate that $\varx_{\idxi_{\subidx+1}} \geq \symmetry(\varx_{\idxi_{\subidx+1}})$
is false, but $\olnot{\auxt_{\subidx+1}}$ and $\auxt_{\subidx}$ propagate that
$\symmetry(\varx_{\idxi_{\subidx+1}}) \geq \varx_{\idxi_{\subidx+1}}$ is false,
which is a conflict. 
Hence, we derive $\auxd_{\idxi_{\subidx+1}} + \auxt_{\subidx+1} \geq 1$ by RUP.

Finally, using $\auxd_{\idxi_{\suppsize}} + \auxt_{\suppsize} \geq 1$, 
constraints \eqref{eq:tipf1r} which yield $\auxd_{\idxi_{\suppsize}} \Leftrightarrow \auxd_{\nvar}$, 
and the two constraints  $\olnot{\auxd_\nvar} \geq 1$ and $\olnot{\auxt_{\suppsize}} \geq 1$, 
we propagate to conflict, which completes the proof.
\end{proof}

\begin{proof}[Alternative derivation using cutting planes steps]
For step 2, 3 and 4 in the proof of Lemma~\ref{lem:first_proof_goal}, we can also provide a derivation using cutting planes steps instead.

\paragraph{Showing that $\auxs_{\subidx} \Rightarrow \auxd_{\idxi_{\subidx}}$ and $\auxa_{\idxi_{\subidx}} \Rightarrow \auxt_{\subidx}$.}
We show this by induction. For the base case, adding constraint~\eqref{eq:s1.1}, \ie 
$\olnot{\auxs_1} + \varx_{\idxi_1} + \olnot{\symmetry(\varx_{\idxi_1})}  \geq 1$,
and
$2\auxd_{\idxi_1} + \olnot{\varx_{\idxi_1}} + \symmetry(\varx_{\idxi_1}) \geq 2$ (from~\eqref{eq:t1pf1supp}),
and saturating yields
$\olnot{\auxs_1} + \auxd_{\idxi_1} \geq 1$. 

For the inductive step, we first add 
\begin{align*}
3 &\cdot \big(3\olnot{\auxs_{\subidx+1}} + 2\auxs_\subidx + \varx_{\idxi_{\subidx+1}} 
 + \olnot{\symmetry(\varx_{\idxi_{\subidx+1}})} \geq 3\big)\eqcomma \\
2 &\cdot \big(3 \auxd_{\idxi_{\subidx+1}} + 3 \olnot{\auxd_{\idxi_{\subidx}}} + \auxa_{\idxi_{\subidx}}
 + \olnot{\symmetry(\varx_{\idxi_{\subidx+1}})} + \varx_{\idxi_{\subidx+1}} \geq 3\big)\eqcomma
\end{align*}
and then weaken on $\auxa_{\idxi_{\subidx}}$,  $\symmetry(\varx_{\idxi_{\subidx+1}})$, 
and $\varx_{\idxi_{\subidx+1}}$, which yields 
\begin{equation*}
9\olnot{\auxs_{\subidx+1}} + 6\auxs_\subidx + 6 \auxd_{\idxi_{\subidx+1}} + 6 \olnot{\auxd_{\idxi_{\subidx}}} \geq 7 \eqperiod
\end{equation*}
Then we add $6 \cdot (\olnot{\auxs_\subidx} + \auxd_{\idxi_{\subidx}} \geq 1)$ and saturate,
which finally yields the constraint $\olnot{\auxs_{\subidx+1}} + \auxd_{\idxi_{\subidx+1}} \geq 1$

The derivation of $\olnot{\auxa_{\idxi_{\subidx}}} + \auxt_{\subidx} \geq 1$ is similar.

\paragraph{Showing that $\auxt_{\subidx+1} + \olnot{\auxt_{\subidx}} + \auxd_{\idxi_{\subidx}} \geq 1$ and $\auxd_{\idxi_{\subidx+1}} + \olnot{\auxd_{\idxi_{\subidx}}} + \auxt_{\subidx} \geq 1$.}

To show the constraint $\auxt_{\subidx+1} + \olnot{\auxt_{\subidx}} + \auxd_{\idxi_{\subidx}} \geq 1$, we start from constraint~\eqref{eq:ti.2}, \ie
\begin{equation*}
3 \auxt_{\subidx+1} + 3 \olnot{\auxt_\subidx} + \auxs_\subidx 
 + \olnot{\symmetry(\varx_{\idxi_{\subidx+1}})} + \varx_{\idxi_{\subidx+1}} \geq 3\eqcomma
\end{equation*}
add $\olnot{\auxs_\subidx} + \auxd_{\idxi_{\subidx}} \geq 1$, weaken on $\symmetry(\varx_{\idxi_{\subidx+1}})$ and $\varx_{\idxi_{\subidx+1}}$, and then saturate. 
The derivation of  $\auxd_{\idxi_{\subidx+1}} + \olnot{\auxd_{\idxi_{\subidx}}} + \auxt_{\subidx} \geq 1$ is similar.

\paragraph{Showing $\auxd_{\idxi_{\subidx}} + \auxt_{\subidx} \geq 1$ and 
$\auxt_{\subidx+1} + \auxd_{\idxi_{\subidx}} \geq 1$ and  $\auxd_{\idxi_{\subidx+1}} + \auxt_{\subidx} \geq 1$.}

We show this by induction.  For the base case, adding constraint~\eqref{eq:t1.2}, \ie 
$2\auxt_{\idxi_1} + \varx_{\idxi_1} + \olnot{\symmetry(\varx_{\idxi_1})} \geq 2$,
and
$2\auxd_{\idxi_1} + \olnot{\varx_{\idxi_1}} + \symmetry(\varx_{\idxi_1}) \geq 2$,
and dividing by 2 yields
$\olnot{\auxs_1} + \auxd_{\idxi_1} \geq 1$. 

For the inductive step, note that adding
$\auxd_{\idxi_{\subidx}} + \auxt_{\subidx} \geq 1$ 
to  $\auxt_{\subidx+1} + \olnot{\auxt_{\subidx}} + \auxd_{\idxi_{\subidx}} \geq 1$ 
and saturating yields $\auxt_{\subidx+1} + \auxd_{\idxi_{\subidx}} \geq 1$. 

Similarly, adding $\auxd_{\idxi_{\subidx}} + \auxt_{\subidx} \geq 1$ to 
$\auxd_{\idxi_{\subidx+1}} + \olnot{\auxd_{\idxi_{\subidx}}} + \auxt_{\subidx} \geq 1$
and saturating yields $\auxd_{\idxi_{\subidx+1}} + \auxt_{\subidx} \geq 1$.

To derive $\auxd_{\idxi_{\subidx+1}} + \auxt_{\subidx+1} \geq 1$, we start by adding
\begin{align*}
 3 \auxt_{\subidx+1} + 3 \olnot{\auxt_\subidx} + \auxs_\subidx 
 + \olnot{\symmetry(\varx_{\idxi_{\subidx+1}})} + \varx_{\idxi_{\subidx+1}} &\geq 3\eqcomma \\
 3 \auxd_{\idxi_{\subidx+1}} + 3 \olnot{\auxd_{\idxi_{\subidx}}} + \auxa_{\idxi_{\subidx}}
 + \symmetry(\varx_{\idxi_{\subidx+1}}) + \olnot{\varx_{\idxi_{\subidx+1}}} &\geq 3\eqcomma
\end{align*}
and weakening on $\auxa_{\idxi_{\subidx}}$ and $\auxs_\subidx$, which yields 
\begin{equation*}
3 \auxt_{\subidx+1} + 3 \olnot{\auxt_\subidx} +  3 \auxd_{\idxi_{\subidx+1}} 
  + 3 \olnot{\auxd_{\idxi_{\subidx}}} \geq 2 \eqperiod
\end{equation*}
Then adding $3 \cdot \big(\auxt_{\subidx+1} + \auxd_{\idxi_{\subidx}} \geq 1\big)$ and 
$3 \cdot \big(\auxd_{\idxi_{\subidx+1}} + \auxt_{\subidx} \geq 1\big)$
and saturating yields $\auxd_{\idxi_{\subidx+1}} + \auxt_{\subidx+1} \geq 1$. 
\end{proof}

We proceed by showing the second proof goal. 
\begin{lemma} \label{lem:second_proof_goal}
Assume that  the lexicographical order $\preceq$ over 
$\nvar$ variables is loaded.
Let $\symmetry$ be a symmetry of the input formula and let $\suppsize = |\suppsymmetry|$. 
Then we can show the proof goal
\begin{equation*}
\core \cup \der \cup \{\neg C\} \cup \spec(\vx, \vx\uh_\symmetry, \va, \vd) \cup \ord(\vd) \vdash \perp \eqcomma
\end{equation*}
using $\bigoh{\suppsize}$ RUP steps and cutting planes steps, 
where the RUP steps require $\bigoh{\nvar}$ propagations in total.
\end{lemma}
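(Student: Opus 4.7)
The plan is to mirror the strategy of Lemma~\ref{lem:first_proof_goal} in a dual form. Intuitively, since $\symmetry$ acts as the identity on variables outside $\suppsymmetry$, the premise $\ord(\vd) \synteq \auxd_\nvar \geq 1$ forces the specification's ``$\vx \lexorder \vx\uh_\symmetry$'' to collapse to the lexicographic comparison on the support positions only. This must then agree with the circuit variable $\auxt_\suppsize$, which encodes exactly that support-only comparison, and that in turn contradicts $\neg \constrc \synteq \olnot{\auxt_\suppsize} \geq 1$.

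First I would rewrite the specification constraints under the substitution. As in the first proof goal, positions $\idxi \notin \suppsymmetryidxs$ yield trivial equivalences $\auxa_\idxi \Leftrightarrow \auxa_{\idxi-1}$ and $\auxd_\idxi \Leftrightarrow \auxd_{\idxi-1}$, and by a chain of RUP steps (whose propagations sum to $\bigoh{\nvar}$) I derive the collapsed equivalences $\auxa_{\idxi_\subidx-1} \Leftrightarrow \auxa_{\idxi_{\subidx-1}}$, $\auxd_{\idxi_\subidx-1} \Leftrightarrow \auxd_{\idxi_{\subidx-1}}$, together with $\auxd_{\idxi_\suppsize} \Leftrightarrow \auxd_\nvar$ (and the analogous base case at position~$1$, with a minor modification if $1 \in \suppsymmetryidxs$). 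Using these, the remaining specification constraints at support positions take the clean form
\begin{align*}
\auxa_{\idxi_1} &\Leftrightarrow (\varx_{\idxi_1} \geq \symmetry(\varx_{\idxi_1}))\eqcomma \\
\auxa_{\idxi_{\subidx+1}} &\Leftrightarrow (\auxa_{\idxi_\subidx} \wedge \varx_{\idxi_{\subidx+1}} \geq \symmetry(\varx_{\idxi_{\subidx+1}}))\eqcomma \\
\auxd_{\idxi_1} &\Leftrightarrow (\symmetry(\varx_{\idxi_1}) \geq \varx_{\idxi_1})\eqcomma \\
\auxd_{\idxi_{\subidx+1}} &\Leftrightarrow (\auxd_{\idxi_\subidx} \wedge (\olnot{\auxa_{\idxi_\subidx}} \vee \symmetry(\varx_{\idxi_{\subidx+1}}) \geq \varx_{\idxi_{\subidx+1}}))\eqcomma
\end{align*}
which is structurally the mirror image of the circuit definitions \eqref{eq:s1}--\eqref{eq:ti}, with the roles of ``$\vx$'' and ``$\symmetry(\vx)$'' swapped relative to the first proof goal.

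Next, I would prove by induction on $\subidx$ the two implications $\auxa_{\idxi_\subidx} \Rightarrow \auxs_\subidx$ and $\auxd_{\idxi_\subidx} \Rightarrow \auxt_\subidx$, which are the \emph{dual} directions of the implications proved in Lemma~\ref{lem:first_proof_goal}: the direction flips because the specification now compares $\vx$ to $\vx\uh_\symmetry$ rather than the other way around, so $\auxa, \auxd$ are now aligned with $\auxs, \auxt$ rather than opposed to them. Each inductive step uses a single RUP step with a constant number of propagations on the rewritten specification together with the circuit constraints \eqref{eq:s1.1}--\eqref{eq:ti.2}. Finally, I would propagate $\auxd_{\idxi_\suppsize} \geq 1$ from $\auxd_\nvar \geq 1$ through the equivalence $\auxd_{\idxi_\suppsize} \Leftrightarrow \auxd_\nvar$, then obtain $\auxt_\suppsize \geq 1$ via $\auxd_{\idxi_\suppsize} \Rightarrow \auxt_\suppsize$, which directly contradicts $\olnot{\auxt_\suppsize} \geq 1$. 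The main obstacle is the bookkeeping: one must amortize the propagation chains across out-of-support positions so that the total propagation budget stays at $\bigoh{\nvar}$ (rather than $\bigoh{\nvar \suppsize}$), and handle the degenerate boundary case $1 \in \suppsymmetryidxs$; the rest is essentially symmetric to the proof of Lemma~\ref{lem:first_proof_goal}.
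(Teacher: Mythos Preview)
Your overall strategy is sound and close to the paper's: rewrite the specification onto the support, relate the specification variables $\auxa_{\idxi_\subidx},\auxd_{\idxi_\subidx}$ to the circuit variables $\auxs_\subidx,\auxt_\subidx$, and contradict $\olnot{\auxt_\suppsize}\geq 1$ with $\auxd_\nvar\geq 1$. The rewriting step and the complexity accounting are exactly as in the paper.

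There is, however, a genuine gap in the inductive step. You propose to prove $\auxa_{\idxi_\subidx}\Rightarrow\auxs_\subidx$ and then use it in the induction for $\auxd_{\idxi_\subidx}\Rightarrow\auxt_\subidx$. But look at what the inductive step actually needs. From the rewritten specification you obtain, under the hypothesis $\auxd_{\idxi_{\subidx+1}}$, the disjunction $\olnot{\auxa_{\idxi_\subidx}}\vee(\symmetry(\varx_{\idxi_{\subidx+1}})\geq \varx_{\idxi_{\subidx+1}})$; to fire the circuit constraint~\eqref{eq:ti.2} you must establish $\olnot{\auxs_\subidx}\vee(\symmetry(\varx_{\idxi_{\subidx+1}})\geq \varx_{\idxi_{\subidx+1}})$. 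Passing from $\olnot{\auxa_{\idxi_\subidx}}$ to $\olnot{\auxs_\subidx}$ requires the \emph{opposite} implication $\auxs_\subidx\Rightarrow\auxa_{\idxi_\subidx}$; your direction $\auxa_{\idxi_\subidx}\Rightarrow\auxs_\subidx$ gives only $\olnot{\auxs_\subidx}\Rightarrow\olnot{\auxa_{\idxi_\subidx}}$, which is useless here. So the ``duality'' heuristic misfires: although $\auxa_{\idxi_\subidx}$ and $\auxs_\subidx$ are semantically equivalent in this proof goal, the one-sided implication you pick is exactly the one that does not chain through the $\auxd/\auxt$ recursion.

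The fix is easy and preserves your complexity bound: prove $\auxs_\subidx\Rightarrow\auxa_{\idxi_\subidx}$ instead (the induction is structurally the same, since the two circuits have identical definitions on the support), or prove the full equivalence $\auxa_{\idxi_\subidx}\Leftrightarrow\auxs_\subidx$. The paper takes a slight variant of this route: it first propagates $\auxd_{\idxi_\subidx}\geq 1$ for all $\subidx$ from the premise $\auxd_\nvar\geq 1$, then proves $\auxs_\subidx\Rightarrow\auxa_{\idxi_\subidx}$, and finally derives $\auxt_\subidx\geq 1$ for all $\subidx$ directly (rather than as an implication from $\auxd_{\idxi_\subidx}$), the contradiction arising at $\subidx=\suppsize$.
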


\begin{proof} 
Note that $\neg \constrc \synteq \olnot{\auxt_{\suppsize}} \geq 1$ still has 
the interpretation that 
$(\varx_{\idxi_1}, \ldots, \varx_{\idxi_\suppsize}) 
  \not\lexorder (\symmetry(\varx_{\idxi_1}), \ldots, \symmetry(\varx_{\idxi_\suppsize}))$, 
while the premises $\spec(\vx, \vx\uh_\symmetry, \va, \vd) \cup \ord(\vd)$ have
the interpretation that $(\varx_{\idxi_1}, \ldots, \varx_{\idxi_\suppsize}) 
  \lexorder (\symmetry(\varx_{\idxi_1}), \ldots, \symmetry(\varx_{\idxi_\suppsize}))$,
so we should be able to derive contradiction.

The idea of the proof is to show that $\auxa_{\idxi_\subidx}$ and 
$\auxs_\subidx$ have the same interpretation and that $\auxd_{\idxi_\subidx}$
and $\auxt_\subidx$ have the same interpretation. This then gives a 
contradiction since we have the premises $\auxd_{\nvar} \geq 1$ 
but $\olnot{\auxt_{\suppsize}} \geq 1$.

\paragraph{Overview.}  The proof consists of four main steps:
\begin{enumerate}
\item Rewriting the specification $\spec(\varx, \varx\uh_{\symmetry}, \auxa, \auxd)$
to only $\bigoh{\suppsize}$ constraints involving  auxiliary variables 
$\auxa_{\idxi_\subidx}$ and $\auxd_{\idxi_\subidx}$ and variables from $\suppsymmetry$ only.
\item Showing that $\auxd_{\idxi_{\subidx}} \geq 1$.
\item Showing that $\auxa_{\idxi_{\subidx}} + \olnot{\auxs_{\subidx}} \geq 1$.
\item Showing that $\auxt_{\subidx} \geq 1$.
\end{enumerate}

The first of these steps requires $\bigoh{\suppsize}$ RUP steps for which 
$\bigoh{\nvar}$ propagations are needed in total, and $\bigoh{\suppsize}$ cutting planes steps. 
The remaining steps require  $\bigoh{\suppsize}$ RUP steps for which 
$\bigoh{\suppsize}$ propagations are needed.

\paragraph{Rewriting the specification constraints.} 
Write $\suppsymmetryidxs = \{\idxi_1, \dots, \idxi_\suppsize\}$. 
We again consider the case $1 \not\in \suppsymmetryidxs$. 
Under the substitution $\symmetry$, the specification constraints $\spec(\varx, \varx\uh_{\symmetry}, \auxa, \auxd)$  are (semantically) given by 
\begin{align}
\auxa_1 &\geq 1\eqcomma \label{eq:s1pf2}\\
\auxa_{\idxi} &\Leftrightarrow (\auxa_{\idxi-1} \wedge \varx_{\idxi} \geq \symmetry(\varx_{\idxi}))\eqcomma && (\idxi \in \suppsymmetryidxs) \label{eq:sipf2}   \\
\auxa_{\idxi} &\Leftrightarrow \auxa_{\idxi-1}\eqcomma && (\idxi \not\in \suppsymmetryidxs) \label{eq:sipf2r}   \\
\auxd_1 &\geq 1\eqcomma \label{eq:t1pf2} \\
\auxd_{\idxi} &\Leftrightarrow (\auxd_{\idxi-1} \wedge (\olnot{\auxa_{\idxi-1}} \vee \symmetry(\varx_{\idxi})\!\geq\! \varx_{\idxi}))\eqcomma\!\! && (\idxi \in \suppsymmetryidxs) \!\!\label{eq:tipf2} \\ 
\auxd_{\idxi} &\Leftrightarrow \auxd_{\idxi-1}\eqperiod && (\idxi \not\in \suppsymmetryidxs)  \label{eq:tipf2r}
\end{align}
In the same way as in Lemma~\ref{lem:first_proof_goal}, we can rewrite this to 
\begin{align}
\auxa_{\idxi_1} &\Leftrightarrow (\varx_{\idxi_1} \geq \symmetry(\varx_{\idxi_1}))\eqcomma \label{eq:s1pf2supp}\\
\auxa_{\idxi_{\subidx+1}} &\Leftrightarrow (\auxa_{\idxi_{\subidx}} \;\wedge\; \varx_{\idxi_{\subidx+1}} \geq \symmetry(\varx_{\idxi_{\subidx+1}}) )\eqcomma\label{eq:sipf2supp}   \\
\auxd_{\idxi_1} &\Leftrightarrow (\symmetry(\varx_{\idxi_1}) \geq \varx_{\idxi_1})\eqcomma \label{eq:t1pf2supp} \\
\auxd_{\idxi_{\subidx+1}} &\Leftrightarrow (\auxd_{\idxi_{\subidx}} \;\wedge\; (\olnot{\auxa_{\idxi_{\subidx}}} \,\vee\, \symmetry(\varx_{\idxi_{\subidx+1}}) \geq \varx_{\idxi_{\subidx+1}} ))\eqperiod \label{eq:tipf2supp}
\end{align}
We also show by RUP using \eqref{eq:tipf2r} that $\auxd_{\idxi_{\suppsize}} \Leftrightarrow \auxd_{\nvar}$.

When replacing $\auxa_{\idxi_\subidx}$ with $\auxs_\subidx$ and  $\auxd_{\idxi_\subidx}$ with $\auxt_\subidx$, constraints  \eqref{eq:s1pf2supp} up to \eqref{eq:tipf2supp} exactly match constraints \eqref{eq:s1} up to \eqref{eq:ti}.

\paragraph{Showing that $\auxd_{\idxi_{\subidx}} \geq 1$.} 
Since we have the constraint \mbox{$\auxd_{\nvar} \geq 1$}, 
first $\auxd_{\idxi_{\suppsize}} \Leftrightarrow \auxd_{\nvar}$ propagates that 
$\auxd_{\idxi_{\suppsize}} \geq 1$, and then \eqref{eq:tipf2supp} 
inductively propagates $\auxd_{\idxi_{\subidx}} \geq 1$ 
for all $1 \leq \subidx \leq \suppsize$, starting with the high indices. 

\paragraph{Showing that $\auxa_{\idxi_{\subidx}} + \olnot{\auxs_{\subidx}} \geq 1$.} 
We show this inductively. The negation of 
$\auxa_{\idxi_{\subidx}} + \olnot{\auxs_{\subidx}} \geq 1$ 
propagates $\olnot{\auxa_{\idxi_{\subidx}}}$ and $\auxs_{\subidx}$. 

For the base case, we note that $\olnot{\auxa_{\idxi_{1}}}$ 
propagates that $\varx_{\idxi_1} \geq \symmetry(\varx_{\idxi_1})$
is false (\ie that $\olnot{\varx_{\idxi_1}}$ and $\symmetry(\varx_{\idxi_1})$ hold)
using \eqref{eq:s1pf2supp}, 
which then yields together with $\auxs_1$ a conflict in \eqref{eq:s1}.

For the inductive step, we note that  $\auxs_{\subidx+1}$ propagates 
$\auxs_{\subidx}$ using \eqref{eq:si}, which by the induction hypothesis 
propagates $\auxa_{\idxi_{\subidx}}$. Together with $\olnot{\auxa_{\idxi_{\subidx+1}}}$
and \eqref{eq:sipf2supp} this then propagates the inequality 
$\varx_{\idxi_{\subidx+1}} \geq \symmetry(\varx_{\idxi_{\subidx+1}})$ to false, 
which then yields together with $\auxs_{\subidx+1}$  a conflict in \eqref{eq:si}. 

\paragraph{Showing that $\auxt_{\subidx} \geq 1$.}   We show this inductively. 
For the base case, the negation  $\olnot{\auxt_1} \geq 1$ propagates the inequality $\symmetry(\varx_{\idxi_1}) \geq \varx_{\idxi_1}$ to false, which the propagates $\auxd_{\idxi_1}$ to false, which is a contradiction. 

For the inductive step, the negation  $\olnot{\auxt_{\subidx+1}} \geq 1$ propagates together with  $\auxt_{\subidx} \geq 1$ using \eqref{eq:ti} that $\olnot{\auxs_{\subidx}} \vee (\symmetry(\varx_{\idxi_{\subidx+1}}) \geq \varx_{\idxi_{\subidx+1}})$ is false, \ie that $\auxs_{\subidx}$, $\olnot{\symmetry(\varx_{\idxi_{\subidx+1}})}$ and $\varx_{\idxi_{\subidx+1}}$. Then $\auxs_{\subidx}$ propagates $\auxa_{\idxi_{\subidx}}$, which together means that $\olnot{\auxa_{\idxi_{\subidx}}} \,\vee\, (\symmetry(\varx_{\idxi_{\subidx+1}}) \geq \varx_{\idxi_{\subidx+1}})$ is false. This is a conflict in \eqref{eq:tipf2supp}, since we know that $\auxd_{\idxi_{\subidx+1}}$ holds.

We finish the proof by noting that this shows that $\auxt_{\suppsize} \geq 1$, which directly conflicts with the premise $\olnot{\auxt_{\suppsize}} \geq 1$.
\end{proof}

\begin{proof}[Alternative derivation using cutting planes steps]
For step 3 and 4 in the proof of Lemma~\ref{lem:second_proof_goal}, we can also provide a derivation using cutting planes steps instead.

\paragraph{Showing that $\auxa_{\idxi_{\subidx}} + \olnot{\auxs_{\subidx}} \geq 1$.} 
For the base case, we add the constraint 
$2 \auxa_{\idxi_1} + \olnot{\varx_{\idxi_1}} + \symmetry(\varx_{\idxi_1}) \geq 2$ 
(from \eqref{eq:s1pf2supp}) and the constraint 
$\olnot{\auxs_1} + \varx_{\idxi_1} + \olnot{\symmetry(\varx_{\idxi_1})} \geq 1$ 
(from \eqref{eq:s1}) and saturate.

For the inductive step, we add the constraints
\begin{align*}
& 2\auxa_{\idxi_{\subidx+1}} + 2\olnot{\auxa_{\idxi_\subidx}} + \olnot{\varx_{\idxi_{\subidx+1}}} 
 + \symmetry(\varx_{\idxi_{\subidx+1}}) \geq 2 \\
& 3\olnot{\auxs_{\subidx+1}} + 2\auxs_\subidx + \varx_{\idxi_{\subidx+1}} 
 + \olnot{\symmetry(\varx_{\idxi_{\subidx+1}})} \geq 3 \\
& 2 \cdot \big(\auxa_{\idxi_{\subidx}} + \olnot{\auxs_{\subidx}} \geq 1\big)
\end{align*}
from \eqref{eq:sipf2supp}, \eqref{eq:si} and the induction hypothesis respectively, and then saturate, which yields $\auxa_{\idxi_{\subidx+1}} + \olnot{\auxs_{\subidx+1}} \geq 1$. 

\paragraph{Showing that $\auxt_{\subidx} \geq 1$.} 
For the base case, we add the constraint 
$2 \auxt_{1} + \olnot{\symmetry(\varx_{\idxi_1})} + \varx_{\idxi_1} \geq 2$ 
(from \eqref{eq:t1}) and the constraint 
$\olnot{\auxd_{\idxi_1}} + \olnot{\varx_{\idxi_1}} + \symmetry(\varx_{\idxi_1})  \geq 1$ 
(from \eqref{eq:t1pf2supp}) and the constraint 
$\auxd_{\idxi_1} \geq 1$ and saturate, which yields  $\auxt_{1} \geq 1$.

For the inductive step, we add the constraints
\begin{align*}
& 4\olnot{\auxd_{\idxi_{\subidx+1}}} + 3\auxd_{\idxi_\subidx} + \olnot{\auxa_{\idxi_\subidx}} 
 + \symmetry(\varx_{\idxi_{\subidx+1}}) + \olnot{\varx_{\idxi_{\subidx+1}}} \geq 4 \\
& 3 \auxt_{\subidx+1} + 3 \olnot{\auxt_\subidx} + \auxs_\subidx 
 + \olnot{\symmetry(\varx_{\idxi_{\subidx+1}})} + \varx_{\idxi_{\subidx+1}} \geq 3 \\
 & \auxa_{\idxi_{\subidx}} + \olnot{\auxs_{\subidx}} \geq 1
\end{align*}
from \eqref{eq:ti}, \eqref{eq:tipf2supp}, and earlier derivations. 
This yields 
\begin{equation*}
4\olnot{\auxd_{\idxi_{\subidx+1}}} + 3\auxd_{\idxi_\subidx} 
 + 3 \auxt_{\subidx+1} + 3 \olnot{\auxt_\subidx} \geq 4
\end{equation*}
Then we add  the constraints $4 \cdot \big(\auxd_{\idxi_{\subidx+1}} \geq 1\big)$ and $3 \cdot \big(\auxt_{\subidx} \geq 1\big)$ (from the induction hypothesis) and weaken away $\auxd_{\idxi_\subidx}$ and saturate, which yields  $\auxt_{\subidx+1} \geq 1$.
\end{proof}

Finally, we show how to derive the symmetry breaking clauses \eqref{eq:symbreak_s1.1} 
up to \eqref{eq:symbreak_ti} from the constraint $\auxt_{\suppsize} \geq 1$ 
that we derived using dominance. 

\begin{lemma}  \label{lem:symmetrybreakingclauses}
Let $\symmetry$ be a symmetry of the input formula and let $\suppsize = |\suppsymmetry|$. 
Then we can derive the symmetry breaking clauses 
\eqref{eq:symbreak_s1.1} up to \eqref{eq:symbreak_ti} 
from the constraint $\auxt_{\suppsize} \geq 1$ 
using $\bigoh{\suppsize}$ RUP steps and cutting planes steps.
\end{lemma}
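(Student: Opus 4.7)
The plan is to produce the six families of target clauses in three short stages, each costing only $\bigoh{1}$ RUP or cutting planes steps per index~$\subidx$, giving the claimed $\bigoh{\suppsize}$ total bound. The first stage propagates $\auxt_{\suppsize} \geq 1$ downward along the auxiliary chain to establish $\auxt_\subidx \geq 1$ for every $1 \leq \subidx \leq \suppsize$: from \eqref{eq:ti.1}, weakening away the three literals $\olnot{\auxs_\subidx}$, $\symmetry(\varx_{\idxi_{\subidx+1}})$ and $\olnot{\varx_{\idxi_{\subidx+1}}}$ and then saturating yields $\olnot{\auxt_{\subidx+1}} + \auxt_\subidx \geq 1$, which combined with the already-derived $\auxt_{\subidx+1} \geq 1$ produces $\auxt_\subidx \geq 1$ in one further step. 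Iterating downward from $\subidx = \suppsize-1$ uses $\bigoh{\suppsize}$ steps in total.

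The second stage extracts the actual lex-leader clauses \eqref{eq:symbreak_t1} and \eqref{eq:symbreak_ti}. Adding $\auxt_1 \geq 1$ to \eqref{eq:t1.1} so that the $\auxt_1$ and $\olnot{\auxt_1}$ terms cancel immediately produces \eqref{eq:symbreak_t1}. For \eqref{eq:symbreak_ti} the cleanest route is a single RUP step: negating the target clause forces $\auxs_\subidx = 1$, $\symmetry(\varx_{\idxi_{\subidx+1}}) = 0$ and $\varx_{\idxi_{\subidx+1}} = 1$, which together with $\auxt_\subidx \geq 1$ and $\auxt_{\subidx+1} \geq 1$ immediately falsifies \eqref{eq:ti.1}. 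Equivalently, one can weaken \eqref{eq:ti.1} on $\auxt_\subidx$ and then add $4 \cdot (\auxt_{\subidx+1} \geq 1)$, letting the $\auxt_{\subidx+1}$ and $\olnot{\auxt_{\subidx+1}}$ terms cancel against the right-hand side. Either way the cost is $\bigoh{1}$ per index.

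The third stage handles the purely definitional clauses \eqref{eq:symbreak_s1.1}--\eqref{eq:symbreak_si.2}, which do not depend on the lex-leader hypothesis at all: each is obtained from the appropriate half \eqref{eq:s1.2} or \eqref{eq:si.2} of the $\auxs$-reification by weakening on a single literal and saturating. For instance, weakening \eqref{eq:s1.2} on $\symmetry(\varx_{\idxi_1})$ yields $2\auxs_1 + \olnot{\varx_{\idxi_1}} \geq 1$, which saturates to \eqref{eq:symbreak_s1.1}, and weakening \eqref{eq:si.2} on $\varx_{\idxi_{\subidx+1}}$ saturates to \eqref{eq:symbreak_si.2}. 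There is no real conceptual obstacle here: once $\auxt_\subidx \geq 1$ is available for every $\subidx$, the six target families fall out of the reifications at hand through constant-size local manipulations. The only care required is to ensure that the cancellation $a \ell + a \olnot{\ell} = a$ is realised by the checker so that the entire derivation stays linear in $\suppsize$ rather than blowing up with the loaded order width $\nvar$.
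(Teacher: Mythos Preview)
Your proposal is correct and follows essentially the same approach as the paper: first propagate $\auxt_\suppsize \geq 1$ downward to obtain all $\auxt_\subidx \geq 1$, then combine these with \eqref{eq:t1.1} and \eqref{eq:ti.1} to extract the lex-leader clauses, and obtain the $\auxs$-clauses directly from the reifications \eqref{eq:s1.2} and \eqref{eq:si.2} by a single weakening plus saturation each. The only cosmetic differences are that the paper uses RUP (rather than explicit weaken-and-saturate) for the downward propagation of the $\auxt_\subidx$, and it orders the addition of $4\cdot(\auxt_{\subidx+1}\geq 1)$ before the weakening on $\auxt_\subidx$ when deriving \eqref{eq:symbreak_ti}; neither affects correctness or the $\bigoh{\suppsize}$ bound.
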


\begin{proof}
We can derive constraint~\eqref{eq:symbreak_s1.1}, \ie 
$\auxs_1 + \olnot{\varx_{\idxi_1}}  \geq 1$, 
by weakening constraint~\eqref{eq:s1.2}, \ie 
$2\auxs_1 + \olnot{\varx_{\idxi_1}} + \symmetry(\varx_{\idxi_1}) \geq 2$, 
on $\symmetry(\varx_{\idxi_1})$ and saturating. 
Similarly, we can derive~\eqref{eq:symbreak_s1.2}, \ie 
$\auxs_1 + \symmetry(\varx_{\idxi_1}) \geq 1$, 
by weakening~\eqref{eq:s1.2} on $\varx_{\idxi_1}$ and saturating.

For constraint~\eqref{eq:symbreak_si.1}, \ie 
$\auxs_{\subidx+1} + \olnot{\auxs_\subidx} + \olnot{\varx_{\idxi_{\subidx+1}}} \geq 1$, 
we weaken~\eqref{eq:si.2}, \ie 
$2\auxs_{\subidx+1} + 2\olnot{\auxs_\subidx} + \olnot{\varx_{\idxi_{\subidx+1}}} 
 + \symmetry(\varx_{\idxi_{\subidx+1}}) \geq 2$, 
on $\symmetry(\varx_{\idxi_{\subidx+1}})$ and saturate. 
Similarly, for \eqref{eq:symbreak_si.2}, \ie 
$\auxs_{\subidx+1} + \olnot{\auxs_\subidx} + \symmetry(\varx_{\idxi_{\subidx+1}}) \geq 1$,
we weaken~\eqref{eq:si.2} on $\varx_{\idxi_{\subidx+1}}$ and saturate. 

It remains to prove constraints \eqref{eq:symbreak_t1}, \ie 
\mbox{$\symmetry(\varx_{\idxi_1}) + \olnot{\varx_{\idxi_1}} \geq 1$}, 
and \eqref{eq:symbreak_ti}, \ie 
$\olnot{\auxs_{\subidx}} + \symmetry(\varx_{\idxi_{\subidx+1}}) + \olnot{\varx_{\idxi_{\subidx+1}}} \geq 1$.
For this, we first derive using RUP from $\auxt_{\suppsize} \geq 1$ and 
\eqref{eq:ti.1} that $\auxt_{\subidx} \geq 1$ for $1 \leq \subidx \leq \suppsize$ (starting with the high indices). 
Then adding $\auxt_1 \geq 1$ to \eqref{eq:t1.1} yields \eqref{eq:symbreak_t1}, 
while adding $4 \cdot \big(\auxt_{\subidx+1} \geq 1\big)$ to \eqref{eq:ti.1}, 
weakening on $\auxt_{\subidx}$ and saturating yields  \eqref{eq:symbreak_ti}.
\end{proof}

Together this shows

\begin{theorem} \label{thm:symmetrybreakingclauses}
Assume that  the lexicographical order $\preceq$ over 
$\nvar$ variables is loaded.
Let $\symmetry$ be a symmetry of the input formula and let $\suppsize = |\suppsymmetry|$. 
Then we can derive the symmetry breaking clauses  \eqref{eq:symbreak_s1.1} up to \eqref{eq:symbreak_ti} using a proof of size $\bigoh{\suppsize}$ that can be checked in time $\bigoh{\nvar}$.
\end{theorem}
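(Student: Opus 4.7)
The plan is to assemble the theorem by pipelining three ingredients that have already been prepared in the appendix: (i) the top-level definition of the circuit over the auxiliary variables $\auxs_\subidx$ and $\auxt_\subidx$, (ii) a single application of the dominance rule to derive $\auxt_{\suppsize} \geq 1$, and (iii) Lemma~\ref{lem:symmetrybreakingclauses}, which turns that single constraint into the symmetry breaking clauses. The main job of the proof is therefore to add up the sizes and checking times of these three phases and verify that nothing breaks the $\bigoh{\suppsize}$ proof size / $\bigoh{\nvar}$ checking time bound.

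First, I would introduce the $4\suppsize - 2$ pseudo-Boolean constraints \eqref{eq:s1.1}--\eqref{eq:ti.2} that encode the circuit defining $\auxs_\subidx$ and $\auxt_\subidx$, using the redundance rule with a witness that only assigns the fresh variables $\auxs_\subidx$, $\auxt_\subidx$ in the obvious way. Each such application witnesses only over variables that do not occur in $\vz$, so the order proof goal reduces to reflexivity and, by the implicit reflexivity optimization described in Section~\ref{sec:implementationdetails}, is discharged without loading the specification. This gives $\bigoh{\suppsize}$ constraints produced in $\bigoh{\suppsize}$ proof lines, each checked in constant time, so the phase contributes $\bigoh{\suppsize}$ to both proof size and checking time.

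Second, I would apply the dominance rule with witness $\symmetry$ to derive $\constrc \synteq \auxt_{\suppsize} \geq 1$. The core proof goal $\core \vdash \core\uh_\symmetry$ is discharged immediately from $\core\uh_\symmetry = \core$, since $\symmetry$ is a syntactic symmetry of the input formula. The two order proof goals are exactly the statements proved in Lemma~\ref{lem:first_proof_goal} and Lemma~\ref{lem:second_proof_goal}, which provide derivations of size $\bigoh{\suppsize}$ that can be checked in time $\bigoh{\nvar}$ (the $\nvar$ term comes from the propagations needed while rewriting $\spec(\vx\uh_\symmetry, \vx, \va, \vd)$ and $\spec(\vx, \vx\uh_\symmetry, \va, \vd)$ on the full range of indices). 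Together with the trivial core goal, this step of the overall derivation still has proof size $\bigoh{\suppsize}$ and checking time $\bigoh{\nvar}$.

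Third, once $\auxt_{\suppsize} \geq 1$ is in the derived set, I invoke Lemma~\ref{lem:symmetrybreakingclauses} to derive the clauses \eqref{eq:symbreak_s1.1}--\eqref{eq:symbreak_ti} with $\bigoh{\suppsize}$ RUP and cutting planes steps, each involving only constantly many previously derived short constraints, so both proof size and checking time for this phase are $\bigoh{\suppsize}$. Summing the three phases yields the claimed $\bigoh{\suppsize}$ proof size and $\bigoh{\nvar}$ checking time. The only genuinely delicate point is ensuring that the checking-time bound is truly $\bigoh{\nvar}$ rather than $\bigoh{\nvar \suppsize}$; this is where the lazy-constraint-loading and implicit-reflexivity optimizations from Section~\ref{sec:implementationdetails} are essential, since otherwise each of the $\bigoh{\suppsize}$ auxiliary redundance steps would pay $\bigoh{\nvar}$ to reload $\spec$, which I expect to be the main subtlety to point out in the write-up.
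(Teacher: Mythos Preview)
Your proposal is correct and follows essentially the same approach as the paper: the theorem is a summary statement that assembles the circuit definition via redundance, the dominance application justified by Lemmas~\ref{lem:first_proof_goal} and~\ref{lem:second_proof_goal}, and Lemma~\ref{lem:symmetrybreakingclauses}, with the same size and checking-time accounting you give. Your explicit remark about the implicit-reflexivity and lazy-loading optimizations being needed to avoid an $\bigoh{\nvar\suppsize}$ checking cost is exactly the point the paper flags in the discussion after the circuit constraints.
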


\section{Example Of Symmetry Breaking Proof}
\label{app:example}

We present an example of a proof in the \textsc{VeriPB} format
using auxiliary preorder variables. %
In order to keep the size of the proof somewhat manageable, 
we consider a toy example:
the pigeonhole principle with 3 pigeons and 2 holes.
This formula contains variables $x_i$, where
$x_1$ and $x_2$ encode that pigeon 1 flies into hole 1 and 2, 
$x_3$ and $x_4$ encode that pigeon 2 flies into hole 1 and 2, and
$x_5$ and $x_6$ encode that pigeon 3 flies into hole 1 and 2, respectively.
This unsatisfiable formula claims that, first,
each pigeon is in some hole:
\begin{align*}
    x_1 \vee x_2&&
    x_3 \vee x_4&&
    x_5 \vee x_6\eqcomma
\end{align*}
and second, that each hole contains at most one pigeon:
\begin{align*}
    \overline{x_1} \vee {} & \overline{x_3} &
    \overline{x_1} \vee {} & \overline{x_5} &
    \overline{x_3} \vee {} & \overline{x_5} \\
    \overline{x_2} \vee {} & \overline{x_4} &
    \overline{x_2} \vee {} & \overline{x_6} &
    \overline{x_4} \vee {} & \overline{x_6}\eqperiod
\end{align*}
Our proof example breaks two symmetries of this formula:
\begin{align*}
\sigma = {} & \{ x_1 \mapsto x_3,\;x_2 \mapsto x_4,\;x_3\mapsto x_1,\;x_4\mapsto x_2\}\eqcomma\\
\tau = {} & \{ x_1\mapsto x_6,\;x_2\mapsto x_5,\;x_3 \mapsto x_2, \\
          & \qquad \;x_4\mapsto x_1,\;x_5\mapsto x_4,\;x_6 \mapsto x_3\}\eqcomma
\end{align*}
where $\sigma$ swaps pigeons 1 and 2, and $\tau$ moves every pigeon to the other hole,
and renames it to the previous pigeon.

The proof starts by defining a preorder \texttt{lex6}, corresponding to the lexicographic order over
a list of 6 variables.
\begin{lstlisting}[language=veripb,numbers=none,frame=none,aboveskip=\smallskipamount,belowskip=\smallskipamount,xleftmargin=2ex]
pseudo-Boolean proof version 3.0
def_order lex6
vars
left u1 u2 u3 u4 u5 u6;
right v1 v2 v3 v4 v5 v6;
aux $a1 $a2 $a3 $a4 $a5 $d1 $d2 $d3 $d4 $d5 $d6;
end vars;
\end{lstlisting}
Here, variables $u_i$ and $v_i$ are declared as left-hand and right-hand
variables for the preorder \texttt{lex6}. Furthermore, auxiliary preorder
variables $a_i$ and $d_i$ are declared.
Next, the specification is declared, following Lemma~\ref{lem:lex_using_aux_PB}.
\begin{lstlisting}[language=veripb,numbers=none,frame=none,aboveskip=\smallskipamount,belowskip=\smallskipamount,xleftmargin=1ex]
spec
red +1 ~$a1 +1 u1 +1 ~v1 >= 1 : $a1 -> 0;
red +2 $a1 +1 ~u1 +1 v1 >= 2 : $a1 -> 1;
red +3 ~$a2 +2 $a1 +1 u2 +1 ~v2 >= 3 : $a2 -> 0;
red +2 $a2 +2 ~$a1 +1 ~u2 +1 v2 >= 2 : $a2 -> 1;
\end{lstlisting}
Here, the constraint $\overline{a_1} + u_1 + \overline{v_1} \geq 1$ is first
introduced in the specification by redundance with witness
$\{a_1 \mapsto 0\}$. Since $a_1$ is a fresh variable at this point,
this is correct.
Next, the constraint $a_1 + \overline{u_1} + v_1 \geq 2$ is introduced
by redundance with witness $\{a_1 \mapsto 1\}$.
Note that this witness satisfies the previous constraint,
so this is correct too.
The specification then introduces constraints
$3\overline{a_2} + 2a_1 + u_2 + \overline{v_2} \geq 3$
and $2a_2 + \overline{a_1} + \overline{u_2} + v_2 \geq 2$
with witnesses $\{a_2 \mapsto 0\}$ and $\{a_2 \mapsto 1\}$,
which are correct for similar reasons.
The rest of the specification is introduced in a similar way:
\begin{lstlisting}[language=veripb,numbers=none,frame=none,aboveskip=\smallskipamount,belowskip=\smallskipamount,xleftmargin=1ex]
red +3 ~$a3 +2 $a2 +1 u3 +1 ~v3 >= 3 : $a3 -> 0;
red +2 $a3 +2 ~$a2 +1 ~u3 +1 v3 >= 2 : $a3 -> 1;
red +3 ~$a4 +2 $a3 +1 u4 +1 ~v4 >= 3 : $a4 -> 0;
red +2 $a4 +2 ~$a3 +1 ~u4 +1 v4 >= 2 : $a4 -> 1;
red +3 ~$a5 +2 $a4 +1 u5 +1 ~v5 >= 3 : $a5 -> 0;
red +2 $a5 +2 ~$a4 +1 ~u5 +1 v5 >= 2 : $a5 -> 1;
red +1 ~$d1 +1 ~u1 +1 v1 >= 1 : $d1 -> 0;
red +2 $d1 +1 u1 +1 ~v1 >= 2 : $d1 -> 1;
red +4 ~$d2 +3 $d1 +1 ~$a1 +1 ~u2 +1 v2 >= 4 : $d2 -> 0;
red +3 $d2 +3 ~$d1 +1 $a1 +1 u2 +1 ~v2 >= 3 : $d2 -> 1;
red +4 ~$d3 +3 $d2 +1 ~$a2 +1 ~u3 +1 v3 >= 4 : $d3 -> 0;
red +3 $d3 +3 ~$d2 +1 $a2 +1 u3 +1 ~v3 >= 3 : $d3 -> 1;
red +4 ~$d4 +3 $d3 +1 ~$a3 +1 ~u4 +1 v4 >= 4 : $d4 -> 0;
red +3 $d4 +3 ~$d3 +1 $a3 +1 u4 +1 ~v4 >= 3 : $d4 -> 1;
red +4 ~$d5 +3 $d4 +1 ~$a4 +1 ~u5 +1 v5 >= 4 : $d5 -> 0;
red +3 $d5 +3 ~$d4 +1 $a4 +1 u5 +1 ~v5 >= 3 : $d5 -> 1;
red +4 ~$d6 +3 $d5 +1 ~$a5 +1 ~u6 +1 v6 >= 4 : $d6 -> 0;
red +3 $d6 +3 ~$d5 +1 $a5 +1 u6 +1 ~v6 >= 3 : $d6 -> 1;
end spec;
\end{lstlisting}
Finally, the preorder constraints for \texttt{lex6} are introduced.
As per Section \ref{ssc:satsuma_order},
these are a single constraint $d_6 \geq 1$.
\begin{lstlisting}[language=veripb,numbers=none,frame=none,aboveskip=\smallskipamount,belowskip=\smallskipamount,xleftmargin=1ex]
def
+1 $d6 >= 1;
end def;
\end{lstlisting}
This section is followed by a transitivity proof.
The proof starts by declaring some new variables.
\begin{lstlisting}[language=veripb,numbers=none,frame=none,aboveskip=\smallskipamount,belowskip=\smallskipamount,xleftmargin=1ex]
transitivity
vars
fresh_right w1 w2 w3 w4 w5 w6 ;
fresh_aux_1 $b1 $b2 $b3 $b4 $b5 $e1 $e2 $e3 $e4 $e5 $e6;
fresh_aux_2 $c1 $c2 $c3 $c4 $c5 $f1 $f2 $f3 $f4 $f5 $f6;
end vars;
\end{lstlisting}
These are needed to account for a new set of preorder variables $\vec{w}$,
and two new sets of preorder auxiliary variables $\vec{b},\vec{e}$ and $\vec{c},\vec{f}$.
These are given by the lines \texttt{fresh\_right}, \texttt{fresh\_aux\_1} and \texttt{fresh\_aux\_2},
respectively.
This proof will show the following implication:
\begin{align*}
    \spec(\vec{u}, \vec{v}, \vec{a},\vec{d}) \cup \ord(\vec{u}, \vec{v}, \vec{a}, \vec{d}) & \\
    {} \cup \spec(\vec{v}, \vec{w}, \vec{b},\vec{e}) \cup \ord(\vec{v}, \vec{w}, \vec{b}, \vec{e}) & \\
    {} \cup \spec(\vec{u}, \vec{w}, \vec{c}, \vec{f}) & {} \vdash \ord(\vec{u}, \vec{w}, \vec{c}, \vec{f})
\end{align*}
Copies of the specification and preorder constraints are implicitly brought into scope
and assigned numerical identifiers.
In this particular case, the specification contains 22 constraints, and the preorder contains a single constraint.
Then, a total of 68 constraints are implicitly introduced in the proof scope,
and given identifiers in the same order as they were defined in the \texttt{spec} and \texttt{ord} sections:
\begin{itemize}
\item $\spec(\vec{u}, \vec{v}, \vec{a},\vec{d})$ is mapped into identifers \texttt{1} through \texttt{22}.
\item $\spec(\vec{v}, \vec{w}, \vec{b},\vec{e})$ is mapped into identifiers \texttt{23} through \texttt{44}.
\item $\spec(\vec{u}, \vec{w}, \vec{c},\vec{f})$ is mapped into identifiers \texttt{45} through \texttt{66}.
\item The single constraints in $\ord(\vec{u}, \vec{v}, \vec{a},\vec{d})$ and
$\ord(\vec{v}, \vec{w}, \vec{b},\vec{e})$ are mapped into identifiers \texttt{67} and \texttt{68}, respectively.
\end{itemize}
\begin{lstlisting}[language=veripb,numbers=none,frame=none,aboveskip=\smallskipamount,belowskip=\smallskipamount,xleftmargin=1ex]
proof
proofgoal #1
\end{lstlisting}
Next, the \texttt{proof} section shows the implication above.
Each \texttt{proofgoal} line provides a proof for one constraint in $\ord(\vec{u}, \vec{w}, \vec{c}, \vec{f})$,
in order of definition in the \texttt{def} section. In our case, we only must prove $f_6 \geq 1$,
so only one \texttt{proofgoal} appears in the proof.
Within this context, the negation of the goal, $\overline{f_6} \geq 1$,
is automatically assigned 
the next available identifier,
in this case \texttt{69}; to satisfy the proofgoal
we must reach a contradiction.
\begin{lstlisting}[language=veripb,numbers=none,frame=none,aboveskip=\smallskipamount,belowskip=\smallskipamount,xleftmargin=1ex]
pol 67 4 * 21 +;
rup +1 $d5 >= 1 : -1;
pol -2 $d5 w;
pol -2 4 * 19 +;
rup +1 $d4 >= 1 : -1;
pol -2 $d4 w;
\end{lstlisting}
The proof roughly follows Lemma~\ref{lem:lex_using_aux_trans}.
First, constraints
\begin{align}
\label{eqn:example-trans-rec}
3d_i + \overline{a_i} + u_{i+1} + \overline{v_{i + 1}} & {} \geq 4 \\
\label{eqn:example-trans-lit}
d_i & {} \geq 1 \\
\label{eqn:example-trans-bound}
\overline{a_i} + \overline{u_{i + 1}} + v_{i+1} & {} \geq 1
\end{align}
are derived for $i = 5,\dots,1$. We can see this in the displayed fragment for $i=5$.
Constraint \eqref{eqn:example-trans-rec} is derived through the line \lstinline[language=veripb]|pol 67 4 * 21 +|.
Lines starting with \texttt{pol} derive the result of a cutting planes proof by providing
explicit operations in reverse Polish notation. This line first fetches constraint 67
(which is $d_6 \geq 1$),
then scales it by $4$, then adds the result with constraint 21
(i.e.\ $4\overline{d_6} + 3d_5 + \overline{a_5} + u_6 + \overline{v_6} \geq 4$).
Constraint \eqref{eqn:example-trans-lit} is derived through the line \lstinline[language=veripb]|rup +1 $d5 >= 1 : -1|.
Lines starting with \texttt{rup} apply a form of cutting planes proof search called
\emph{reverse unit propagation}~(RUP) that can automatically identify some inferences.
After the colon, a list of constraint identifiers can be provided to use as premises for
proof search. Negative identifiers refer to identifiers relative to the current constraint,
so $-1$ in this case refers to the constraint we just derived, namely \eqref{eqn:example-trans-rec}.
For premises $F$ and a conclusion $C$, RUP tries to identify a contradiction in $F \cup \{ \overline{C} \}$
by constraint propagation. In this case, this formula contains the constraints
\begin{align*}
    3d_5 + \overline{a_5} + u_6 + \overline{v_6} & {} \geq 4 &
    \overline{d_5} & {} \geq 1
\end{align*}
which constraint propagation proves inconsistent.
Finally, constraint \eqref{eqn:example-trans-bound} is derived through the line
\lstinline[language=veripb]|pol -2 $d5 w|, which is reverse Polish notation for weakening
constraint \eqref{eqn:example-trans-rec} on variable $d_5$.
\begin{lstlisting}[language=veripb,numbers=none,frame=none,aboveskip=\smallskipamount,belowskip=\smallskipamount,xleftmargin=1ex]
pol -2 4 * 17 +;
rup +1 $d3 >= 1 : -1;
pol -2 $d3 w;
pol -2 4 * 15 +;
rup +1 $d2 >= 1 : -1;
pol -2 $d2 w;
pol -2 4 * 13 +;
rup +1 $d1 >= 1 : -1;
pol -2 $d1 w;
pol -2 11 +;
\end{lstlisting}
This process iterates on $i$ until the following constraints are derived
\begin{align*}
d_5 {} & \geq 1 \quad\text{(\texttt{71})}\eqcomma & \overline{a_5} + \overline{u_6} + v_6 & {} \geq 1 \quad\text{(\texttt{72})}\eqcomma\\
d_4 {} & \geq 1 \quad\text{(\texttt{74})}\eqcomma & \overline{a_4} + \overline{u_5} + v_5 & {} \geq 1 \quad\text{(\texttt{75})}\eqcomma\\
d_3 {} & \geq 1 \quad\text{(\texttt{77})}\eqcomma & \overline{a_3} + \overline{u_4} + v_4 & {} \geq 1 \quad\text{(\texttt{78})}\eqcomma\\
d_2 {} & \geq 1 \quad\text{(\texttt{80})}\eqcomma & \overline{a_2} + \overline{u_3} + v_3 & {} \geq 1 \quad\text{(\texttt{81})}\eqcomma\\
d_1 {} & \geq 1 \quad\text{(\texttt{83})}\eqcomma & \overline{a_1} + \overline{u_2} + v_2 & {} \geq 1 \quad\text{(\texttt{84})}\eqcomma\\
\overline{u_1} + v_1 {} & {} \geq 1 \quad\text{(\texttt{85})}\eqcomma
\end{align*}
where the last constraint is derived through the line \lstinline[language=veripb]|pol -2 11 +|.
The proof then repeats the process above for variables $b_i$, $e_i$.
\begin{lstlisting}[language=veripb,numbers=none,frame=none,aboveskip=\smallskipamount,belowskip=\smallskipamount,xleftmargin=1ex]
pol 68 4 * 43 +;
rup +1 $e5 >= 1 : -1;
pol -2 $e5 w;
pol -2 4 * 41 +;
rup +1 $e4 >= 1 : -1;
pol -2 $e4 w;
pol -2 4 * 39 +;
rup +1 $e3 >= 1 : -1;
pol -2 $e3 w;
pol -2 4 * 37 +;
rup +1 $e2 >= 1 : -1;
pol -2 $e2 w;
pol -2 4 * 35 +;
rup +1 $e1 >= 1 : -1;
pol -2 $e1 w;
pol -2 33 +;
\end{lstlisting}
This yields constraints
\begin{align*}
e_5 {} & \geq 1 \quad\text{(\texttt{87})}\eqcomma& \overline{b_5} + \overline{v_6} + w_6 & {} \geq 1 \text{\quad(\texttt{88})}\eqcomma  \\
e_4 {} & \geq 1 \quad\text{(\texttt{90})}\eqcomma& \overline{b_4} + \overline{v_5} + w_5 & {} \geq 1 \text{\quad(\texttt{91})}\eqcomma  \\
e_3 {} & \geq 1 \quad\text{(\texttt{93})}\eqcomma& \overline{b_3} + \overline{v_4} + w_4 & {} \geq 1 \text{\quad(\texttt{94})}\eqcomma  \\
e_2 {} & \geq 1 \quad\text{(\texttt{96})}\eqcomma& \overline{b_2} + \overline{v_3} + w_3 & {} \geq 1 \text{\quad(\texttt{97})}\eqcomma  \\
e_1 {} & \geq 1 \quad\text{(\texttt{99})}\eqcomma& \overline{b_1} + \overline{v_2} + w_2 & {} \geq 1 \text{\quad(\texttt{100})}\eqcomma \\
\overline{v_1} + w_1 {} & {} \geq 1 \text{\quad (\texttt{101})}\eqperiod
\end{align*}
Next, the proof derives constraints $A_i$ and $B_i$ given by
$a_i + \overline{c_i} \geq 1$ and $b_i + \overline{c_i} \geq 1$
for $i = 1,\dots,6$, respectively.
This follows the inductive process detailed in Lemma~\ref{lem:lex_using_aux_trans}.
\begin{lstlisting}[language=veripb,numbers=none,frame=none,aboveskip=\smallskipamount,belowskip=\smallskipamount,xleftmargin=1ex]
pol 24 45 + 85 + s;
pol 47 u2 w w2 w s;
\end{lstlisting}
In the base case, the line \lstinline[language=veripb]|pol 2 45 + -1 + s| derives
$a_1 + \overline{c_1} \geq 1$ by adding and saturating constraints
\begin{align*}
\overline{u_1} + v_1 + 2 a_1 & {} \geq 2\\
u_1 + \overline{w_1} + \overline{c_1} \geq 1\\
\overline{v_1} + w_1 {} & {} \geq 1
\end{align*}
where the two constraints above come from the specifications,
and the one below is 101. Similarly, the line
\lstinline[language=veripb]|pol 24 45 + 85 + s| derives $b_1 + \overline{c_1} \geq 1$.
\begin{lstlisting}[language=veripb,numbers=none,frame=none,aboveskip=\smallskipamount,belowskip=\smallskipamount,xleftmargin=1ex]
pol -1 -3 +;
pol -2 -3 +;
pol 47 $c1 w s;
pol -2 100 + -1 + -3 2 * + 4 + s;
pol -4 84 + -2 + -3 2 * + 26 + s;
\end{lstlisting}
The induction case needs to derive some intermediate lemmas at each step.
\begin{itemize}
    \item First, the constraint $C_i$ given by $c_i + c_{i+1} \geq 1$ is obtained by
    weakening the specification constraint $u_{i+1} + \overline{w_{i + 1}} + 2c_i + 3\overline{c_{i+1}} \geq 3$
    on variables $u_{i+1}$ and $w_{i+1}$ and then saturating; in the fragment displayed,
    this is done for $i = 1$ by the line \lstinline[language=veripb]|pol 47 u2 w w2 w s|.
    \item Then, the constraints $A^\prime_i$ and $B^\prime_i$ given by
    $a_i + \overline{c_{i+1}} \geq 1$ and $b_i + \overline{c_{i+1}} \geq 1$
    are derived by adding $C_i$ with $A_i$ and with $B_i$, respectively.
    In the displayed fragment,
    this is done for $i = 1$ by the lines \lstinline[language=veripb]|pol -1 -3 +| and \lstinline[language=veripb]|pol -2 -3 +|.
    \item The constraint $L_i$ given by $u_{i+1} + \overline{w_{i+1}} + \overline{c_{i+1}} \geq 1$
    is derived by weakening the specification constraint
    $u_{i+1} + \overline{w_{i+1}} + 2 c_i + 3 \overline{c_{i+1}} \geq 3$ on variable $c_i$
    and then saturating. In the displayed fragment,
    this is done for $i = 1$ by the line \lstinline[language=veripb]|pol 47 $c1 w s|.
    \item Finally, $A_{i+1}$ is derived by adding the constraints $2 \cdot A^\prime_i + B^\prime_i + L_i$
    with $\overline{b_i} + \overline{v_{i+1}} + w_{i + 1} \geq 1$
    (which was derived in the previous proof fragment) and
    $\overline{u_{i+1}} + \overline{v_{i+1}} + 2\overline{a_i} + 2a_{i+1} \geq 2$
    (which is a specification constraint), and then saturating.
    The constraint $B_{i+1}$ is analogously derived as well.
    In the displayed fragment,
    this is done for $i = 1$ by the lines \lstinline[language=veripb]|pol -2 100 + -1 + -3 2 * + 4 + s|
    and \lstinline[language=veripb]|pol -4 84 + -2 + -3 2 * + 26 + s|.
\end{itemize}
\begin{lstlisting}[language=veripb,numbers=none,frame=none,aboveskip=\smallskipamount,belowskip=\smallskipamount,xleftmargin=1ex]
pol 49 u3 w w3 w s;
pol -1 -3 +;
pol -2 -3 +;
pol 49 $c2 w s;
pol -2 97 + -1 + -3 2 * + 6 + s;
pol -4 81 + -2 + -3 2 * + 28 + s;
pol 51 u4 w w4 w s;
pol -1 -3 +;
pol -2 -3 +;
pol 51 $c3 w s;
pol -2 94 + -1 + -3 2 * + 8 + s;
pol -4 78 + -2 + -3 2 * + 30 + s;
pol 53 u5 w w5 w s;
pol -1 -3 +;
pol -2 -3 +;
pol 53 $c4 w s;
pol -2 91 + -1 + -3 2 * + 10 + s;
pol -4 75 + -2 + -3 2 * + 32 + s;
pol 85 101 +;
\end{lstlisting}
Last, the proof derives the constraint $F_i$
given by $f_i \geq 1$ for $i = 1,\dots,6$.
\begin{lstlisting}[language=veripb,numbers=none,frame=none,aboveskip=\smallskipamount,belowskip=\smallskipamount,xleftmargin=1ex]
pol -1 56 + s;
pol 84 100 + 102 + 103 + s -1 3 * +;
pol -1 58 + s;
pol 81 97 + 108 + 109 + s -1 3 * +;
pol -1 60 + s;
pol 78 94 + 114 + 115 + s -1 3 * +;
pol -1 62 + s;
pol 75 91 + 120 + 121 + s -1 3 * +;
pol -1 64 + s;
pol 72 88 + 126 + 127 + s -1 3 * +;
pol -1 66 + s;
\end{lstlisting}
For the base case, can derive $F_1$ by adding constraints
$\overline{u_1} + v_1 \geq 1$ and $\overline{v_1} + w_1 \geq 1$,
which we derived as 85 and 101; this is the line \lstinline[language=veripb]|pol 85 101 +|.
In the induction case, we first derive a lemma $F^\prime_i$ given by
$\overline{u_{i+1}} + w_{i+1} + \overline{c_i} + 3f_i \geq 4$.
This is obtained by first and then saturating the constraints
\begin{align*}
\overline{a_i} + \overline{u_{i+1}} + v_{i+1} & {} \geq 1\eqcomma\\
\overline{b_i} + \overline{u_{i+1}} + v_{i+1} & {} \geq 1\eqcomma\\
a_i + \overline{c_i} & {} \geq 1\eqcomma\\
b_i + \overline{c_i} & {} \geq 1\eqcomma
\end{align*}
resulting in $\overline{c_i} + \overline{u_{i+1}} + \overline{w_{i+1}}$;
the two constraints above come from the first transitivity proof fragment;
the two constraints below are $A_i$ and $B_i$.
The result is added to $3 \cdot F_i$, which yields $F^\prime_i$.
This is done for $i=1$ by line
\lstinline[language=veripb]|pol 84 100 + 102 + 103 + s -1 3 * +|.
The constraint $F_{i+1}$ can then be derived by adding $F^\prime_i$
to the specification constraint
$u_{i+1} + \overline{w_{i+1}} + c_i + 3\overline{f_i} + 3\overline{f_{i+1}} \geq 3$
and saturating; this is done for $i=1$ by the line
\lstinline[language=veripb]|pol -1 58 + s|.
\begin{lstlisting}[language=veripb,numbers=none,frame=none,aboveskip=\smallskipamount,belowskip=\smallskipamount,xleftmargin=1ex]
pol -1 69 +;
qed #1 : -1;
qed proof;
end transitivity;
\end{lstlisting}
Once $F_6$ has been derived, the line \lstinline[language=veripb]|pol -1 69 +| adds it
to the negated proof goal $\overline{f_6} \geq 1$, creating a contradiction $0 \geq 1$.
The line \lstinline[language=veripb]|qed #1 : -1| communicates this contradiction to the proof checker,
finishing the transitivity proof. The reflexivity proof follows.
\begin{lstlisting}[language=veripb,numbers=none,frame=none,aboveskip=\smallskipamount,belowskip=\smallskipamount,xleftmargin=1ex]
reflexivity
proof
proofgoal #1
rup >= 1;
qed #1 : -1;
qed proof;
end reflexivity;
\end{lstlisting}
The reflexivity section automatically identifies the constraints in
$\spec(\vec{u}, \vec{u}, \vec{a},\vec{d})$ as constraints \texttt{1} through \texttt{22}.
We must provide a proof of $\ord(\vec{u}, \vec{u}, \vec{a},\vec{d})$,
which contains a single constraint $d_6 \geq 1$, identified by the proofgoal \#1.
Similar to the transitivity proof, entering the \lstinline[language=veripb]|proofgoal #1| section
negates this goal as $\overline{d_6} \geq 1$ and adds it to the premises
as constraint \texttt{23}.
As shown by Lemma~\ref{lem:lex_using_aux_reflexivity}, this proof derives
a contradiction in a single RUP step in line \lstinline[language=veripb]|rup >= 1|.
\begin{lstlisting}[language=veripb,numbers=none,frame=none,aboveskip=\smallskipamount,belowskip=\smallskipamount,xleftmargin=1ex]
end def_order;
load_order lex6 x5 x6 x1 x2 x3 x4;
\end{lstlisting}
After this, the preorder \texttt{lex6} has been correctly defined,
and the refutation of the pigeonhole principle starts,
assigning its constraints identifiers 1 through 9.
The line \lstinline[language=veripb]|load_order lex6 x5 x6 x1 x2 x3 x4| applies an order change rule
with the preorder \texttt{lex6}. The list of variables there
tells the proof checker to map formula variables to preorder variables as:
\begin{align*}
x_5 \mapsto {} & u_1, v_1 & x_6 \mapsto {} & u_2, v_2 & x_1 \mapsto {} & u_3, v_3 \\
x_2 \mapsto {} & u_4, v_4 & x_3 \mapsto {} & u_5, v_5 & x_3 \mapsto {} & u_6, v_6
\end{align*}

The proof immediately proceeds to break the symmetry $\sigma$.
First, the symmetry breaker circuit is introduced.
\begin{lstlisting}[language=veripb,numbers=none,frame=none,aboveskip=\smallskipamount,belowskip=\smallskipamount,xleftmargin=1ex]
red +1 ~s1 +1 x1 +1 ~x3 >= 1 : s1 -> 0;
red +2 s1 +1 ~x1 +1 x3 >= 2 : s1 -> 1;
red +3 ~s2 +2 s1 +1 x2 +1 ~x4 >= 3 : s2 -> 0;
red +2 s2 +2 ~s1 +1 ~x2 +1 x4 >= 2 : s2 -> 1;
red +3 ~s3 +2 s2 +1 x3 +1 ~x1 >= 3 : s3 -> 0;
red +2 s3 +2 ~s2 +1 ~x3 +1 x1 >= 2 : s3 -> 1;
red +1 ~t1 +1 ~x1 +1 x3 >= 1 : t1 -> 0;
red +2 t1 +1 x1 +1 ~x3 >= 2 : t1 -> 1;
red +4 ~t2 +3 t1 +1 ~s1 +1 ~x2 +1 x4 >= 4 : t2 -> 0;
red +3 t2 +3 ~t1 +1 s1 +1 x2 +1 ~x4 >= 3 : t2 -> 1;
red +4 ~t3 +3 t2 +1 ~s2 +1 ~x3 +1 x1 >= 4 : t3 -> 0;
red +3 t3 +3 ~t2 +1 s2 +1 x3 +1 ~x1 >= 3 : t3 -> 1;
red +4 ~t4 +3 t3 +1 ~s3 +1 ~x4 +1 x2 >= 4 : t4 -> 0;
red +3 t4 +3 ~t3 +1 s3 +1 x4 +1 ~x2 >= 3 : t4 -> 1;
\end{lstlisting}
These correspond to constraints \eqref{eq:s1.1} through \eqref{eq:ti.2}.
Each constraint is introduced by redundance-based strengthening.
First, the constraint $C_1 = \overline{s_1} + x_1 + \overline{x_3} \geq 1$ is
derived with witness $\omega_1 = \{s_1 \mapsto 0\}$; this corresponds to constraint
\eqref{eq:s1.1}. The check
$\core \cup \der \cup \{ \overline{C_1} \} \vdash (\core \cup \der \cup \{ C_1 \})\uh_{\omega_1}$
succeeds trivially, since $s_1$ does not occur in $\core \cup \der$ and
$C_1\uh_{\omega_1}$ is a tautology.
For the same reason, the check
$\spec(\vec{x}\uh_{\omega_1},\vec{x},\vec{a},\vec{d}) \vdash \ord(\vec{x}\uh_{\omega_1},\vec{x},\vec{a},\vec{d})$
can be skipped, since this is the same as the reflexivity check.

Next, the constraint $C_2 = 2s_1 + \overline{x_1} + x_3 \geq 2$ is derived with
witness $\omega_2 = \{s_1 \mapsto 1\}$; this corresponds to constraint
\eqref{eq:s1.2}. The check
$\core \cup \der \cup \{ \overline{C_2} \} \vdash (\core \cup \der \cup \{ C_2 \})\uh_{\omega_2}$
succeeds trivially. The reasons are the same as above for all constraints except
the recently introduced $C_1$; in that case simply observe that $C_1$ is the result
of weakening $\overline{C_2}$ (i.e.\ $2\overline{s_1} + x_1 + \overline{x_3} \geq 3$)
on variable $x_3$.

A similar reasoning proves all remaining redundance-based strengthening steps.
They receive constraint identifiers \texttt{10} through \texttt{23}.
Once these circuit constraints have been introduced, the proof is ready to introduce
the symmetry breaker by dominance-based strengthening.
\begin{lstlisting}[language=veripb,numbers=none,frame=none,aboveskip=\smallskipamount,belowskip=\smallskipamount,xleftmargin=1ex]
dom +1 t4  >= 1 : x1 -> x3 x2 -> x4 x3 -> x1 x4 -> x2  : subproof
\end{lstlisting}
This line starts the dominance proof to derive the constraint $C$ given by $t_4 \geq 1$
with witness $\sigma$. Note that all the constraints derived above
by redundance-based strengthening have been derived into $\der$;
this becomes relevant for the checks elicited by dominance-based strengthening:
\begin{gather*}
\core \cup \der \cup \{\neg C\} \cup \spec(\vec{x}\uh_\sigma, \vec{x}, \vec{a},\vec{d}) \vdash \core\uh_\sigma \!\cup \ord(\vec{x}\uh_\sigma, \vec{x}, \vec{a},\vec{d}) \\
\core \cup \der \cup \{\neg C\} \cup \spec(\vec{x}, \vec{x}\uh_\sigma,\vec{a}, \vec{d}) \cup \ord(\vec{x}, \vec{x}\uh_\sigma, \vec{a}, \vec{d}) \vdash \bot
\end{gather*}
The single constraint in $\ord(\vz\uh_\sigma, \vz, \va)$ in the first check
is assigned proof goal identifier \texttt{\#1}, and the $\bot$ in the second check
is assigned proof goal identifier \texttt{\#2}. The goals $\core\uh_\sigma$ in the first
check are assigned as proof goal identifiers their constraint identifiers from $\core$,
without a ``\texttt{\#}''; we actually do not need to prove these,
since $\sigma$ is a symmetry of $\core$, so $\core\uh_\sigma = \core$ holds.
The \texttt{subproof} context above introduces the constraint $\neg C$
given by $\overline{t_4} \geq 1$ with identifier \texttt{24}.
\begin{lstlisting}[language=veripb,numbers=none,frame=none,aboveskip=\smallskipamount,belowskip=\smallskipamount,xleftmargin=1ex]
scope leq
proofgoal #1
\end{lstlisting}
The scope \lstinline[language=veripb]|leq| introduces the constraints
$\spec(\vec{x}\uh_\sigma, \vec{x}, \vec{a},\vec{d})$ from the first check
with constraint identifiers \texttt{25} through \texttt{46}:
and \texttt{proofgoal \#1} selects the only constraint in
$\ord(\vec{x}\uh_\sigma, \vec{x}, \vec{a},\vec{d})$, namely $d_6 \geq 1$,
as a proofgoal; its negation is introduced with constraint identifier \texttt{47}.
First the proof tries to derive the simplified specification constraints:
\begin{align*}
    \overline{a_3} + x_3 + \overline{x_1} & {} \geq 1 &
        \overline{d_3} + x_1 + \overline{x_3} & {} \geq 1 \\
    2a_3 + \overline{x_3} + x_1 & {} \geq 2 &
        2d_3 + \overline{x_1} + x_3 & {} \geq 2\\
    3\overline{a_4} + 2a_3 + x_4 + \overline{x_2} & {} \geq 3 &
        4\overline{d_4} + 3d_3 + \overline{a_3} + x_2 + \overline{x_4} & {} \geq 4\\
    2a_4 + 2\overline{a_3} + \overline{x_4} + x_2 & {} \geq 2 &
        3d_4 + 3\overline{d_3} + a_3 + \overline{x_2} + x_4 & {} \geq 3\\
    3\overline{a_5} + 2a_4 + x_1 + \overline{x_3} & {} \geq 3 &
        4\overline{d_5} + 3d_4 + \overline{a_4} + x_3 + \overline{x_1} & {} \geq 4\\
    2a_5+ 2\overline{a_4} + \overline{x_1} + x_3 & {} \geq 2 &
        3d_5 + 3\overline{d_4} + a_4 + \overline{x_3} + x_1 & {} \geq 3\\
    &&  4\overline{d_6} + 3d_5 + \overline{a_5} + x_4 + \overline{x_2} & {} \geq 4\\
    &&  3d_6 + 3\overline{d_5} + a_5 + \overline{x_4} + x_2 & {} \geq 3\eqperiod
\end{align*}
First the reification 
$a_3 \Longleftrightarrow x_3 \geq x_1$ is obtained.
\begin{lstlisting}[language=veripb,numbers=none,frame=none,aboveskip=\smallskipamount,belowskip=\smallskipamount,xleftmargin=1ex]
rup +1 ~$d6 >= 1;
rup +1 $a2 >= 1;
pol 29 ~$a2 2 * + s;
pol 30 -2 2 * +;
\end{lstlisting}
Since $\sigma(x_5) = x_5$ and $\sigma(x_6) = x_6$,
$\spec(\vec{x}\uh_\sigma, \vec{x}, \vec{a},\vec{d})$ contains the constraints
$a_1 \geq 1$ and $2a_2 + 2\overline{a_i} \geq 1$,
from which we can derive the lemma $a_2 \geq 1$ by RUP.
Then from the literal axiom $\overline{a_2} \geq 2$ and
specification constraint $3\overline{a_3} + 2a_2  + x_3 + \overline{x_1} \geq 3$ we can derive
$\overline{a_3} + x_3 + \overline{x_1} \geq 1$.
Similarly, from the lemma $a_1 \geq 1$ and the specification constraint
$2a_3 + 2\overline{a_2} + \overline{x_3} + x_1 \geq 2$ we
derive $2a_3 + \overline{x_3} + x_1 \geq 2$.

Next, we obtain the reification 
$a_4 \Longleftrightarrow a_3 \wedge (x_4 \geq x_2)$.
\begin{lstlisting}[language=veripb,numbers=none,frame=none,aboveskip=\smallskipamount,belowskip=\smallskipamount,xleftmargin=1ex]
rup +1 ~$a3 +1 $a3 >= 1;
rup +1 $a3 +1 ~$a3 >= 1;
pol 31 -2 2 * +;
pol 32 -2 2 * +;
\end{lstlisting}
The proof first tries to derive $a_i \Longleftrightarrow a_j$ where
$i$ is the index for $a$ right after the previous support variable, and $j$ is
the index for $a$ at the next support variable.
Since in this case $i = j = 3$, the proof actually degenerates into
deriving two tautologies by RUP.
Next, the proof derives constraints
$3\overline{a_4} + 2a_3 + x_4 + \overline{x_2} \geq 3$ and
$2a_4 + 2\overline{a_3} + \overline{x_4} + x_2 \geq 2$ by operating
over some specification constraints and the reification $a_i \Longleftrightarrow a_j$.
These constraints actually already appear in the specification
$\spec(\vec{x}\uh_\sigma, \vec{x}, \vec{a},\vec{d})$,
so we do not really derive anything new.
The proof then repeats this last step for the variable $a_5$.
\begin{lstlisting}[language=veripb,numbers=none,frame=none,aboveskip=\smallskipamount,belowskip=\smallskipamount,xleftmargin=1ex]
rup +1 ~$a4 +1 $a4 >= 1;
rup +1 $a4 +1 ~$a4 >= 1;
pol 33 -2 2 * +;
pol 34 -2 2 * +;
rup +1 ~$a5 +1 $a5 >= 1;
rup +1 $a5 +1 ~$a5 >= 1;
\end{lstlisting}
A very similar approach derives the simplified reifications of the
$d_i$ variables.
\begin{lstlisting}[language=veripb,numbers=none,frame=none,aboveskip=\smallskipamount,belowskip=\smallskipamount,xleftmargin=1ex]
rup +1 $d2 >= 1;
pol 39 ~$d2 3 * + 49 + s;
pol 40 -2 3 * + ~$a2 +;
rup +1 ~$d3 +1 $d3 >= 1;
rup +1 $d3 +1 ~$d3 >= 1;
pol 41 -2 3 * + -14 +;
pol 42 -2 3 * + -16 +;
rup +1 ~$d4 +1 $d4 >= 1;
rup +1 $d4 +1 ~$d4 >= 1;
pol 43 -2 3 * + -14 +;
pol 44 -2 3 * + -16 +;
rup +1 ~$d5 +1 $d5 >= 1;
rup +1 $d5 +1 ~$d5 >= 1;
pol 45 -2 3 * + -14 +;
pol 46 -2 3 * + -16 +;
\end{lstlisting}
The proof now proves the following constraints
\begin{align*}
    d_3 + \overline{s_1} & {} \geq 1 & t_1 + \overline{a_3} & {} \geq 1 \\
    d_4 + \overline{s_2} & {} \geq 1 & t_2 + \overline{a_4} & {} \geq 1 \\
    d_5 + \overline{s_3} & {} \geq 1 & t_3 + \overline{a_5} & {} \geq 1
\end{align*}
by RUP.
\begin{lstlisting}[language=veripb,numbers=none,frame=none,aboveskip=\smallskipamount,belowskip=\smallskipamount,xleftmargin=1ex]
rup +1 $d3 +1 ~s1 >= 1;
rup +1 $d4 +1 ~s2 >= 1;
rup +1 $d5 +1 ~s3 >= 1;
rup +1 t1 +1 ~$a3 >= 1;
rup +1 t2 +1 ~$a4 >= 1;
rup +1 t3 +1 ~$a5 >= 1;
\end{lstlisting}
Here, checking that $d_3 + \overline{s_1} \geq 1$ is implied by RUP
can be done by assuming literals $\overline{d_3}$ and $s_1$ are true,
and then propagating. The simplified specification constraint $2d_3 + \overline{x_1} + x_3 \geq 2$
then propagates $\overline{x_1}$ and $x_3$. This in turn yields a conflict
with the symmetry breaker circuit constraint $\overline{s_1} + x_1 + \overline{x_3} \geq 1$.
To show that $d_4 + \overline{s_2} \geq 2$ is implied by RUP,
we similarly assume literals  $\overline{d_4}$ and $s_2$ are true.
Then, the symmetry breaker circuit constraint $3\overline{s_2} + 2s_1 + x_2 + \overline{x_4} \geq 3$
propagates $s_1$, and so the previously derived constraint propagates $d_3$.
Finally, the simplified specification constraint $3d_4 + 3\overline{d_3} + a_3 + x_2 + \overline{x_4} \geq 3$
propagates $\overline{x_2}$ and $x_4$,
which in turn conflict with the same symmetry breaker circuit constraint above.
Similar RUP steps derive the rest of the constraints above, as well as the rest of the proof.
\begin{lstlisting}[language=veripb,numbers=none,frame=none,aboveskip=\smallskipamount,belowskip=\smallskipamount,xleftmargin=1ex]
rup +1 $d3 +1 ~s1 >= 1;
rup +1 $d4 +1 ~s2 >= 1;
rup +1 $d5 +1 ~s3 >= 1;
rup +1 t1 +1 ~$a3 >= 1;
rup +1 t2 +1 ~$a4 >= 1;
rup +1 t3 +1 ~$a5 >= 1;
rup +1 t2 +1 ~t1 +1 $d3 >= 1;
rup +1 t3 +1 ~t2 +1 $d4 >= 1;
rup +1 t4 +1 ~t3 +1 $d5 >= 1;
rup +1 $d4 +1 ~$d3 +1 t1 >= 1;
rup +1 $d5 +1 ~$d4 +1 t2 >= 1;
rup +1 $d6 +1 ~$d5 +1 t3 >= 1;
rup +1 $d3 +1 t1 >= 1;
rup +1 $d3 +1 t2 >= 1;
rup +1 $d4 +1 t1 >= 1;
rup +1 $d4 +1 t2 >= 1;
rup +1 $d4 +1 t3 >= 1;
rup +1 $d5 +1 t2 >= 1;
rup +1 $d5 +1 t3 >= 1;
rup +1 $d5 +1 t4 >= 1;
rup +1 $d6 +1 t3 >= 1;
rup +1 $d6 +1 t4 >= 1;
rup >= 1;
qed #1 : -1;
end scope;
\end{lstlisting}
Finally, the proof derives $d_6 + t_4 \geq 1$, which immediately yields a contradiction:
$\overline{t_4} \geq 1$ is assumed as the negation of the current dominance constraint,
and $\overline{d_6} \geq 1$ is assumed as the negation of the current proof goal.
\begin{lstlisting}[language=veripb,numbers=none,frame=none,aboveskip=\smallskipamount,belowskip=\smallskipamount,xleftmargin=1ex]
rup >= 1;
qed #1 : -1;
end scope;
\end{lstlisting}
Next, the proof shows the second dominance check.
\begin{lstlisting}[language=veripb,numbers=none,frame=none,aboveskip=\smallskipamount,belowskip=\smallskipamount,xleftmargin=1ex]
scope geq
proofgoal #2
\end{lstlisting}
The scope \lstinline[language=veripb]|geq| reintroduces the constraints
$\spec(\vec{x}\uh_\sigma, \vec{x}, \vec{a},\vec{d})$ with constraint
identifiers \texttt{100} through \texttt{121}, just like \lstinline[language=veripb]|leq|.
Additionally, the single constraint in $\ord(\vec{x}\uh_\sigma, \vec{x}, \vec{a},\vec{d})$
is introduced as well with identifier \texttt{122}.
The proof itself proceeds similarly to the \lstinline[language=veripb]|leq| check,
by deriving constraints that connect $\vec{a}$, $\vec{d}$, $\vec{s}$ and $\vec{t}$.
The RUP steps follow closely Lemma~\ref{lem:second_proof_goal}.
\begin{lstlisting}[language=veripb,numbers=none,frame=none,aboveskip=\smallskipamount,belowskip=\smallskipamount,xleftmargin=1ex]
rup +1 $d5 >= 1;
rup +1 $d4 >= 1;
rup +1 $d3 >= 1;
rup +1 ~s1 +1 $a3 >= 1;
rup +1 ~s2 +1 $a4 >= 1;
rup +1 ~s3 +1 $a5 >= 1;
rup +1 t1 >= 1;
rup +1 t2 >= 1;
rup +1 t3 >= 1;
\end{lstlisting}
The proof eventually derives $t_3 \geq 1$ and $s_3 + \overline{a_5} \geq 1$.
This leads to a contradiction,
since the assumptions $\overline{t_4} \geq 1$ and $d_6 \geq 1$ then
propagate $x_4$, $\overline{x_2}$ and $s_3$ from the symmetry breaker circuit constraint
$3t_4 + 3\overline{t_3} + s_3 + x_4 + \overline{x_2} \geq 3$.
The constraint $s_3 + \overline{a_5} \geq 1$ then propagates $a_5$,
and then the constraint $4\overline{d_6} + 3d_5 + \overline{a_5} + x_2 + \overline{x_4} \geq 4$
yields a conflict. This concludes the dominance proof.
\begin{lstlisting}[language=veripb,numbers=none,frame=none,aboveskip=\smallskipamount,belowskip=\smallskipamount,xleftmargin=1ex]
rup >= 1;
qed #2 : -1;
end scope;
qed dom;
\end{lstlisting}
The proof next simplifies the constraints and deletes the circuit,
leaving only some constraints that use the $\vec{s}$ variables
to break the symmetry $\sigma$.
\begin{lstlisting}[language=veripb,numbers=none,frame=none,aboveskip=\smallskipamount,belowskip=\smallskipamount,xleftmargin=1ex]
rup +1 t3 >= 1 : -1 22;
rup +1 t2 >= 1 : -1 20;
rup +1 t1 >= 1 : -1 18;
pol 11 x1 + s;
pol 13 x2 + s;
pol 15 x3 + s;
pol 11 ~x3 + s;
pol 13 ~x4 + s;
pol 15 ~x1 + s;
pol 16 136 + s;
pol 18 ~t1 3 * + 135 4 * + s;
pol 20 ~t2 3 * + 134 4 * + s;
pol 22 ~t3 3 * + 133 4 * + s;
del range 10 26;
del range 133 137;
\end{lstlisting}
By the end of this fragment, the net effect of symmetry breaking
has been adding the following constraints to the derived set:
\begin{align*}
x_3 + s_1 & {} \geq 1 &
x_4 + \overline{s_1} + s_2 & {} \geq 1 \\
x_1 + \overline{s_2} + s_3 & {} \geq 1 &
\overline{x_1} + s_1 & {} \geq 1 \\
\overline{x_2} + \overline{s_1} + s_2 & {} \geq 1 &
\overline{x_3} + \overline{s_2} + s_3 & {} \geq 1 \\
\overline{x_1} + x_3 & {} \geq 1 &
\overline{x_2} + x_4 + \overline{s_1} & {} \geq 1 \\
x_1 + \overline{x_3} + \overline{s_2} & {} \geq 1 & 
x_2 + \overline{x_4} + \overline{s_3} & {} \geq 1
\end{align*}
Since these constraints do not affect the core set,
we can still break another symmetry. The symmetry $\tau$ is larger than $\sigma$,
but the proof proceeds through similar strokes.
First, a symmetry breaker circuit is introduced through redundance-based strengthening.
\begin{lstlisting}[language=veripb,numbers=none,frame=none,aboveskip=\smallskipamount,belowskip=\smallskipamount,xleftmargin=1ex]
red +1 ~s4 +1 x5 +1 ~x4 >= 1 : s4 -> 0;
red +2 s4 +1 ~x5 +1 x4 >= 2 : s4 -> 1;
red +3 ~s5 +2 s4 +1 x6 +1 ~x3 >= 3 : s5 -> 0;
red +2 s5 +2 ~s4 +1 ~x6 +1 x3 >= 2 : s5 -> 1;
red +3 ~s6 +2 s5 +1 x1 +1 ~x6 >= 3 : s6 -> 0;
red +2 s6 +2 ~s5 +1 ~x1 +1 x6 >= 2 : s6 -> 1;
red +3 ~s7 +2 s6 +1 x2 +1 ~x5 >= 3 : s7 -> 0;
red +2 s7 +2 ~s6 +1 ~x2 +1 x5 >= 2 : s7 -> 1;
red +3 ~s8 +2 s7 +1 x3 +1 ~x2 >= 3 : s8 -> 0;
red +2 s8 +2 ~s7 +1 ~x3 +1 x2 >= 2 : s8 -> 1;
red +1 ~t1 +1 ~x5 +1 x4 >= 1 : t1 -> 0;
red +2 t1 +1 x5 +1 ~x4 >= 2 : t1 -> 1;
red +4 ~t2 +3 t1 +1 ~s4 +1 ~x6 +1 x3 >= 4 : t2 -> 0;
red +3 t2 +3 ~t1 +1 s4 +1 x6 +1 ~x3 >= 3 : t2 -> 1;
red +4 ~t3 +3 t2 +1 ~s5 +1 ~x1 +1 x6 >= 4 : t3 -> 0;
red +3 t3 +3 ~t2 +1 s5 +1 x1 +1 ~x6 >= 3 : t3 -> 1;
red +4 ~t4 +3 t3 +1 ~s6 +1 ~x2 +1 x5 >= 4 : t4 -> 0;
red +3 t4 +3 ~t3 +1 s6 +1 x2 +1 ~x5 >= 3 : t4 -> 1;
red +4 ~t5 +3 t4 +1 ~s7 +1 ~x3 +1 x2 >= 4 : t5 -> 0;
red +3 t5 +3 ~t4 +1 s7 +1 x3 +1 ~x2 >= 3 : t5 -> 1;
red +4 ~t6 +3 t5 +1 ~s8 +1 ~x4 +1 x1 >= 4 : t6 -> 0;
red +3 t6 +3 ~t5 +1 s8 +1 x4 +1 ~x1 >= 3 : t6 -> 1;
\end{lstlisting}
Then, a symmetry breaking constraint is introduced through dominance-based strengthening.
\begin{lstlisting}[language=veripb,numbers=none,frame=none,aboveskip=\smallskipamount,belowskip=\smallskipamount,xleftmargin=1ex]
dom +1 t6  >= 1 : x5 x4 x6 x3 x1 x6 x2 x5 x3 x2 x4 x1  : subproof
\end{lstlisting}
Then the first dominance proof goal is resolved.
\begin{lstlisting}[language=veripb,numbers=none,frame=none,aboveskip=\smallskipamount,belowskip=\smallskipamount,xleftmargin=1ex]
scope leq
proofgoal #1
rup +1 ~$d6 >= 1;
rup >= 0;
pol 170;
pol 171;
rup +1 ~$a1 +1 $a1 >= 1;
rup +1 $a1 +1 ~$a1 >= 1;
pol 172 -2 2 * +;
pol 173 -2 2 * +;
rup +1 ~$a2 +1 $a2 >= 1;
rup +1 $a2 +1 ~$a2 >= 1;
pol 174 -2 2 * +;
pol 175 -2 2 * +;
rup +1 ~$a3 +1 $a3 >= 1;
rup +1 $a3 +1 ~$a3 >= 1;
pol 176 -2 2 * +;
pol 177 -2 2 * +;
rup +1 ~$a4 +1 $a4 >= 1;
rup +1 $a4 +1 ~$a4 >= 1;
pol 178 -2 2 * +;
pol 179 -2 2 * +;
rup +1 ~$a5 +1 $a5 >= 1;
rup +1 $a5 +1 ~$a5 >= 1;
rup >= 0;
pol 180;
pol 181;
rup +1 ~$d1 +1 $d1 >= 1;
rup +1 $d1 +1 ~$d1 >= 1;
pol 182 -2 3 * + -22 +;
pol 183 -2 3 * + -24 +;
rup +1 ~$d2 +1 $d2 >= 1;
rup +1 $d2 +1 ~$d2 >= 1;
pol 184 -2 3 * + -22 +;
pol 185 -2 3 * + -24 +;
rup +1 ~$d3 +1 $d3 >= 1;
rup +1 $d3 +1 ~$d3 >= 1;
pol 186 -2 3 * + -22 +;
pol 187 -2 3 * + -24 +;
rup +1 ~$d4 +1 $d4 >= 1;
rup +1 $d4 +1 ~$d4 >= 1;
pol 188 -2 3 * + -22 +;
pol 189 -2 3 * + -24 +;
rup +1 ~$d5 +1 $d5 >= 1;
rup +1 $d5 +1 ~$d5 >= 1;
pol 190 -2 3 * + -22 +;
pol 191 -2 3 * + -24 +;
rup +1 $d1 +1 ~s4 >= 1;
rup +1 $d2 +1 ~s5 >= 1;
rup +1 $d3 +1 ~s6 >= 1;
rup +1 $d4 +1 ~s7 >= 1;
rup +1 $d5 +1 ~s8 >= 1;
rup +1 t1 +1 ~$a1 >= 1;
rup +1 t2 +1 ~$a2 >= 1;
rup +1 t3 +1 ~$a3 >= 1;
rup +1 t4 +1 ~$a4 >= 1;
rup +1 t5 +1 ~$a5 >= 1;
rup +1 t2 +1 ~t1 +1 $d1 >= 1;
rup +1 t3 +1 ~t2 +1 $d2 >= 1;
rup +1 t4 +1 ~t3 +1 $d3 >= 1;
rup +1 t5 +1 ~t4 +1 $d4 >= 1;
rup +1 t6 +1 ~t5 +1 $d5 >= 1;
rup +1 $d2 +1 ~$d1 +1 t1 >= 1;
rup +1 $d3 +1 ~$d2 +1 t2 >= 1;
rup +1 $d4 +1 ~$d3 +1 t3 >= 1;
rup +1 $d5 +1 ~$d4 +1 t4 >= 1;
rup +1 $d6 +1 ~$d5 +1 t5 >= 1;
rup +1 $d1 +1 t1 >= 1;
rup +1 $d1 +1 t2 >= 1;
rup +1 $d2 +1 t1 >= 1;
rup +1 $d2 +1 t2 >= 1;
rup +1 $d2 +1 t3 >= 1;
rup +1 $d3 +1 t2 >= 1;
rup +1 $d3 +1 t3 >= 1;
rup +1 $d3 +1 t4 >= 1;
rup +1 $d4 +1 t3 >= 1;
rup +1 $d4 +1 t4 >= 1;
rup +1 $d4 +1 t5 >= 1;
rup +1 $d5 +1 t4 >= 1;
rup +1 $d5 +1 t5 >= 1;
rup +1 $d5 +1 t6 >= 1;
rup +1 $d6 +1 t5 >= 1;
rup +1 $d6 +1 t6 >= 1;
rup >= 1;
qed #1 : -1;
end scope;
\end{lstlisting}
The second proof goal follows, and finishes the dominance proof:
\begin{lstlisting}[language=veripb,numbers=none,frame=none,aboveskip=\smallskipamount,belowskip=\smallskipamount,xleftmargin=1ex]
scope geq
proofgoal #2
rup +1 $d5 >= 1;
rup +1 $d4 >= 1;
rup +1 $d3 >= 1;
rup +1 $d2 >= 1;
rup +1 $d1 >= 1;
rup +1 ~s4 +1 $a1 >= 1;
rup +1 ~s5 +1 $a2 >= 1;
rup +1 ~s6 +1 $a3 >= 1;
rup +1 ~s7 +1 $a4 >= 1;
rup +1 ~s8 +1 $a5 >= 1;
rup +1 t1 >= 1;
rup +1 t2 >= 1;
rup +1 t3 >= 1;
rup +1 t4 >= 1;
rup +1 t5 >= 1;
rup >= 1;
qed #2 : -1;
end scope;
qed dom;
\end{lstlisting}
Finally, some constraints are simplified.
\begin{lstlisting}[language=veripb,numbers=none,frame=none,aboveskip=\smallskipamount,belowskip=\smallskipamount,xleftmargin=1ex]
rup +1 t5 >= 1 : -1 167;
rup +1 t4 >= 1 : -1 165;
rup +1 t3 >= 1 : -1 163;
rup +1 t2 >= 1 : -1 161;
rup +1 t1 >= 1 : -1 159;
pol 148 x5 + s;
pol 150 x6 + s;
pol 152 x1 + s;
pol 154 x2 + s;
pol 156 x3 + s;
pol 148 ~x4 + s;
pol 150 ~x3 + s;
pol 152 ~x6 + s;
pol 154 ~x5 + s;
pol 156 ~x2 + s;
pol 157 319 + s;
pol 159 ~t1 3 * + 318 4 * + s;
pol 161 ~t2 3 * + 317 4 * + s;
pol 163 ~t3 3 * + 316 4 * + s;
pol 165 ~t4 3 * + 315 4 * + s;
pol 167 ~t5 3 * + 314 4 * + s;
del range 147 171;
del range 314 320;
\end{lstlisting}
Through the second symmetry breaking proof,
the following constraints are ultimately introduced:
\begin{align*}
x_4 + s_4 & {} \geq 1 &
x_3 + \overline{s_4} + s_5 & {} \geq 1 \\
x_6 + \overline{s_5} + s_6 & {} \geq 1 &
x_5 + \overline{s_6} + s_7 & {} \geq 1 \\
x_2 + \overline{s_7} + s_8 & {} \geq 1 &
\overline{x_5} + s_4 & {} \geq 1 \\
\overline{x_6} + \overline{s_4} + s_5 & {} \geq 1 &
\overline{x_1} + \overline{s_5} + s_6 & {} \geq 1 \\
\overline{x_2} + \overline{s_6} + s_7 & {} \geq 1 &
\overline{x_3} + \overline{s_7} + s_8 & {} \geq 1 \\
x_4 + \overline{x_5} & {} \geq 1 &
x_3 + \overline{x_6} + \overline{s_4} & {} \geq 1 \\
\overline{x_1} + x_6 + \overline{s_5} & {} \geq 1 &
\overline{x_2} + x_5 + \overline{s_6} & {} \geq 1 \\
x_2 + \overline{x_3} + \overline{s_7} & {} \geq 1 &
x_1 + \overline{x_4} + \overline{s_8} & {} \geq 1  
\end{align*}

\begin{table*}
\centering
\begin{tabular}{c|cccc}
instance & variables & constraints & generators & support size \\ \hline
$\text{PHP}(n)$ & $\bigoh{n^2}$ & $\bigoh{n^3}$ & $\bigoh{n}$ & $\bigoh{n}$ \\
$\text{RPHP}(n)$ & $\bigoh{n^2}$ & $\bigoh{n^3}$ & $\bigoh{n}$ & $\bigoh{n}$ \\
$\text{ClqCl}(n,6,5)$ & $\bigoh{n^2}$ & $\bigoh{n^2}$ & $\bigoh{n}$ & $\bigoh{n}$ \\
$\text{TseitinGrid}(n)$ & $\bigoh{n^2}$ & $\bigoh{n^2}$ & $\bigoh{n^2}$ & varies \\
$\text{Count}(n,3)$ & $\bigoh{n^3}$ & $\bigoh{n^5}$ & $\bigoh{n}$ & varies
\end{tabular}
\caption{Some data on the families of crafted benchmarks that we used: The number of variables and constraints of the instance, 
the number of generators that \satsuma finds, and the support size of the symmetries that \satsuma finds.}
\end{table*}

\section{Details on Crafted Benchmarks}
\label{app:crafted}

In this appendix, we briefly describe the five crafted benchmark families that we used in our experimental evaluation. %
We use the standard notation $[n] = \{1, \ldots, n\}$. 

\paragraph{Pigeonhole principle (PHP).} The classical  \emph{pigeonhole principle} formula~\cite{Haken85Intractability} encodes the claim that $n$ pigeons can fly into $n-1$ holes such that no two pigeons fly into the same hole. Clearly, this formula is UNSAT. The encoding has $n(n-1)$ variables $x_{ij}$ for $i \in [n]$ and $j \in [n-1]$, where $x_{ij}$ encodes that pigeon $i$ flies into hole $j$. There are $(n-1)\binom{n}{2}$ binary clauses of the form $\olnot{x_{ik}} \vee \olnot{x_{jk}}$, for all $1 \leq i < j \leq n$ and $k \in [n-1]$, claiming that it is not the case that both pigeon $i$ and pigeon $j$ fly into hole $k$. Together, these clauses encode that no two pigeons fly into the same hole. In addition, there are $n$ clauses of the form $\bigvee_{1 \leq j \leq n-1} x_{ij}$ for $i \in [n]$, encoding that pigeon $i$ must fly into some hole. The symmetries of PHP are permuting the holes and permuting the pigeons. A set of generators for the corresponding symmetry group are $n-2$ symmetries that swap two holes and $n-1$ symmetries that swap two pigeons. Each such symmetry affects $\bigoh{n}$ variables.

\paragraph{Relativized pigeonhole principle (RPHP).} The \emph{relativized pigeonhole principle} formula~\cite{AMO13LowerBounds,ALN16NarrowProofs} encodes the claim that $n$ pigeons can fly into $m$ resting places and then onwards to $n-1$ holes such that no two pigeons fly into the same resting place or the same hole. Formally, the RPHP formula claims that there exist maps $p \colon [n] \rightarrow [m]$ and $q \colon [m] \rightarrow [n-1]$ such that $p$ is injective and $q$ is injective on the range of $p$. Similarly to PHP, this formula is UNSAT.  Here we choose $m = 2n$. Then the formula contains $4n^2$ variables and $\bigoh{n^3}$ constraints. The symmetries of RPHP are permuting the holes, permuting the pigeons, and permuting the resting places.

\paragraph{Clique-coloring formulas (ClqCl).} The \emph{clique-coloring} formula~\cite{Pudlak97LowerBounds} encodes the claim that there exists a graph on $n$ vertices that contains a clique on $k$ vertices and admits a coloring on $c$ vertices. This formula is UNSAT for $k > c$. The formula contains $\binom{n}{2}$ variables to encode the edges of the graph, $kn$ variables to encode a function from $[k]$ to $[n]$ representing the clique, and $nc$ variables to encode a function from $[n]$ to $[c]$ representing the coloring. 
 Here we choose $k=6$ and $c=5$. 
 Then the formula has  $\bigoh{n^2}$ constraints.
 The symmetries of the clique-coloring formula are the symmetries permuting the vertices of the graph. 
 
\paragraph{Counting formulas (Count).}  The \emph{counting} formula~\cite{PW85Counting}  encodes the claim that we can partition $n$ elements into sets of size $k$ each. The formula has $\binom{n}{k}$ variables, where each variable encodes that a particular set is chosen. There are $\bigoh{n^{2k-1}}$ constraints, encoding that any two chosen sets must be disjoint. In our experiments, we choose $k=3$. 
  The symmetries of the counting formula are permuting the elements.
  
\paragraph{Tseitin formulas on a grid (Tseitin Grid).} The \emph{Tseitin formulas}~\cite{Tseitin68ComplexityTranslated} consider a graph $G = (V, E)$ and a charge function $f \colon V \rightarrow \{0,1\}$. The formula has a variable for each edge, and the formula encodes that the XOR of the variables corresponding to the edges adjacent to some vertex $v$ equals the charge $f(v)$. Here we consider the Tseitin formula on an $n \times n$ grid with even charge, \ie $f(v) = 0$ for all $v \in V$.
For each cycle in the graph $G$ the Tseitin formula has a negation symmetry that flips all variables corresponding to edges in that cycle.

}{}

\bibliography{refArticles,refBooks,refOther,refLocal}

\end{document}